\newif\ifllncs\llncsfalse
\newif\ifanon\anonfalse

\ifllncs
\documentclass{llncs}
\pagestyle{plain}
\else
\documentclass[11pt,letter]{article}
\fi 

\usepackage{breakcites}
\usepackage[utf8]{inputenc}
\usepackage[T1]{fontenc}

\ifllncs
\else
\usepackage{beton}
\usepackage{fullpage}
\fi

\usepackage{amsmath}
\usepackage{amsfonts}
\usepackage{amssymb}
\usepackage{mathtools, xparse}
\usepackage[dvipsnames]{xcolor}
\usepackage{graphicx}
\usepackage{braket}
\usepackage{url}
\usepackage{bm}
\usepackage{soul}
\usepackage{multirow}
\definecolor{DarkBlue}{RGB}{0,0,150}
\definecolor{NotSoDarkBlue}{RGB}{15,15,210}
\definecolor{DarkRed}{RGB}{150,0,0}
\definecolor{DarkGreen}{RGB}{0,100,0}
\usepackage[pdfstartview=FitH,pdfpagemode=UseNone,colorlinks,linkcolor=DarkRed,filecolor=blue,citecolor=DarkRed,urlcolor=DarkRed,pagebackref,breaklinks]{hyperref}
\usepackage[ruled, algosection]{algorithm2e}

\usepackage{microtype}      
\usepackage{libertine}
\usepackage{libertinust1math}

\DeclareRobustCommand{\looongrightarrow}{%
  \DOTSB\relbar\joinrel\relbar\joinrel\relbar\joinrel\relbar\joinrel\relbar\joinrel\relbar\joinrel\relbar\joinrel\relbar\joinrel\relbar\joinrel\relbar\joinrel\relbar\joinrel\relbar\joinrel\relbar\joinrel\relbar\joinrel\rightarrow
}

\usepackage{orcidlink}

\newcommand{\CC}{\mathbb{C}}
\newcommand{\RR}{\mathbb{R}}

\newcommand{\NN}{\mathbb{N}}
\newcommand{\ZZ}{\mathbb{Z}}

\newcommand{\poly}{\mathsf{poly}}
\newcommand{\negl}{\mathsf{negl}}
\newcommand{\Bmax}{B_\mathrm{max}}

\DeclarePairedDelimiter{\floor}{\lfloor}{\rfloor}

\newcommand{\jac}[2]{\ensuremath{\left(\frac{#1}{#2}\right)}}

\allowdisplaybreaks

\DeclarePairedDelimiterX{\bigket}[1]{\bigg\lvert}{\bigg\rangle}{\,#1}
\DeclarePairedDelimiter{\norm}\lVert\rVert

\newcommand{\condparagraph}[1]{\ifllncs \subsubsection{#1} \else \paragraph{#1} \fi}

\ifllncs
\else
\usepackage{amsthm}
\newtheorem{theorem}{Theorem}

\newtheorem{lemma}[theorem]{Lemma}
\newtheorem{corollary}[theorem]{Corollary}
\newtheorem{definition}[theorem]{Definition}
\newtheorem{proposition}[theorem]{Proposition}
\newtheorem{remark}{Remark}

\numberwithin{theorem}{section}
\numberwithin{conjecture}{section}
\numberwithin{problem}{section}
\fi

\newif\ifnotes
\notesfalse

\title{The Jacobi Factoring Circuit:\\ {\large Quantum Factoring with Near-Linear Gates and Sublinear Space and Depth}}

\ifanon
    \author{}
    \date{}
\else
    \author{
    Gregory D. Kahanamoku-Meyer\thanks{Email: \texttt{gkm@mit.edu}. Supported by U.S. DoE Co-design Center for Quantum Advantage (C$^2$QA) DE-SC0012704.}\\MIT \and
    Seyoon Ragavan\thanks{Email: \texttt{sragavan@mit.edu}. Supported by NSF CNS-2154149, a Simons Investigator Award, the Defense Advanced Research Projects Agency (DARPA) under Contract No. HR0011-25-C-0300, and Amazon Research Awards. Any opinions, findings and conclusions or recommendations expressed in this material are those of the author(s) and do not necessarily reflect the views of the Defense Advanced Research Projects Agency (DARPA).}\\MIT 
    \and 
    Vinod Vaikuntanathan\thanks{Email: \texttt{vinodv@mit.edu}. Supported by NSF CNS-2154149 and a Simons Investigator Award.}\\MIT \and
    Katherine Van Kirk\thanks{Email: \texttt{kvankirk@g.harvard.edu}. Supported by the Fannie and John Hertz Foundation and an NDSEG fellowship.}\\Harvard
    }
\fi 

\ifllncs\else
\date{\today} \fi

\begin{document}
\maketitle

\begin{abstract}
    We present a compact quantum circuit for factoring a large class of integers, including some whose classical hardness is expected to be equivalent to RSA (but not including RSA integers themselves). Most notably, we factor $n$-bit integers of the form $P^2 Q$ with $\log Q = \Theta(n^a)$ for $a \in (2/3, 1)$ in space and depth sublinear in n (specifically, $\widetilde{O}(\log Q)$) using $\widetilde{O}(n)$ quantum gates; for these integers, no known classical algorithms exploit the relatively small size of $Q$ to run asymptotically faster than general-purpose factoring algorithms. To our knowledge, this is the first polynomial-time circuit to achieve sublinear qubit count for a classically-hard factoring problem.
 
    Our circuit builds on the quantum algorithm for squarefree decomposition discovered by Li, Peng, Du, and Suter (Nature Scientific Reports 2012), which relies on computing the Jacobi symbol in quantum superposition. The technical core of our contribution is a new space-efficient quantum algorithm to compute the Jacobi symbol of $A$ mod $B$, in the regime where $B$ is classical and much larger than $A$. Our circuit for computing the Jacobi symbol generalizes to related problems such as computing the greatest common divisor and modular inverses, and thus could be of independent interest.
\end{abstract}

\thispagestyle{empty}
\newpage
\thispagestyle{empty}
\tableofcontents
\newpage 
\pagenumbering{arabic}


\section{Introduction}

Shor's discovery of a polynomial-time quantum algorithm for factoring numbers~\cite{shor97} jump-started the field of quantum computation.
However, despite decades of intense research and development in quantum algorithms, quantum error correction, and quantum hardware, quantum factoring circuits remain out of reach for current devices.

The difficulty is many-fold; a primary issue is that that the quantum circuits for factoring are still rather large, in terms of gate count, space complexity, and gate depth. 
For example, Shor's algorithm seems to require quantum circuits of size $\tilde{O}(n^2)$ to factor $n$-bit numbers~\cite{shor97} (where the notation $\tilde{O}$ hides factors poly-logarithmic in $n$); a recent improvement by Regev has reduced the asymptotic gate count to $\tilde{O}(n^{3/2})$ per run ~\cite{Regev23}, but the practical cost of achieving this improved scaling seems rather large~\cite{DBLP:journals/corr/abs-2405-14381}. 
Much effort has been applied to reducing the space-complexity of these circuits, for both Shor~\cite{beckman, vedral, seifert2001using, Cop02, CleveW00, beauregard, takahashi, zalka2006shors, DBLP:conf/pqcrypto/EkeraH17, gidney2017factoring, hrs17, gidney2019, DBLP:journals/quantum/GidneyE21, kahanamokumeyer2024fast} and Regev's algorithms~\cite{rv24,ekeragartner}, achieving space as low as $\widetilde{O}(n)$ or even $O(n)$ qubits.
Particularly notable is a recent work which reduced the space cost of Shor's algorithm to $n/2 + o(n)$ qubits~\cite{chevignard_reducing_2024}.
Yet there seems to be no fundamental obstacle preventing these costs from being improved further, and indeed if we are to have any hope of factoring classically-intractable integers on quantum computers in the near- or medium-term, it will be necessary to do so.
Thus we arrive at two questions that are the focus of this paper: 
\begin{quote}
    \begin{center}
        {\em Are there quantum circuits for factoring with (near-)linear gate count?\\
        Could these be implemented with sub-linear space and depth?}
    \end{center}
\end{quote}
In a nutshell, our main contribution is to present the {\em Jacobi factoring circuit}, a quantum circuit for factoring a large class of integers for which efficient classical algorithms are not known.
Our circuit completely factors any number $N$ whose prime decomposition has distinct exponents, and finds at least one non-trivial factor if any exponent is $\geq 2$ (i.e. $N$ is divisible by the square of some prime). 
Notably, this excludes RSA composites $N=PQ$ that are a product of two primes.
We state below the special case of $N=P^2Q$ with $P$ and $Q$ prime. 
\begin{theorem}[Informal, see Corollary~\ref{corr:minimal_gates} for formal statement]\label{thm:informal}
   There is a quantum circuit that factors any $n$-bit integer $N=P^2 Q$ (with $P$ and $Q$ prime, and $Q<2^m$ for some $m$) with $\widetilde{O}(n)$ gates, $\widetilde{O}(m)$ qubits, and $\widetilde{O}(m + n/m)$ depth.
\end{theorem}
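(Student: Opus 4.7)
The plan is to instantiate the Li--Peng--Du--Suter (LPDS) quantum squarefree-decomposition framework with a new space-efficient Jacobi symbol oracle. The structural observation is that for $N = P^2 Q$ with $P, Q$ distinct odd primes, multiplicativity of the Jacobi symbol gives $\jac{a}{N} = \jac{a}{P}^2 \jac{a}{Q} = \jac{a}{Q}$ whenever $\gcd(a,P)=1$. Hence, up to a negligible ($1/P$-fraction) set of bad inputs, the map $a \mapsto \jac{a}{N}$ is a non-trivial Dirichlet character of period $Q$, and standard quantum period finding on this oracle recovers $Q$; then $P = \sqrt{N/Q}$ is computed classically.

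First, I would prepare a uniform superposition over an input register of $\widetilde{O}(m)$ qubits (a range polynomially larger than $Q<2^m$, for adequate period-finding precision), apply the Jacobi oracle into a two-qubit output register, measure that register, and QFT the input register. The QFT itself contributes only $\widetilde{O}(m)$ gates and depth, so the overall budgets of the theorem will be dominated by a single invocation of the Jacobi oracle.

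The technical heart is therefore implementing $|a\rangle|0\rangle \mapsto |a\rangle|\jac{a}{N}\rangle$ in $\widetilde{O}(m)$ space and $\widetilde{O}(m+n/m)$ depth even though $N$ has $n$ bits. I would invoke quadratic reciprocity once (together with the usual peeling of factors of $2$ from $a$) to swap the arguments, replacing $\jac{a}{N}$ by $\pm\jac{N \bmod a}{a}$ with the sign a trivial function of $a$ and $N$ modulo $8$. After this single swap every operand has $\widetilde{O}(m)$ bits and the Jacobi symbol can be finished by a reversible Euclidean-style routine on $\widetilde{O}(m)$ qubits. The main new subroutine is therefore reducing classical $n$-bit $N$ modulo superposed $m$-bit $a$. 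My plan is to chunk $N = \sum_i N_i\, 2^{im}$ into $\Theta(n/m)$ classical $m$-bit blocks, precompute $\{2^{im} \bmod a\}$ in superposition by $O(\log(n/m))$ rounds of modular squaring starting from $2^m \bmod a$ (depth $\widetilde{O}(m)$), and combine the $N_i \cdot (2^{im} \bmod a)$ terms with a balanced tree of modular additions of depth $\widetilde{O}(n/m)$. Every register stays $\widetilde{O}(m)$ bits wide, giving total space $\widetilde{O}(m)$, depth $\widetilde{O}(m+n/m)$, and gate count $\widetilde{O}(n)$ (since each of the $\Theta(n/m)$ blocks contributes $\widetilde{O}(m)$ gates).

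The main obstacle I anticipate is the reversible, shallow-depth, in-place Jacobi routine on $m$-bit numbers. A textbook Euclidean Jacobi algorithm generates a long chain of remainders that must either be kept as ancillas (breaking the space budget) or uncomputed (which naively doubles depth and complicates control flow). Meeting the $\widetilde{O}(m)$ depth target will rely on divide-and-conquer / half-GCD style techniques, which are precisely what the paper advertises as its quantum GCD / modular-inverse contribution of independent interest. A secondary, more routine concern is handling both the vanishing fraction of inputs with $\gcd(a,N) > 1$ and the standard period-finding error; these contribute only inverse-polynomial failure probability and are correctable by $\widetilde{O}(1)$-fold amplification.
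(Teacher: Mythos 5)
Your overall blueprint (LPDS-style period finding on the character $a\mapsto\jac{a}{Q}$, with the superposition shrunk to $\poly(Q)$ and a new space-efficient Jacobi oracle) is the same as the paper's, and your reduction route---compute $N\bmod a$ once, then apply reciprocity and finish on $m$-bit inputs---is a variant the paper itself mentions. However, there are two genuine gaps.

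First, the space-efficient reduction of the $n$-bit modulus, which is the heart of the theorem, does not work as you describe it. A balanced tree of modular additions over the $\Theta(n/m)$ terms $N_i\cdot(2^{im}\bmod a)$ requires all of those $m$-bit registers to exist simultaneously, i.e.\ $\Theta(n)$ qubits, and $O(\log(n/m))$ rounds of modular squaring produce only the $\log(n/m)$ powers $2^{2^jm}\bmod a$, not all $n/m$ of them; so the claimed $\widetilde{O}(m)$ space does not follow. If you instead accumulate sequentially in $O(m)$ qubits, every reduction modulo the \emph{quantum} value $a$ produces an $m$-bit quotient that must be reversibly erased, and uncomputing that garbage in sublinear space is exactly the difficulty the paper identifies with division-style approaches. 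Its main technical contribution (Algorithm~\ref{algo:match_bits_of_N}) resolves this by running the division ``backwards'': it builds, LSB-first, a multiple $kx$ of $x$ whose low $n-m$ bits agree with $N$, so the per-block control value equals $\lfloor z'/x\rfloor$ and can be uncomputed from the current state using the precomputed $x^{-1}\bmod 2^m$ and a fixed-point $1/x$; the symbol is then obtained via $\jac{x}{N}\rightarrow\jac{N}{x}\rightarrow\jac{N-kx}{x}\rightarrow\jac{(N-kx)/2^{n-m}}{x}$ without ever forming $N\bmod x$. Relatedly, the obstacle you single out as the main one---a shallow reversible half-GCD on $m$-bit inputs---is not where the difficulty lies: the paper simply makes Sch\"onhage's algorithm reversible by Bennett's technique, costing $\widetilde{O}(m)$ gates, qubits and depth, which fits the budget since depth $\widetilde{O}(m)$ is allowed.

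Second, the period-finding step is not ``standard.'' The function $a\mapsto\jac{a}{Q}$ is a quadratic character, hence many-to-one within each period: after the symbol is measured (or applied as a phase) the register holds a superposition of roughly $\varphi(Q)/2$ periodic signals, so Shor's analysis (which assumes injectivity within a period) does not apply directly, especially with the range truncated at $2^\ell=\poly(Q)$ rather than $\poly(N)$ and not a multiple of $Q$. Showing that a single Fourier sample lands near a fraction $k/Q$ with $\gcd(k,Q)=1$ with probability $\Omega(1)$ is precisely the content of Theorem~\ref{thm:highlevel} and Lemma~\ref{lemma:qftlowerbound}, and it requires the Gauss-sum lower bound of Lemma~\ref{lemma:finalgausssumestimate}; the inputs with $\gcd(a,N)>1$ are handled by a trace-distance argument (Lemma~\ref{lemma:tracedistancecoprime}). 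One could alternatively invoke Hales--Hallgren at the cost of a super-constant number of repetitions, but your sketch argues neither, so as written the correctness of this step is unsupported.
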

The space and depth complexity are sublinear when $m = \Theta(n^a)$ with $a \in (0, 1)$. Algorithms to factor numbers of this form have been extensively studied in the cryptography and computational number theory literature~\cite{Peralta1996FasterFO,DBLP:conf/crypto/BonehDH99,DBLP:conf/asiacrypt/CastagnosJLN09,DBLP:conf/eurocrypt/CastagnosL09,DBLP:series/isc/May10,DBLP:conf/ctrsa/CoronFRZ16,Harvey_2022,mulder24}; motivated in part by the fact that this problem's hardness has been used as the basis of several cryptosystems~\cite{50373,OkamotoU98,DBLP:journals/joc/PaulusT00,DBLP:conf/crypto/Takagi98,SCHMIDTSAMOA200679}. 
Roughly speaking, to classically factor $N=P^2Q$ where $Q$ is the smaller of the two numbers, we have two choices. 
Either employ a class of special-purpose factoring algorithms along the lines of Lenstra's elliptic curve method~\cite{lenstraecm}, which run in time $\exp(\tilde{O}(\sqrt{\log Q}))$; or use the fastest general-purpose factoring algorithm, namely the number field sieve~\cite{pollard,llmp,blp}, which runs in time $\exp(\tilde{O}((\log N)^{1/3}))$. Which one is faster depends on how small $Q$ is relative to $N$. As long as $\log Q = \widetilde{\Omega}((\log N)^{2/3})$, there are no known {\em classical} algorithms that exploit the special structure in $N$ to factor faster than general-purpose factoring algorithms.
We refer the reader to Section~\ref{sec:cntprelims} for more in-depth discussion on special-purpose factoring.

Before proceeding further, let us mention that the other barriers to realizing integer factorization on a quantum computer come from the concrete costs of factoring circuits; from the overhead due to quantum error-correction~\cite{DBLP:journals/quantum/GidneyE21}; and from the difficulty in building quantum hardware~\cite{martinis2019}, none of which we address in this paper. We do note, however, that the Jacobi factoring circuit, appropriately instantiated, seems friendly enough to admit a concretely small realization that factors, say, $2048$-bit integers of the form stated in Theorem~\ref{thm:informal}, although we leave an exploration of the circuit's concrete costs to future work.

The Jacobi factoring circuit builds on the remarkable, but apparently little known, work of Li, Peng, Du and Suter~\cite{LPDS12} who constructed a quantum circuit to compute the squarefree decomposition of an integer. That is, given as input a positive integer $N$, find the unique $A$ and $B$ such that $N = A^2B$ and $B$ is not divisible by the square of any integer (greater than $1$). The quantum part of the LPDS circuit computes a Jacobi symbol mod $N$, followed by a quantum Fourier transform mod $N$. 
Using the algorithm of Hales and Hallgren~\cite{DBLP:conf/focs/HalesH00}, the quantum Fourier transform mod $N$ can be computed (approximately) with near-linear size quantum circuits. We observe that using  
Sch\"{o}nhage's GCD algorithm~\cite{schonhageeuc,thull1990uni,bach1996algorithmic, Moller2008OnSA}, Jacobi symbols can be computed in near-linear time as well. Overall, this gives a near-linear size (and also near-linear space and depth) quantum circuit for squarefree decomposition which, in particular, factors numbers $N=P^2Q$ where $P,Q$ are prime numbers.

\paragraph{Our Contributions.} Building on the aforementioned work of Li, Peng, Du and Suter~\cite{LPDS12}, we show the following: 
\begin{itemize}
   
    \item Our first contribution, presented in Section~\ref{sec:highlevel}, is a new analysis of \cite{LPDS12} that is \emph{necessary} to get our final result. Jumping ahead a bit, it allows the quantum circuit to use a superposition of numbers from a potentially much smaller range, e.g. to factor $N=P^2Q$, one can use a superposition of numbers from 1 to $\poly(Q)$ rather than 1 to $\poly(N)$ as in \cite{LPDS12}. (Hence the number of qubits required for the initial superposition will be $\log (\poly(Q)) = O(\log Q)$.)
    
    \item Our second and main technical contribution, presented in Section~\ref{sec:jacobi}, is the construction of an efficient quantum circuit to compute the Jacobi symbol $\jac{A}{B}$ in near-linear size and sublinear space and depth, when $\log A \ll \log B$, and $A$ could be in superposition but $B$ is classical. In particular, our circuit achieves qubit count $\widetilde{O}(\log A)$ and gate count $\widetilde{O}(\log B)$, parallelized into depth at most $\widetilde{O}(\log B/\log A + \log A)$. Combined with our first contribution, this yields a factoring circuit for $P^2Q$ with essentially the same gate count as \cite{LPDS12} but using smaller space and depth when $\log Q \ll \log P$.
    
    We believe our circuit design is of independent interest as it can be readily adapted to solve other problems of a similar nature, e.g. computing the greatest common divisor $\gcd(A, B)$ with the same efficiency.
    
     \item Our final contribution, presented in Section~\ref{sec:fullyfactoring}, is the observation that an algorithm for squarefree decomposition suffices to completely factor a general class of integers of inverse-polynomial density, namely any integer whose prime factorization has distinct exponents.
     A similar observation is well-known in the context of factoring polynomials~\cite{yun}, and the high-level ideas are similar between the two settings.
\end{itemize}

\paragraph{On Special-Purpose Classical and Quantum Factoring.} 
Our result can be seen to complement the classical factoring algorithms, e.g.~\cite{lenstraecm,mulder24} that exploit various types of structure. The result by~\cite{LPDS12} takes a first step in this direction by showing that some integers with special structure (e.g. $N = P^2Q$ with $P, Q$ prime) become polynomially easier to factor quantumly. Our result builds on this, demonstrating that these integers can be quantumly factored even more easily (i.e. in much lower space and depth) if $\log Q \ll \log N$. We emphasize that, as long as $\log Q \geq \widetilde{\Omega}((\log N)^{2/3})$, this structure cannot be exploited by any known special-purpose classical factoring algorithms (we discuss this more in Section~\ref{sec:cntprelims}).

In contrast, prior quantum factoring algorithms~\cite{shor97,Regev23} do not seem to benefit \emph{polynomially} from any such structure. 
Remarkable algorithms by~\cite{DBLP:conf/pqcrypto/EkeraH17, chevignard_reducing_2024} that achieve constant-factor improvements over Shor's original construction depend upon the input being an RSA integer (the product of two primes of equal bit length), but no prior algorithms have better asymptotic scaling depending on the structure of the input.
A related, and important, open question is whether our algorithm can be leveraged or extended to factor integers in general (we note that it would suffice to devise a way to factor squarefree integers; see Sections~\ref{sec:highlevel} and~\ref{sec:fullyfactoring} for details).

Another important future direction is classical cryptanalysis for factoring integers of the form $P^2Q$ where $Q$ is much smaller than $P$, since this is the regime in which we get sublinear space. To the best of our knowledge, existing algorithms~\cite{lenstraecm, DBLP:conf/crypto/BonehDH99, mulder24} do not offer any significant improvements in this regime, but this is not a regime that was previously of much practical interest, so we leave further investigation along these lines to future work.

\paragraph{A More Efficient Proof of Quantumness.} Recent excitement has centered on \emph{efficiently-verifiable proofs of quantumness}, which are protocols by which a single untrusted quantum device can demonstrate its quantum capability to a skeptical polynomial-time classical verifier~\cite{brakerski_cryptographic_2021, brakerski_simpler_2020, KCVY21, yamakawa_verifiable_2022, kalai_quantum_2023, morimae_proofs_2023, alnawakhtha_lattice-based_2024, aaronson_verifiable_2024, miller_hidden-state_2024}. 
The Jacobi factoring circuit presented in this work immediately yields the first factoring-based proof of quantumness with sublinear space complexity (see Table~\ref{tab:poq_costs}).
Existing proofs of quantumness based on factoring broadly fall into two categories: factoring algorithms, which straightforwardly demonstrate their quantum capability by finding the factors; and interactive protocols, which do not actually factor the number, but instead perform a task that for any classical algorithm is provably as hard as factoring.
We address each of these in turn:
\begin{itemize}
    \item \textit{Factoring algorithms:} Shor's algorithm for factoring~\cite{shor97}, when implemented with a low-depth quantum multiplication circuit~\cite{nie_quantum_2023}, costs $\widetilde{O}(n^2)$ gates, $\widetilde{O}(n)$ qubits, and $\widetilde{O}(n)$ depth.\footnote{There also exist log-depth implementations of Shor's algorithm~\cite{CleveW00}, but they come at the cost of far worse gate and qubit counts. We include asymptotics for this circuit in Table~\ref{tab:poq_costs} for completeness.} Regev's recent improved factoring algorithm~\cite{Regev23}, together with the optimizations of~\cite{rv24} (and using the same low-depth multiplier), can be implemented in $\widetilde{O}(n^{1.5})$ gates, $\widetilde{O}(n)$ qubits, and $\widetilde{O}(n^{0.5})$ depth. 
    Note, however, that this is the gate count per run, and as $O(n^{0.5})$ runs are required, Regev does not improve total gate count across runs.
    The previously proposed Jacobi factoring circuit~\cite{LPDS12}, together with the algorithm by~\cite{schonhageeuc} for computing Jacobi symbols, uses $\widetilde{O}(n)$ gates, space, and depth.

    In contrast, if we instantiate our construction with an integer $N = P^2Q$ where $\log Q = \widetilde{\Theta}((\log N)^{2/3})$, our circuit uses $\widetilde{O}(n)$ gates, $\widetilde{O}(n^{2/3})$ depth, and $\widetilde{O}(n^{2/3})$ qubits.
    In terms of the product of qubit count with either gates or depth, this outperforms all other factoring algorithms described here.

    \item \emph{Interactive protocols} that do not factor: the relevant protocols here are those based on trapdoor claw-free functions (TCFs), specifically instantiated with Rabin's function $f(x) = x^2 \bmod{N}$ as introduced in~\cite{KCVY21}. 
    Evaluating this function requires performing just a single multiplication, and thus can be implemented with a quantum circuit of $\widetilde{O}(n)$ gates, $\mathsf{polylog}(n)$ depth, and $\widetilde{O}(n)$ qubits. 
    While the low depth is appealing, the main obstacle is the qubit count, which is outperformed substantially by the Jacobi factoring circuit. 
    Furthermore, the protocol of~\cite{KCVY21} is interactive, requiring the quantum computer to maintain coherence throughout several rounds of measurement of subsets of the qubits and communication with the verifier.
    Indeed, that protocol cannot be run in an ``offline'' setting, where a classical verifier publishes a challenge publicly and provides no further data to any particular prover.
    Interactive protocols are also somewhat less satisfying as a proof of quantum computational power, because the prover is not actually solving a computational problem---instead, interaction allows the prover show it can make measurements in anticommuting bases, which is not possible for a classical algorithm.
    There do exist TCF-based protocols which are non-interactive; their classical hardness either relies on quantum access to random oracles (\cite{brakerski_simpler_2020}, see \cite{canetti_random_2004, koblitz_random_2015} for discussion of the random oracle heuristic) or computational problems other than factoring~\cite{arabadjieva_single-round_2024}.
\end{itemize}

For completeness we note that, when the quantum circuit costs are expressed as a function of the best-known classical time cost for the same problem, our result does not asymptotically outperform certain proofs of quantumness based the hardness of problems other than factoring --- simply because the classical hardness of those problems grows much more rapidly.
Consider, for example, applying Shor's algorithm to the elliptic curve discrete logarithm problem (ECDLP)~\cite{shor97, haner_improved_2020}.
Although the standard quantum circuit to solve ECDLP requires at least linear space and depth in the size of the input --- which is worse than the circuits we present in this work --- this is outweighed by the fact that integer factorization admits sub-exponential time classical algorithms (namely $\exp(\widetilde{O}(n^{1/3}))$), while to the best of our knowledge ECDLP does not. 
Thus, if we want to work with a problem that takes time $T$ to solve classically, it would suffice to set $n = O(\log T)$ in the case of ECDLP, whereas for factoring we would need to set $n = \widetilde{O}((\log T)^3)$.
However, in practice the constant factors for factoring circuits seem to be dramatically better than those for the ECDLP problem, with the constant multiplying the leading-order term even being less than 1 in some cases~\cite{DBLP:journals/quantum/GidneyE21}.

\begin{table}
\begin{center}
    \begin{tabular}{|c|c|c|c|}
    \hline
    \multirow{2}{*}{Protocol} & \multicolumn{3}{|c|}{Cost (up to polylog factors)} \\
    \cline{2-4}
    & Gates & Depth & Qubits \\
    \hline\hline
    Shor~\cite{shor97} & $n^{2}$ & $n$ & $n$ \\
    \hline
    Log-depth Shor~\cite{CleveW00} & $n^{5}$ & $\log n$ & $n^{5}$ \\
    \hline
    Regev~\cite{Regev23, rv24} & $n^{3/2}$ & $n^{1/2}$ & $n$ \\
    \hline
    $x^2 \bmod{N}$~\cite{KCVY21}$^\dagger$ & $n$ & $\log^2 n$ & $n$ \\
    \hline
    Squarefree decomposition & \multirow{2}{*}{$n$} & \multirow{2}{*}{$n$} & \multirow{2}{*}{$n$} \\
    \cite{LPDS12, schonhageeuc} & & & \\
    \hline
    \textbf{This work} & $\bm{n}$ & $\bm{n^{2/3}}$ & $\bm{n^{2/3}}$ \\
    \hline
    \end{tabular}
    \caption{\textbf{Asymptotic cost of various proofs of quantumness based on the hardness of factoring $n$-bit integers.}
    We omit constant and poly-logarithmic factors throughout for clarity.
    For all algorithms which use black-box multiplication, we assume the use of a parallelized circuit for Sch\"onhage-Strassen multiplication~\cite{nie_quantum_2023}.
    We use $\dagger$ to denote the fact that~\cite{KCVY21} is an interactive protocol in which the quantum computer is not required to actually factor the number.}
    \label{tab:poq_costs}
\end{center}
\end{table}

\medskip
\noindent
We proceed to describe the quantum circuit of \cite{LPDS12} and then our technical contributions in more detail.

\subsection{The LPDS Circuit for Squarefree Decomposition}\label{sec:lpdsoverview}

In a beautiful work from a decade ago, Li, Peng, Du and Suter~\cite{LPDS12} showed a quantum circuit to compute the squarefree decomposition of an integer. That is, let $N=A^2B$ where $B$ is not divisible by the square of any integer (greater than 1) denote the unique squarefree decomposition of $N$. Given $N$, computing $B$ seems classically hard in general; indeed, it is at least as hard as factoring integers of the form $N=P^2Q$ where $P$ and $Q$ are primes. The squarefree decomposition problem  has received much attention from the computational number-theory community~\cite{Peralta1996FasterFO,DBLP:conf/crypto/BonehDH99,DBLP:conf/asiacrypt/CastagnosJLN09,DBLP:conf/eurocrypt/CastagnosL09,DBLP:series/isc/May10,DBLP:conf/ctrsa/CoronFRZ16,Harvey_2022,mulder24}, in part due to its applications in cryptography~\cite{50373,OkamotoU98,DBLP:journals/joc/PaulusT00,DBLP:conf/crypto/Takagi98,SCHMIDTSAMOA200679}, and it is at the core of other important problems such as computing the ring of integers of a number field~\cite{BuchmannLenstra} and the endomorphism ring of an elliptic curve over a finite field~\cite{BISSON2011815}.

The starting point of \cite{LPDS12} is the observation that when $N=P^2Q$, the Jacobi symbol of $x$ mod $N$ depends essentially\footnote{We say ``essentially'' because this is subject to the minor constraint that $x$ and $N$ need to be relatively prime.} only on $x$ mod $Q$. Indeed, if $x$ and $N$ are relatively prime, 
$$ \left(\frac{x}{N}\right) = \left(\frac{x}{P}\right)^2 \left(\frac{x}{Q}\right) = \left(\frac{x}{Q}\right)  $$
since $\left(\frac{x}{P}\right) \in \{\pm 1\}$. Thus, the Jacobi symbol of $x$ mod $N$ is periodic modulo the secret factor $Q$.

With quantum period finding in mind, this naturally suggests the following procedure: (1) start with a uniform superposition over all $x$ mod $N$; (2) compute and measure the Jacobi symbol $\jac{x}{N}$; and (3) use Shor's period-finding procedure~\cite{shor97} to recover $Q$.
The apparent obstacle is that once we apply a phase of $\jac{x}{N}$, we will end up not with one periodic signal modulo $Q$ but a \emph{superposition} of several periodic signals modulo $Q$.

One approach to circumvent this obstacle would be the following: instead of measuring one Jacobi symbol, we could measure multiple Jacobi symbols $\jac{x}{N},\jac{x+1}{N},\ldots,\jac{x+k}{N}$ for a large enough $k$ so as to (hopefully) uniquely determine the value of $x$ mod $Q$. (Intuitively, each one of these Jacobi symbols should give an ``independent'' piece of information about the value of $x$ mod $Q$, so measuring enough of them should determine $x$ mod $Q$.) It turns out that $k=\mathsf{poly}(\log Q)$ likely suffices (see the Boneh-Lipton conjecture~\cite{DBLP:conf/crypto/BonehL96,DBLP:conf/crypto/CorriganGibbsW24}). Thus measuring the function $\mathsf{manyJac}_{N,k}(x) = \left(\frac{x}{N}\right), \left(\frac{x+1}{N}\right), \ldots, \left(\frac{x+k}{N}\right)$ on the uniform superposition gives us
\begin{equation}
\label{LPDS}
\frac{1}{\sqrt{N}} \cdot \sum_{x \in [0,N-1]} \ket{x} 
\stackrel{\mbox{\small measure $\mathsf{manyJac}_{N,k}$}}{\looongrightarrow}
\frac{1}{P} \cdot \sum_{j \in [0,P^2-1]} \ket{x_0+jQ} 
\stackrel{\mbox{\small QFT mod $N$}}{\longrightarrow}
\frac{1}{\sqrt{Q}} \sum_{j \in [0, Q-1]} e^{-2\pi i x_0 j /Q} \bigket{\frac{jN}{Q}}  
\end{equation}
for some $x_0 \in [0,Q-1]$. Now, measuring gives us an integer multiple of $N/Q = P^2$ from which it is not hard to read off $P^2$ and therefore $P$.

This would, however, result in a rather large circuit: to uniquely fix $x$ mod $Q$, one would certainly need to compute at least $\Omega(\log Q)$ Jacobi symbols (and perhaps even a larger $\poly(\log Q)$~\cite{DBLP:conf/crypto/BonehL96,DBLP:conf/crypto/CorriganGibbsW24}). The key result of \cite{LPDS12} is that just computing and measuring a single Jacobi symbol already suffices (even though we would be working with a superposition of periodic signals\footnote{\cite{DBLP:conf/focs/HalesH00} provides a black-box algorithm for finding the period of ``many-to-one'' periodic functions like this, however it requires a super-constant number of calls to the Fourier sampling subroutine. The Gauss sum analysis of \cite{LPDS12} (and that of the present work) provides much better efficiency, showing that just one iteration of Fourier sampling suffices to find the period with probability $\Omega(1)$.}). Even better, instead of measuring the Jacobi symbol we can simply apply a phase equal to the Jacobi symbol $\jac{x}{N}$. The effect of this is to eliminate any amplitude placed by the post-QFT state on $\ket{0}$ (which is not helpful for factoring $N$).
Indeed, the QFT of the signal after measuring a single Jacobi symbol, namely that of $x$ mod $N$, will be very similar to the end result in Equation~\eqref{LPDS} except that each basis state will receive a sum of several amplitudes. In particular, if we apply a phase equal to the Jacobi symbol $\jac{x}{N}$\footnote{We once again assume here that the Jacobi symbol is always in $\left\{-1, 1\right\}$ even though it can occasionally also be 0; we will handle this more carefully in the relevant technical sections.} each non-zero basis state $\ket{jN/Q}$ will have absolute amplitude
$$ \approx \frac{1}{Q} \cdot \left| \sum_{x_0 \in [1, Q-1]} \jac{x_0}{Q}\exp\left(-\frac{2\pi i x_0 j}{Q}\right) \right|.$$
By standard Gauss sum bounds (see Section~\ref{sec:gausssums} and Remark~\ref{remark:gausssums} for details), the summation is {\em lower-bounded} by $\Omega(\sqrt{Q})$, and so we know that each non-zero basis state will have amplitude $\Omega(1/\sqrt{Q})$. Since there are $Q-1$ such states, a measurement will give us a non-zero multiple of $N/Q=P^2$ with a constant probability (in fact, \cite{LPDS12} shows that this probability is 1!).

We remark that \cite{LPDS12} generalizes this method to obtain the squarefree decomposition of any $N$, not necessarily of the form $P^2Q$ for prime $P, Q$. With this in mind, we now turn to an overview of our techniques.

\subsection{Technical Overview}\label{sec:techoverview}

\paragraph{Section~\ref{sec:highlevel}: A New Analysis of~\cite{LPDS12}.} Our first contribution is a more careful analysis of the circuit by~\cite{LPDS12}, wherein we show that it suffices to start with a superposition from $1$ to $\poly(Q)$ rather than all the way to $N$. At a high level, this follows from combining two previous techniques. Our starting point is the analysis by Shor~\cite{shor97} that shows that in order to find the period of a function with period $\leq Q_\mathrm{max}$, it suffices to take a superposition from $1$ to $\poly(Q_\mathrm{max})$.

The reason this does not immediately suffice for our setting is that we are not working with one periodic signal; we would be working with a superposition of periodic signals corresponding to values $x_0 \in [0, Q-1]$ grouped according to their Jacobi symbol $\jac{x_0}{Q}$. To get around this, we combine elements of Shor's analysis~\cite{shor97} with the Gauss sum analysis introduced by~\cite{LPDS12}.

\paragraph{Section~\ref{sec:jacobi}: Computing Jacobi Symbols in Sublinear Space and Depth.} The computational bottleneck in the~\cite{LPDS12} factoring circuit is computing the Jacobi symbol $\jac{x}{N}$, where $x \in [0, N-1]$ is in superposition. This can be done in gates (and hence space/depth) $\widetilde{O}(n)$~\cite{schonhageeuc, thull1990uni,bach1996algorithmic, Moller2008OnSA}, where $n$ is the number of bits in $N$, which is a near-linear gate complexity and hence essentially tight. 

However, thanks to our first contribution, we need only compute $\jac{x}{N}$ for $x \leq \poly(Q)$; moreover, since $N$ is classically known, there is the tantalizing possibility that the number of qubits could be pushed down to linear in $\log Q$ rather than $\log N$.
We show that this is indeed the case (up to polylogarithmic factors), by constructing a quantum circuit that computes the Jacobi symbol using space $\widetilde{O}(m)$ qubits for any $m \geq \log Q$.
Our circuit is also very efficient, achieving gate count $\widetilde{O}(n)$, parallelized into depth $\widetilde{O}(n/m + m)$.\footnote{If $\log Q < O((\log N)^{1/2})$, the depth and space cannot \textit{both} be made to scale with $\log Q$ simultaneously, because the space-time volume (space times depth) is lower bounded by the gate count, and the gate count is lower bounded by $O(\log N)$. However, the parameters can be tuned to achieve a continuous tradeoff between the two, while maintaining a space-time product nearly linear in $\log N$.}

To explain our methods, let us revisit some well-known algorithms for computing the Jacobi symbol $\jac{x}{N}$. 
These algorithms also provide algorithms for computing GCDs and vice versa:
\begin{itemize}
    \item The binary GCD algorithm~\cite{bach1996algorithmic} is often used in quantum algorithms due to its circuit-friendliness \cite{roetteler17}. However, this will not be useful for our goals; the number of gates needed to compute $\jac{x}{N}$ is $O((\log x + \log N)^2)$~\cite{bach1996algorithmic}, which is quadratic in $\log N$ (rather than near-linear).
    \item The extended Euclidean algorithm relies on the observation that the Jacobi symbol $\jac{a}{b}$ is equal to $\jac{a \bmod{b}}{b}$, which together with quadratic reciprocity (property 7 of Theorem~\ref{thm:jacobiproperties}) allows one to rapidly reduce the size of the problem's inputs. Indeed, after just one step, the problem is reduced to the computation of the Jacobi symbol of two inputs of length $O(\log x)$. Nevertheless, due to that first step, this algorithm seems to require $\widetilde{O}(\log N)$ qubits.
    \item Finally, there is an algorithm due to Sch{\"o}nhage~\cite{schonhageeuc,thull1990uni, bach1996algorithmic, Moller2008OnSA} that runs in $\widetilde{O}(\log N)$ gates, but does not come with any better guarantees on the space and depth.
\end{itemize}
We take an approach that, at a very high level, mimicks the extended Euclidean algorithm:
\begin{enumerate}
    \item\label{item:techoverviewstep1} First, we reduce the computation of $\jac{x}{N}$ to some Jacobi computation $\jac{a}{b}$ between two inputs $a, b$ of length $O(\log x)$.
    \item\label{item:techoverviewstep2} We then compute $\jac{a}{b}$ using Sch{\"o}nhage's algorithm out-of-the-box, which only requires gates (and hence space/depth) $\widetilde{O}(\log x)$.
\end{enumerate}
The challenge, and room for creativity, is in implementing step~\ref{item:techoverviewstep1}.
To do this, we find a multiple $kx$ of $x$ such that both of the following are true: (a) $N - kx$ is divisible by $2^{n-m}$; and (b) $kx < 2^n$. 
For now one should consider $m = \lceil \log x \rceil$; in some cases, one may choose $m \geq \lceil \log x \rceil$ to improve efficiency, as we discuss later. 
To be explicit, this allows us to compute the Jacobi symbol via the following chain of transformations:
$$\jac{x}{N} \rightarrow \jac{N}{x} \rightarrow \jac{N-kx}{x} \rightarrow \jac{(N - kx)/2^{n-m}}{x},$$
where each of the arrows follows from standard properties of the Jacobi symbol stated in Theorem~\ref{thm:jacobiproperties}, and the Jacobi symbol of the last expression is computed directly (step~\ref{item:techoverviewstep2} above).

A conceptually simpler but concretely less direct and efficient variant\footnote{We thank Daniel J. Bernstein for suggesting this perspective on our algorithm.} of our algorithm essentially follows the ``extended Euclidean'' blueprint: we use $kx$ to compute $N \bmod{x} = \left[\left(\frac{N-kx}{2^{n-m}}\right) \cdot \left(2^{n-m} \bmod{x}\right)\right] \bmod{x}$. Then we can transform $\jac{N}{x} \rightarrow \jac{N \bmod{x}}{x}$ and finish from there using~\cite{schonhageeuc}. In practice, this is unnecessary extra computation, so we proceed using our more specialized blueprint and do not bother with computing $N \bmod{x}$. One could of course attempt to directly compute $N\bmod{x}$ as usual using long division, but it is not clear how to do this reversibly in low space and depth.

The key idea behind our quantum circuit for step~\ref{item:techoverviewstep1} is to stream through the (classical) bits of $N$ in blocks of size $m$ starting with the lowest-order bits, matching each block of $kx$ to the corresponding block of $N$ (such that the difference $N-kx$ has trailing zeros).
Sublinear quantum space is achieved via the observation that only the leading $O(m)$ bits of the running sum $kx$ need to be stored quantumly at any given time, as all of the lower-order bits match the classical bits of $N$ by design.
Sublinear depth follows from the fact that the number of blocks is $O(n/m)$, and the desired operations on each block can be performed in a constant number of multiplications of depth $O(\operatorname{polylog}(m))$~\cite{nie_quantum_2023, schonhage1971fast}.

Our algorithm builds on Montgomery reduction~\cite{Montgomery1985} and the binary GCD algorithm and can be thought of as a ``reversed'' variant of long division; long division starts from the most significant bit (MSB) and iterates towards the least significant bit (LSB) to find a multiple of $x$ that agrees with $N$ in the MSBs, whereas we start from the LSB and iterate towards the MSB to find a multiple of $x$ that agrees with $N$ in the LSBs. The benefit of proceeding in this reversed way is that uncomputing intermediate states now becomes easy, by way of a simple comparison that only depends on the MSBs of our intermediate state.

\paragraph{Section~\ref{sec:fullyfactoring}: Completely Factoring Special Integers.} Finally, we present a black-box reduction implying that any algorithm for squarefree decomposition can be used to \emph{completely factor} integers $N$ with distinct exponents in their prime factorization; i.e. $N$ that can be written as $p_1^{\alpha_1} \ldots p_r^{\alpha_r}$ for distinct primes $p_1, \ldots, p_r$ and distinct positive exponents $\alpha_1, \ldots, \alpha_r$. Such integers have been studied before and are referred to as \emph{special integers}~\cite{Akta2017OnTN} --- in fact, these have even been proposed for use in cryptographic applications~\cite{DBLP:journals/iacr/Schanck18}.

A similar observation is well-known in the context of factoring polynomials; the motivation for this is that if a univariate polynomial $f$ is divisible by the square of some (non-constant) polynomial, this can be detected easily by taking the GCD of $f$ and its derivative $f'$. Yun~\cite{yun} shows that this can be extended to decompose any polynomial $f(x)$ into a factorization $f(x) = g_1(x)^{\alpha_1}\ldots g_r(x)^{\alpha_r}$ where the $g_i$'s are squarefree and pairwise coprime and the $\alpha_i$'s are distinct. Many polynomial factorization algorithms thus begin with this step as a subroutine.

Unsurprisingly, the algorithm in our setting bears some high-level similarity to Yun's algorithm~\cite{yun} and is based on a simple idea: using the algorithm for squarefree decomposition, we can recover 
$$B = \prod_{i \in [r]: \alpha_i\text{ odd}} p_i.$$
Then let $i^* \in [r]$ be the index such that $\alpha_{i^*}$ is the smallest of the odd $\alpha_i$. 
(If the $\alpha_i$ are all even, then $N$ will be a perfect square and we can take its square root until at least one $\alpha_i$ is odd.)
By dividing $N$ by $B$ as many times as possible, we obtain:
$$k = \prod_{i \in [r]: \alpha_i\text{ even}} p_i^{\alpha_i} \cdot \prod_{i \in [r]: \alpha_i\text{ odd}} p_i^{\alpha_i - \alpha_{i^*}}.$$
Now, because the $\alpha_i$ are all distinct, $k$ will be divisible by every prime dividing $B$ except $p_{i^*}$. Thus we can compute $B/\gcd(k, B) = p_{i^*}$, which is a prime divisor of $N$. We can now divide as many factors of $p_{i^*}$ from $N$ as possible, then recurse.

\section{Preliminaries}

\subsection{Notation}

Let $N < 2^n$ be an $n$-bit number that we wish to factor. We use $\negl(n)$ to denote any real-valued function $f(n)$ such that $|f(n)| = o(n^{-c})$ for all constants $c > 0$.
For any positive integer $k$, let $\varphi(k)$ denote the number of positive integers in $[1, k]$ that are relatively prime to $k$. We will sometimes use the notation $A \mid B$ to indicate that the integer $A$ divides the integer $B$.

We say that an integer $B$ is \emph{squarefree} if it is not divisible by any square (other than 1). Observe that any $N$ has a unique representation of the form $A^2B$ for some squarefree $B$; indeed, if $N$ has prime factorization $\prod_{i = 1}^r p_i^{a_i}$, then we must have $B = \prod_{i = 1}^r p_i^{a_i \bmod{2}}$ and $A = \prod_{i = 1}^r p_i^{\lfloor a_i/2 \rfloor}$. When $A > 1$, we say that $N$ is \emph{squarefull}.

Throughout this paper, $\log$ will denote the base-$2$ logarithm. We use $\ZZ_N$ to denote the ring of integers mod $N$, and $\ZZ_N^*$ to denote the multiplicative group of invertible elements mod $N$. We will also use the following straightforward claim:

\begin{proposition}\label{prop:countingmodsols}
    Let $M, B$ be positive integers with $B > 1$ and let $j \in [0, B-1]$ be an integer. Then the number of integers $x \in [1, M]$ such that $x \equiv j \pmod{B}$ is exactly $$\left\lfloor \frac{M-j}{B}\right\rfloor - \left\lceil \frac{1-j}{B}\right\rceil + 1.$$
\end{proposition}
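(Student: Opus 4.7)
The plan is to reduce the problem to counting integers in an interval, which is a standard computation. I would begin by parameterizing the solution set: any integer $x$ with $x \equiv j \pmod{B}$ can be written uniquely as $x = j + kB$ for some $k \in \ZZ$. Substituting this into the constraint $1 \leq x \leq M$ yields the equivalent condition $\frac{1-j}{B} \leq k \leq \frac{M-j}{B}$, so the quantity of interest is exactly the number of integers $k$ in this real interval.

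Next, I would invoke the elementary fact that for any real numbers $\alpha \leq \beta$, the number of integers $k$ with $\alpha \leq k \leq \beta$ equals $\lfloor \beta \rfloor - \lceil \alpha \rceil + 1$ (whenever $\lceil \alpha \rceil \leq \lfloor \beta \rfloor$). Applying this with $\alpha = (1-j)/B$ and $\beta = (M-j)/B$ gives the claimed formula.

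The only subtlety to address is that the formula should also be correct (or at least non-negative) in boundary cases where no such $x$ exists. I would briefly verify that when the set is empty, the stated expression evaluates to $0$: since $j \in [0, B-1]$ and $B > 1$, one has $\lceil (1-j)/B \rceil \in \{0, 1\}$ (equal to $1$ precisely when $j = 0$, and $0$ otherwise), and a direct check against small cases (e.g., $M < j$ when $j \geq 1$, or $M < B$ when $j = 0$) confirms consistency with the formula. This edge-case verification is the only step requiring any care; the main argument is a one-line bijection.
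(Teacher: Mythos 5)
Your proposal is correct and follows essentially the same route as the paper: parameterize the solutions as $x = j + kB$, translate $1 \leq x \leq M$ into the interval $\left[\left\lceil \frac{1-j}{B}\right\rceil, \left\lfloor \frac{M-j}{B}\right\rfloor\right]$ for $k$, and count integers in it. Your edge-case check is harmless but not needed, since the counting formula $\lfloor \beta \rfloor - \lceil \alpha \rceil + 1$ is already valid (giving $0$) whenever $\alpha \leq \beta$, which holds here because $M \geq 1$.
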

\begin{proof}
    Writing $x = By + j$, we wish to find the number of integers $y$ (not necessarily positive) such that:
    \begin{align*}
        By + j &\in [1, M] \\
        \Leftrightarrow y &\in \left[\left\lceil \frac{1-j}{B}\right\rceil, \left\lfloor \frac{M-j}{B}\right\rfloor\right].
    \end{align*}
    The conclusion now follows.
\end{proof}

\subsection{The Jacobi Symbol}

Here, we define the Jacobi symbol and state its relevant properties for our purposes. We follow the exposition in~\cite[Chapter 5]{bach1996algorithmic}. The Legendre symbol is a well-known special case of the Jacobi symbol and our starting point:

\begin{definition}[Legendre symbol]
    For an integer $a$ and an odd prime $p$, define the Legendre symbol $\left(\frac{a}{p}\right)$ as follows:
    $$\left(\frac{a}{p}\right) = \begin{cases}
0,\text{ if $p$ divides $a$;}\\
1,\text{ if $a$ is a (nonzero) quadratic residue mod $p$;}\\
-1,\text{ otherwise.}
\end{cases}$$
\end{definition}

\noindent
The Jacobi symbol is most naturally defined in terms of the Legendre symbol:

\begin{definition}[Jacobi symbol]\label{def:jacobi}
    Let $a$ be an integer and $b$ an odd positive integer with factorization $b = p_1^{e_1}\ldots p_k^{e_k}$. Then define the Jacobi symbol $\left(\frac{a}{b}\right)$ as follows:
    $$\left(\frac{a}{b}\right) = \left(\frac{a}{p_1}\right)^{e_1}\left(\frac{a}{p_2}\right)^{e_2}\ldots\left(\frac{a}{p_k}\right)^{e_k}.$$
    At various points, we will use the notation $j_b(a) \in \left\{-1, 1\right\}$ to denote a value that is guaranteed to be $\jac{a}{b}$ when $\gcd(a, b) = 1$ and can be arbitrary otherwise (since as we will see below, in this case the Jacobi symbol would be 0).
\end{definition}

The Jacobi symbol $\left(\frac{a}{b}\right)$ can be computed efficiently without knowing the factorization of $b$, via the following properties.
They can be applied, for example, in the same manner as the extended Euclidean algorithm for the greatest common divisor; we will discuss quantum circuits for computing the Jacobi symbol in detail in Section \ref{sec:jacobi}.

\begin{theorem}[Jacobi symbol properties]\label{thm:jacobiproperties}
    The Jacobi symbol has the following properties. (Recall that $\jac{a}{n}$ is only defined when $n$ is an odd positive integer --- although $a$ could be even; thus in all of the below, it is assumed that $m, n$ are both odd and positive.)
    \begin{enumerate}
        \item\label{item:jacprop1} If $\gcd(a, n) > 1$, then $\jac{a}{n} = 0$. Otherwise, $\jac{a}{n} \in \left\{-1, 1\right\}$.
        \item\label{item:jacprop2} $\jac{ab}{n} = \jac{a}{n}\jac{b}{n}$;
        \item\label{item:jacprop3} $\jac{a}{mn} = \jac{a}{m} \jac{a}{n}$;
        \item\label{item:jacprop4} $\jac{a}{n} = \jac{b}{n}$ if $a \equiv b \bmod{n}$;
        \item\label{item:jacprop5} $\jac{-1}{n} = (-1)^{(n-1)/2}$;
        \item\label{item:jacprop6} $\jac{2}{n} = (-1)^{(n^2-1)/8}$
        \item\label{item:jacprop7} (Quadratic Reciprocity) If $\gcd(a, n) = 1$ and $a$ is odd and positive, then $$\jac{a}{n} \jac{n}{a} = (-1)^{(a-1)(n-1)/4}.$$
        Consequently, whenever $a$ is odd and positive (perhaps having common factors with $n$), we will have $$\jac{a}{n} = (-1)^{(a-1)(n-1)/4} \jac{n}{a}.$$(In the case that $a, n$ have common factors, both sides will be 0.)
    \end{enumerate}
\end{theorem}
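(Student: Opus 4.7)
The plan is to reduce every property to the corresponding statement about the Legendre symbol (i.e., the prime case), and then extend multiplicatively using Definition~\ref{def:jacobi}. Properties~\ref{item:jacprop1}--\ref{item:jacprop4} are essentially immediate from the definition. For property~\ref{item:jacprop1}, $\gcd(a,n)>1$ forces $a$ to share some prime factor $p_i$ with $n$, making the factor $\jac{a}{p_i}$ zero; otherwise every Legendre factor lies in $\{-1,1\}$. Property~\ref{item:jacprop2} follows from multiplicativity of the Legendre symbol in the numerator (which is itself a consequence of Euler's criterion). Property~\ref{item:jacprop3} is obtained by concatenating the prime factorizations of $m$ and $n$. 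Property~\ref{item:jacprop4} follows since $a\equiv b\pmod n$ implies $a\equiv b\pmod{p_i}$ for every prime $p_i\mid n$, and Legendre symbols depend only on the numerator modulo $p_i$.

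For properties~\ref{item:jacprop5} and~\ref{item:jacprop6}, I would first invoke the standard facts $\jac{-1}{p}=(-1)^{(p-1)/2}$ and $\jac{2}{p}=(-1)^{(p^2-1)/8}$ for odd primes $p$ (Euler's criterion and the supplementary law, both in Bach--Shallit). Extending to general odd $n=\prod p_i^{e_i}$ then requires two short exponent-tracking lemmas for odd positive integers $m,n$:
\[
\tfrac{mn-1}{2}\equiv \tfrac{m-1}{2}+\tfrac{n-1}{2}\pmod 2,\qquad \tfrac{(mn)^2-1}{8}\equiv \tfrac{m^2-1}{8}+\tfrac{n^2-1}{8}\pmod 2.
\]
The first follows from $mn-1=(m-1)(n-1)+(m-1)+(n-1)$ together with $(m-1)(n-1)\equiv 0\pmod 4$; the second follows from $m^2\equiv 1\pmod 8$ for odd $m$, which gives $(m^2-1)(n^2-1)\equiv 0\pmod{64}$. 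Iterating each lemma across the prime factorization turns the product of prime-level exponents into the desired global exponent, yielding properties~\ref{item:jacprop5} and~\ref{item:jacprop6}.

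Property~\ref{item:jacprop7} is the quadratic reciprocity law for the Jacobi symbol. Assuming Gauss's theorem at the prime level, $\jac{p}{q}\jac{q}{p}=(-1)^{(p-1)(q-1)/4}$ for distinct odd primes $p,q$, I write $a=\prod_j q_j^{f_j}$ and $n=\prod_i p_i^{e_i}$ and expand
\[
\jac{a}{n}\jac{n}{a}=\prod_{i,j}\bigl(\jac{q_j}{p_i}\jac{p_i}{q_j}\bigr)^{e_i f_j}=(-1)^{\sum_{i,j}e_i f_j (p_i-1)(q_j-1)/4}.
\]
(The hypothesis $\gcd(a,n)=1$ ensures no $q_j$ coincides with any $p_i$, so Gauss's law applies to each factor.) It then remains to show $\sum_{i,j}e_i f_j (p_i-1)(q_j-1)/4\equiv (a-1)(n-1)/4\pmod 2$, which factors as $\bigl(\sum_i e_i (p_i-1)/2\bigr)\bigl(\sum_j f_j (q_j-1)/2\bigr)$ and reduces mod $2$ to $\tfrac{n-1}{2}\cdot\tfrac{a-1}{2}$ by the first exponent lemma above. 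The ``consequently'' clause when $\gcd(a,n)>1$ is handled separately: both sides are zero by property~\ref{item:jacprop1}, so the identity holds trivially.

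The main obstacle is not any single step but the bookkeeping of the exponent-mod-$2$ identities, since a sign error there invalidates both the supplementary laws and reciprocity. The rest is bookkeeping of the prime factorization, which is mechanical once those lemmas are in hand.
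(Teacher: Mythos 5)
Your proposal is correct and is exactly the standard argument: the paper does not prove Theorem~\ref{thm:jacobiproperties} itself but cites it from the standard exposition (Bach--Shallit, Chapter 5), and your reduction to the Legendre case via multiplicativity, together with the two exponent-mod-$2$ lemmas for the supplementary laws and reciprocity, is the proof given there. The only trivial point left implicit (handled by the empty-product convention) is the case $a=1$ in the ``consequently'' clause, where $\jac{n}{1}=1$ and the exponent vanishes.
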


\begin{corollary}\label{cor:jacobicount}
    Let $\sigma \in \left\{-1, 1\right\}$. If $N$ is odd and not a square, the number of integers $a \in [1, N]$ such that $\jac{a}{N} = \sigma$ is $\varphi(N)/2$.
\end{corollary}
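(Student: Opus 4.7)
The natural approach is to exploit the multiplicativity of the Jacobi symbol in its top argument. Specifically, by property~\ref{item:jacprop2} of Theorem~\ref{thm:jacobiproperties}, the map $\chi: \mathbb{Z}_N^* \to \{-1, 1\}$ defined by $\chi(a) = \jac{a}{N}$ is a group homomorphism. Meanwhile, by property~\ref{item:jacprop1}, every $a \in [1, N]$ that is not coprime to $N$ satisfies $\jac{a}{N} = 0$, and therefore does not contribute to either count. Hence it suffices to show that the preimages $\chi^{-1}(1)$ and $\chi^{-1}(-1)$ each have size exactly $\varphi(N)/2$.

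For any group homomorphism into $\{-1, 1\}$, the kernel has index $1$ or $2$, so the two preimages are either $\varphi(N)$ and $0$, or $\varphi(N)/2$ and $\varphi(N)/2$. Thus the whole problem reduces to showing that $\chi$ is nontrivial, i.e., that there exists some $a \in \mathbb{Z}_N^*$ with $\jac{a}{N} = -1$. This is the only non-routine step.

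To produce such an $a$, I would use the factorization $N = p_1^{e_1} \cdots p_k^{e_k}$ and the hypothesis that $N$ is not a square. The latter guarantees that some exponent $e_{i^*}$ is odd. Since $p_{i^*}$ is an odd prime, there exists a quadratic non-residue $r$ modulo $p_{i^*}$, i.e., $\jac{r}{p_{i^*}} = -1$. By the Chinese Remainder Theorem, I can then choose an integer $a \in [1, N]$ that is congruent to $r$ modulo $p_{i^*}$ and congruent to $1$ modulo $p_j$ for all $j \neq i^*$. In particular $\gcd(a, N) = 1$, and unpacking Definition~\ref{def:jacobi} yields
\[
\jac{a}{N} = \prod_{j=1}^{k} \jac{a}{p_j}^{e_j} = \jac{r}{p_{i^*}}^{e_{i^*}} \cdot \prod_{j \neq i^*} 1^{e_j} = (-1)^{e_{i^*}} = -1,
\]
using that $e_{i^*}$ is odd. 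This gives the required witness and completes the plan: the homomorphism $\chi$ is nontrivial, so each of its two fibers in $\mathbb{Z}_N^*$ has size $\varphi(N)/2$, matching the claimed count.
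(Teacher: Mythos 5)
Your proposal is correct and follows essentially the same route as the paper: both construct a witness $a$ with $\jac{a}{N}=-1$ by combining a quadratic non-residue modulo the prime with odd exponent with residue $1$ at the other primes via CRT, and then use multiplicativity to split $\ZZ_N^*$ evenly. The only cosmetic difference is that you phrase the final counting as the index-$2$ kernel of a homomorphism, whereas the paper exhibits the explicit bijection $x \mapsto xb \bmod N$ between the two fibers; these are the same argument.
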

\begin{proof}
    Since $N$ is not a square, there exists a prime $p_0$ and odd integer $e_0$ such that $p_0^{e_0}$ divides $N$ but $p_0^{e_0+1}$ does not. Let the other primes dividing $N$ be $p_1, p_2, \ldots, p_k$. Then let $b \in [1, N]$ be such that $b$ is a quadratic non-residue mod $p_0$, and $b \equiv 1 \bmod{p_i}$ for all $i > 0$. 
    Such $b$ exists by the Chinese Remainder Theorem, and moreover we have for some exponents $e_0, \ldots, e_k$ that $\jac{b}{N} = \jac{b}{p_0}^{e_0} \cdot \prod_{i = 1}^k \jac{b}{p_i}^{e_i} = (-1)^{e_0} = -1$.

    Then by property~\ref{item:jacprop2} of Theorem~\ref{thm:jacobiproperties}, $a \mapsto ab\bmod{N}$ provides a bijection between the elements of $\left\{x \in \ZZ_N^*: \jac{x}{N} = 1\right\}$ and $\left\{x \in \ZZ_N^*: \jac{x}{N} = -1\right\}$. It follows that these two sets have equal cardinality. They are disjoint, and in total they comprise $\varphi(N)$ elements by property~\ref{item:jacprop1} of Theorem~\ref{thm:jacobiproperties}, so the conclusion follows.
\end{proof}

\subsection{Computational Number Theory}\label{sec:cntprelims}

We survey the classical and quantum complexity of various computational number theory problems that are relevant to this work. Recall the well-known result that if we have a classical circuit that uses $G$ gates to compute a function $f(x)$ of an input $x$, we can implement unitary computing $\ket{x}\ket{0} \mapsto \ket{x} \ket{f(x)}$ in $O(G)$ gates and $O(G)$ ancilla qubits~\cite{bennett_logical_1973}.

\paragraph{Arithmetic Operations.} The fastest known classical circuits for $n$-bit integer multiplication use $O(n \log n)$ gates~\cite{Harvey21}, and can be made quantum through standard reversibility techniques~\cite{bennett_logical_1973, bennett_timespace_1989, levine_note_1990}. 
If space is a concern, one can use the multiplier due to~\cite{kahanamokumeyer2024fast} which uses no ancilla qubits (i.e. it operates entirely in-place on the input and output registers), and has $O_\epsilon(n^{1+\epsilon})$ gates for any pre-specified $\epsilon > 0$.
The depth of multiplication can be reduced to $O(\log^2 n)$ with the use of $\widetilde{O}(n)$ ancilla qubits, via a parallel quantum circuit for the Sch\"onhage-Strassen algorithm which has gate count $\widetilde{O}(n)$~\cite{nie_quantum_2023, schonhage1971fast}.
Via Newton iteration it is possible to perform division with the same complexity as multiplication (up to constant factors in the gate count and space, and a logarithmic factor in the depth)~\cite{knuth_art_1998}.

\paragraph{Algorithms for Computing Jacobi Symbols and GCDs.} The best-known algorithms for computing Jacobi symbols and GCDs of two $n$-bit integers are:
\begin{itemize}
    \item The extended Euclidean algorithm, which can be classically done in $O(n^2)$ gates~\cite{bach1996algorithmic}. Moreover, this can be done quantumly in $O(n^2)$ gates while also keeping the space down to $O(n)$ qubits~\cite{ProosZalka03}.
    \item The binary GCD algorithm, which has the same asymptotic complexities as extended Euclidean both classically and quantumly~\cite{bach1996algorithmic, ProosZalka03}.
\end{itemize}
However, there is a faster divide-and-conquer algorithm which was conceived by Sch\"{o}nhage~\cite{schonhageeuc} and~\cite[solution to exercise 5.52]{bach1996algorithmic}, and expounded Thull and Yap~\cite{thull1990uni} and M{\"o}ller~\cite{Moller2008OnSA}. This algorithm runs in $\widetilde{O}(n)$ gates\footnote{While these algorithms are usually formulated in the Turing machine model, they can be readily transformed into circuits at the expense of multiplicative $\mathsf{polylog}(n)$ overheads~\cite{DBLP:journals/jacm/PippengerF79}.} (and hence at most that much space and depth).

\paragraph{Classical Algorithms for Factoring.} The best-known classical algorithm for factoring arbitrary $n$-bit integers is the general number field sieve~\cite{pollard, llmp, blp}, which runs in heuristic time $$\exp\left(O(n^{1/3} (\log n)^{2/3})\right).$$
However, there has also been extensive research towards generating faster classical algorithms, which exploit specific number theoretic structure present in the integer $N < 2^n$ being factored. For example, if $p < 2^{m'}$ is the smallest divisor of $N$, Lenstra's elliptic curve method~\cite{lenstraecm} recovers $p$ in heuristic time:
$$\poly(n) \cdot \exp\left(O\left(\sqrt{m' \log m'}\right)\right).$$
Furthermore, Mulder~\cite{mulder24} presents a classical algorithm specifically for squarefree decomposition, targeting the same structure as we do with the quantum algorithm in the present work.
If $N = A^2B$ with $B < 2^m$ squarefree, then Mulder's algorithm can recover $A, B$ in heuristic time\footnote{It may initially seem that this is subsumed by Lenstra's elliptic curve method~\cite{lenstraecm}. However, the constant hidden in the big $O$ is different between the two algorithms: for~\cite{lenstraecm} it is $\sqrt{2}$, while for~\cite{mulder24} it is $1$.} $$\poly(n) \cdot \exp\left(O\left(\sqrt{m \log m}\right)\right).$$
Thus, for $n$-bit integers of the form $N = p^2 q$ with $p, q$ prime and $\log q = \widetilde{O}(n^{2/3})$, all known classical algorithms for factoring $N$ require heuristic time $$\exp\left(\widetilde{O}(n^{1/3})\right).$$
Since we consider algorithms for factorizing integers of the form $p^rq$ in the less-studied regime where $q$ is small, we note that there has been extensive classical lattice-based cryptanalysis for factoring integers of this form~\cite{DBLP:conf/crypto/BonehDH99, DBLP:conf/ctrsa/CoronFRZ16} with other special constraints. Specifically, we can factor integers of the form $p^rq^s$ in polynomial time provided that $\max(r, s) \geq \mathsf{poly}(\log p)$. We do not believe these algorithms extend to our setting; roughly, these algorithms seem to be effective in factoring $N$ when $N$ has a prime factor $p$ such that (a) $N$ is divisible by a relatively large power of $p$; and (b) $p$ is still not too small relative to $N$. In our setting where $N = p^2q$ and $q$ is very small, neither of these is the case. Regardless, we emphasize that there has been little classical cryptanalysis for factoring integers of the specific form we consider (beyond Mulder's aforementioned algorithm~\cite{mulder24}), and we leave this important direction to future work.

Finally, we remark that it is well-known that completely factoring an $n$-bit integer $N$ in the special case where $N$ is a prime power can be done in classical $\poly(n)$ time. This is because of two straightforward facts: (a) we can find integers $A, k > 1$ such that $N = A^k$ if they exist by computing $N^{1/k}$ for all possible values of $k$ (and $k_\mathrm{max} < \log N$ since $N = A^k \geq 2^k$); and (b) we can efficiently test whether $A$ is prime~\cite{Agrawal2004}.

\paragraph{Quantum Algorithms for Factoring.} 
Shor's algorithm~\cite{shor97} was the first to show that arbitrary $n$-bit integers could be factored using quantum circuits of size $\widetilde{O}(n^2)$.
However, Shor's algorithm does not benefit if $N$ has a small prime divisor or small squarefree part, like the classical algorithms by~\cite{lenstraecm, mulder24} do. 
The same holds for Regev's~\cite{Regev23} improvement on Shor's algorithm to use $\widetilde{O}(n^{3/2})$ gates.
A work by Eker{\aa} and H{\aa}stad~\cite{DBLP:conf/pqcrypto/EkeraH17}, later built upon by Chevignard et al.~\cite{chevignard_reducing_2024}, achieved a constant factor improvement in circuit costs that holds specifically for RSA integers (the product of two primes of roughly the same size), but the asymptotic scaling of the algorithms was unchanged.

To the best of our knowledge, the only polynomial-time\footnote{An alternative approach to factoring with quantum computers is to use quantum subroutines (e.g. Grover search) inside classical factoring algorithms. The benefits this yields, which can include sublinear qubit count and gains from special structure in $N$, come at the expense of \emph{superpolynomial} gate count and depth~\cite{bernstein_post-quantum_2017, bernstein_low-resource_2017, mosca_speeding_2020}.} quantum factoring circuit that benefits by more than constant factors from special structure in $N$ is the aforementioned Jacobi factoring circuit by~\cite{LPDS12}, combined with the near-linear time algorithms for computing Jacobi symbols~\cite{schonhageeuc, thull1990uni,bach1996algorithmic, Moller2008OnSA}. Putting these constructions together yields a circuit of only $\widetilde{O}(n)$ gates and space for finding a factor of $N$ when $N$ is not squarefree. Viewed in this context, one of our contributions is showing that we can further drive down the space and depth of this circuit when the squarefree part $B$ of $N = A^2B$ is much smaller than $N$.

\subsection{Sums of Phases}

\subsubsection{Basic Lemmas}

\begin{lemma}\label{lemma:phasegeometric}
    For any $x \in \RR \setminus \ZZ$ and positive integer $M$, we have:
    $$\sum_{k = 0}^{M-1} \exp\left(-2\pi i kx\right) = \frac{1 - \exp(-2\pi i xM)}{1 - \exp(-2\pi i x)}.$$
\end{lemma}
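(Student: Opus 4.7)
The plan is to recognize the sum as a finite geometric series with common ratio $r = \exp(-2\pi i x)$ and apply the standard closed-form identity $\sum_{k=0}^{M-1} r^k = \frac{1 - r^M}{1 - r}$. First I would note that the $k$-th term equals $r^k$ since $\exp(-2\pi i k x) = (\exp(-2\pi i x))^k$. Then I would invoke the geometric series formula, which is valid provided $r \neq 1$.

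The only substantive check is therefore that the denominator $1 - \exp(-2\pi i x)$ is nonzero, i.e.\ that $\exp(-2\pi i x) \neq 1$. This is exactly where the hypothesis $x \in \RR \setminus \ZZ$ enters: $\exp(-2\pi i x) = 1$ if and only if $-2\pi x \in 2\pi \ZZ$, equivalently $x \in \ZZ$, which is excluded. So the formula applies and the identity follows immediately.

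There is no real obstacle here; the only care required is making sure to invoke the nonintegrality hypothesis to justify dividing by $1 - \exp(-2\pi i x)$. The proof is essentially a one-liner once the finite geometric series identity is cited, so I would not bother re-deriving it (e.g.\ by multiplying both sides by $1 - r$ and telescoping) unless the paper prefers self-contained arguments, in which case a single line of telescoping would suffice.
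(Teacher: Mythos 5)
Your proposal is correct and matches the paper's proof exactly: the paper also just invokes the finite geometric series formula and notes that the hypothesis $x \notin \ZZ$ is there precisely to keep the denominator nonzero. Nothing further is needed.
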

\begin{proof}
    This is just the summation formula for a geometric series. We require $x \notin \ZZ$ so that the denominator of the RHS is non-zero.
\end{proof}

\begin{lemma}\label{lemma:phasebound}
    For any $x \in \RR$, we have $|1 - \exp(2\pi i x)| = 2\cdot |\sin(\pi x)|$.
\end{lemma}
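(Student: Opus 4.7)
The plan is to use a straightforward algebraic manipulation based on factoring out a half-angle phase, which is the standard trick for converting differences of unit complex numbers into sines. Specifically, I would write
\[
1 - \exp(2\pi i x) = \exp(\pi i x)\bigl(\exp(-\pi i x) - \exp(\pi i x)\bigr),
\]
and then recognize the parenthesized expression as $-2i\sin(\pi x)$ via Euler's formula $\exp(i\theta) - \exp(-i\theta) = 2i\sin(\theta)$ applied at $\theta = \pi x$.

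Next I would take absolute values of both sides. Since $|\exp(\pi i x)| = 1$ for all real $x$, and $|-2i\sin(\pi x)| = 2|\sin(\pi x)|$ (using $|-2i| = 2$ and $\sin(\pi x) \in \RR$), the product of the magnitudes collapses to $2|\sin(\pi x)|$, yielding the claim.

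There is essentially no obstacle here: the identity is elementary and holds for every real $x$ (including integer $x$, where both sides are zero, so no case split is required). I would present the two displayed equalities and then conclude in one line. No appeal to any other lemma in the excerpt is needed, and the statement does not require any nontrivial regularity or range restriction on $x$.
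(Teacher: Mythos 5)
Your proof is correct. It takes a slightly different route from the paper: you factor out the half-angle phase, writing $1 - \exp(2\pi i x) = \exp(\pi i x)\left(\exp(-\pi i x) - \exp(\pi i x)\right) = -2i\sin(\pi x)\exp(\pi i x)$, and then take moduli, whereas the paper computes the squared modulus directly, $|1-\exp(2\pi i x)|^2 = (1-\cos(2\pi x))^2 + \sin^2(2\pi x) = 2 - 2\cos(2\pi x) = 4\sin^2(\pi x)$, invoking the double-angle identity $1-\cos(2\pi x) = 2\sin^2(\pi x)$. The two arguments are equally elementary and of the same length; yours has the minor aesthetic advantage of producing the exact complex factorization (phase included) before taking absolute values, while the paper's avoids complex factoring entirely and works only with real and imaginary parts. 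Either is a complete and valid proof of the lemma, and your observation that no case split is needed (both sides vanish at integer $x$) is accurate.
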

\begin{proof}
    We have:
    \begin{align*}
        |1 - \exp(2\pi i x)|^2 &= (1 - \cos(2\pi x))^2 + \sin(2\pi x)^2 \\
        &= 2 - 2\cos(2\pi x) \\
        &= 4\sin^2(\pi x).
    \end{align*}
\end{proof}

\begin{corollary}\label{cor:phasebounds}
    For any $x \in \RR$, we have $|1 - \exp(2\pi ix)| \leq 2\pi|x|$.
\end{corollary}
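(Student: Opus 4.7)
The plan is to chain together Lemma~\ref{lemma:phasebound} with the standard elementary inequality $|\sin y| \leq |y|$ valid for every real $y$. By Lemma~\ref{lemma:phasebound} applied to $x$, we have $|1 - \exp(2\pi i x)| = 2|\sin(\pi x)|$, so the corollary is equivalent to showing $|\sin(\pi x)| \leq \pi|x|$, which is just the specialization of the bound $|\sin y| \leq |y|$ to $y = \pi x$.

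To justify $|\sin y| \leq |y|$, I would give a one-line argument: the function $\sin$ is $1$-Lipschitz on $\RR$ (since $|\cos| \leq 1$), so $|\sin y - \sin 0| \leq |y - 0|$, which yields the claim. Alternatively, this follows from the mean value theorem applied to $\sin$ on $[0, y]$ (or $[y, 0]$). Either way, the proof is three lines and has no real obstacle — the only thing worth noting is that the inequality is sharp at $x = 0$ and degrades gracefully for small $|x|$, which is exactly the regime in which this corollary will typically be invoked later in conjunction with Lemma~\ref{lemma:phasegeometric}.
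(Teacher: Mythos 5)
Your proof is correct and follows exactly the paper's route: apply Lemma~\ref{lemma:phasebound} and then the standard inequality $|\sin y| \leq |y|$ with $y = \pi x$. The extra Lipschitz/mean-value justification of $|\sin y| \leq |y|$ is fine but the paper simply cites it as well known.
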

\begin{proof}
    This is immediate from Lemma \ref{lemma:phasebound} and the well-known inequality that $|\sin x| \leq |x|$.
\end{proof}

\begin{corollary}\label{cor:phasefarfromzero}
    For any $x \in \RR$ such that $|x| \leq 1 - \Omega(1)$, we have $|1 - \exp(2\pi i x)| = \Omega(|x|)$.
\end{corollary}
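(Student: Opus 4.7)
The plan is to reduce the claim to a lower bound on $|\sin(\pi x)|$ and then split into two cases based on the magnitude of $|x|$. By Lemma~\ref{lemma:phasebound}, we have $|1 - \exp(2\pi i x)| = 2|\sin(\pi x)|$, so it suffices to show $|\sin(\pi x)| = \Omega(|x|)$ whenever $|x| \leq 1 - c$ for some constant $c > 0$ (which is how I interpret the hypothesis $|x| \leq 1 - \Omega(1)$). The key tool is Jordan's inequality: $\sin(y) \geq \tfrac{2y}{\pi}$ for $y \in [0, \pi/2]$.

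First I would handle the case $|x| \leq 1/2$, in which $\pi |x| \in [0, \pi/2]$, so Jordan's inequality applied to $y = \pi|x|$ directly yields $|\sin(\pi x)| = \sin(\pi |x|) \geq 2|x|$. This case is essentially immediate.

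The slightly more delicate case is $1/2 \leq |x| \leq 1-c$. Here I would exploit the reflection identity $\sin(\pi|x|) = \sin(\pi(1-|x|))$, together with the fact that $\pi(1-|x|) \in [\pi c, \pi/2]$, to apply Jordan's inequality to $y = \pi(1-|x|)$ and obtain $|\sin(\pi x)| \geq 2(1-|x|) \geq 2c$. Since $|x| \leq 1$ in this range, this is at least $2c \cdot |x|$, which is $\Omega(|x|)$ because $c$ is a positive constant.

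Combining the two cases gives $|\sin(\pi x)| \geq \min(2, 2c)\,|x|$, hence $|1 - \exp(2\pi i x)| \geq \min(4, 4c)\,|x| = \Omega(|x|)$. There is no real obstacle here; the only subtlety worth flagging is the need to use the reflection identity in the second case, since a naive application of Jordan's inequality to $\pi|x|$ itself fails once $\pi|x| > \pi/2$. The hypothesis $|x| \leq 1 - \Omega(1)$ is precisely what is needed to keep $\pi(1-|x|)$ bounded away from $0$ and thus keep $|\sin(\pi x)|$ from degenerating near the next zero at $x = 1$.
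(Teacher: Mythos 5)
Your proof is correct and follows essentially the same route as the paper: reduce to $|\sin(\pi x)|$ via Lemma~\ref{lemma:phasebound} and then lower-bound the sine linearly, which the paper simply asserts as the fact that $|\sin y| = \Omega(|y|)$ for $y \in [-\pi + \Omega(1), \pi - \Omega(1)]$. Your Jordan's-inequality case analysis just fills in the details of that asserted fact.
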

\begin{proof}
    This is immediate from Lemma \ref{lemma:phasebound} and the fact that $|\sin x| = \Omega(|x|)$ for $x \in [-\pi + \Omega(1), \pi - \Omega(1)]$.
\end{proof}
\noindent
We now combine these results in the following lemma:
\begin{lemma}\label{lemma:finalsumofphasesbound}
    For any $x \in \RR$ and positive integer $M$ such that $|xM| \leq 1-\Omega(1)$, we have $$\left|\sum_{k = 0}^{M-1} \exp(-2\pi i kx)\right| = \Theta(M).$$
\end{lemma}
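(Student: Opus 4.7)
The plan is to split into two cases based on whether $x=0$ and, in the nontrivial case, to apply the geometric-sum formula from Lemma~\ref{lemma:phasegeometric} and bound the numerator and denominator separately using the two corollaries immediately preceding the lemma statement.

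First, observe that if $x=0$ the sum equals $M$ trivially, so assume $x\neq 0$. Since $|xM|\leq 1-\Omega(1)<1$ and $M\geq 1$, we have $|x|<1$, hence $x\notin \ZZ$, so Lemma~\ref{lemma:phasegeometric} applies and gives
\[
\sum_{k=0}^{M-1}\exp(-2\pi i kx)=\frac{1-\exp(-2\pi i xM)}{1-\exp(-2\pi i x)}.
\]
Next, the upper bound $O(M)$ is immediate, since a sum of $M$ unit-modulus complex numbers has absolute value at most $M$ (no work with the geometric formula is even needed here).

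For the lower bound, I would bound the numerator from below and the denominator from above. By Corollary~\ref{cor:phasefarfromzero} applied to $xM$ (whose absolute value is $\leq 1-\Omega(1)$ by hypothesis), the numerator satisfies $|1-\exp(-2\pi i xM)|=\Omega(|xM|)$. By Corollary~\ref{cor:phasebounds} applied to $x$, the denominator satisfies $|1-\exp(-2\pi i x)|\leq 2\pi|x|$. Dividing, the magnitude of the sum is at least $\Omega(|xM|)/(2\pi|x|)=\Omega(M)$, as desired.

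I don't foresee a substantive obstacle: the argument is just a careful bookkeeping of the constants in the two corollaries, plus a sanity check that the hypothesis $|xM|\leq 1-\Omega(1)$ rules out the degenerate case $x\in\ZZ\setminus\{0\}$ where Lemma~\ref{lemma:phasegeometric} would not apply. The one minor subtlety is making sure Corollary~\ref{cor:phasefarfromzero} is invoked on $xM$ (so that the hypothesis $|xM|\leq 1-\Omega(1)$ is exactly what's needed), while Corollary~\ref{cor:phasebounds} is invoked on $x$ (where no smallness assumption is required). Combining the two directions yields the $\Theta(M)$ conclusion.
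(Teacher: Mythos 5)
Your proposal is correct and follows essentially the same route as the paper: handle the integer (here, $x=0$) case separately, use the geometric-sum formula, lower-bound the numerator via Corollary~\ref{cor:phasefarfromzero} applied to $xM$, and upper-bound the denominator via Corollary~\ref{cor:phasebounds} applied to $x$. No gaps.
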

\begin{proof}
    First, if $x \in \ZZ$ then each term in the summation will be 1, so the LHS will be exactly $M$. Hence we assume from now on that $x \in \RR/\ZZ$. In this case, the upper bound is straightforward: the LHS is $\leq M$ by a straightforward triangle inequality. For the lower bound, note that:
    \begin{align*}
        \left|\sum_{k = 0}^{M-1} \exp(-2\pi i kx)\right| &= \left|\frac{1 - \exp(-2\pi i xM)}{1 - \exp(-2\pi i x)}\right| \text{ (Lemma~\ref{lemma:phasegeometric})} \\
        &= \Omega\left(\frac{|xM|)}{\left|1 - \exp(-2\pi i x)\right|}\right) \text{ (Corollary~\ref{cor:phasefarfromzero})} \\
        &\geq \Omega(M) \text{ (Corollary~\ref{cor:phasebounds}).}
    \end{align*}
\end{proof}

\subsubsection{Gauss Sums}\label{sec:gausssums}

Here, we state results that essentially imply that the Jacobi symbol is appropriately ``pseudorandom'' for the purposes of our algorithm and that of~\cite{LPDS12}. We follow the lecture notes by Conrad~\cite{conrad}.

\begin{definition}[Dirichlet characters]
    For $m \in \NN$, we say that $\chi: \ZZ_m \rightarrow \CC$ is a \emph{Dirichlet character} $\bmod\text{ }m$ if the following properties all hold:
    \begin{enumerate}
        \item $\chi(a) = 0$ if and only if $\gcd(a, m) > 1$.
        \item $\chi(ab) = \chi(a)\chi(b)$ for all $a, b$.
    \end{enumerate}
\end{definition}

\begin{definition}[(Im)primitive Dirichlet characters (\cite{conrad}, Definition 3.3)]
    We say that a Dirichlet character $\chi$ mod $m$ is \emph{imprimitive} if there is a proper divisor $m'$ of $m$ and a Dirichlet character $\chi'$ mod $m'$ such that, for all $a \in \ZZ_m$ such that $\gcd(a, m) = 1$, we have $\chi(a) = \chi'(a \mod m')$.

    If $\chi$ is not imprimitive, we call it \emph{primitive}.
\end{definition}

Before continuing, we make a simple observation that the Jacobi symbol is a primitive Dirichlet character modulo any squarefree integer:

\begin{lemma}\label{lemma:jacobisprimitive}
    If $m > 1$ is odd and squarefree, then the Jacobi symbol $\chi(a) = \jac{a}{m}$ is a primitive Dirichlet character mod $m$.
\end{lemma}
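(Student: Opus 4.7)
The plan is to verify the two Dirichlet-character axioms directly from Theorem~\ref{thm:jacobiproperties}, and then establish primitivity by contradiction: assume $\chi$ factors through some proper divisor $m' \mid m$, and exhibit two residues $a \equiv b \pmod{m'}$ (both coprime to $m$) with $\chi(a) \neq \chi(b)$.

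First I would dispatch the easy part. Property~\ref{item:jacprop1} of Theorem~\ref{thm:jacobiproperties} gives $\jac{a}{m} = 0 \iff \gcd(a,m) > 1$, and property~\ref{item:jacprop2} gives the required total multiplicativity $\jac{ab}{m} = \jac{a}{m}\jac{b}{m}$. So $\chi$ is a Dirichlet character mod $m$.

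For primitivity, suppose for contradiction that there is a proper divisor $m' \mid m$ and a Dirichlet character $\chi'$ mod $m'$ with $\chi(a) = \chi'(a \bmod m')$ whenever $\gcd(a,m)=1$. Since $m$ is odd and squarefree, write $m = p_1 \cdots p_k$ with distinct odd primes. As $m'$ is a proper divisor of a squarefree number, $m'$ is a product of a proper subset of the $p_i$'s; in particular there exists a prime $p \mid m$ with $p \nmid m'$, and so $m' \mid m/p$. The key step is now to use the Chinese Remainder Theorem to construct $a, b \in \ZZ$ satisfying
\begin{align*}
a &\equiv 1 \pmod{p}, & a &\equiv 1 \pmod{m/p},\\
b &\equiv r \pmod{p}, & b &\equiv 1 \pmod{m/p},
\end{align*}
where $r$ is any fixed quadratic non-residue mod $p$ (which exists since $p$ is an odd prime). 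Both $a$ and $b$ are coprime to $m$, and they agree mod $m/p$, hence mod $m'$ too.

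Finally I would compute the two Jacobi symbols using property~\ref{item:jacprop3} to split the denominator over the prime factors of $m$. For each prime $q \mid m/p$ we have $a \equiv b \equiv 1 \pmod q$, so $\jac{a}{q} = \jac{b}{q} = 1$. At the remaining prime $p$ we have $\jac{a}{p} = \jac{1}{p} = 1$ while $\jac{b}{p} = \jac{r}{p} = -1$ by choice of $r$. Therefore $\chi(a) = 1 \neq -1 = \chi(b)$, contradicting $\chi(a) = \chi'(a \bmod m') = \chi'(b \bmod m') = \chi(b)$. The only real subtlety, and what I'd expect to be the one place to be careful, is ensuring $m' \mid m/p$ so that $a \equiv b \pmod{m'}$ follows from their agreement mod $m/p$; but this is immediate from squarefreeness of $m$.
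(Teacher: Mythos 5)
Your proposal is correct and follows essentially the same route as the paper: both verify the character axioms from properties~\ref{item:jacprop1} and~\ref{item:jacprop2} and then use the Chinese Remainder Theorem to build an element that is a quadratic non-residue at a prime $p \mid m$ with $p \nmid m'$ and congruent to $1$ at the remaining primes. The only cosmetic difference is that you compare two elements $a \equiv b \pmod{m'}$ with $\chi(a) \neq \chi(b)$, whereas the paper compares its single constructed element against $\chi'(1)=1$ directly.
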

\begin{proof}
    We know $\chi$ is a Dirichlet character from properties \ref{item:jacprop1} and \ref{item:jacprop2} of Theorem \ref{thm:jacobiproperties}. It remains to check that it is primitive.
    
    To this end, consider any proper divisor $m'$ of $m$ and a character $\chi'$ mod $m'$. Let $m = p_1\ldots p_r$ for distinct primes $p_1, \ldots, p_r$ (since $m$ is squarefree); since $m'$ is a proper divisor of $m$, assume without loss of generality that $p_1$ does not divide $m'$.

    Now consider $a \in \ZZ_m$ such that $a$ is a quadratic non-residue mod $p_1$ and is congruent to 1 modulo $p_2, \ldots, p_r$. Such $a$ exists by the Chinese Remainder Theorem. Then $\chi(a) = -1$. On the other hand, since $p_1$ does not divide $m'$ we have $a \bmod m' = 1 \Rightarrow \chi'(a \mod m') = \chi'(1) = 1$. (The final step is because we have $\chi'(1) = \chi'(1 \cdot 1) = \chi'(1)^2$ and $\chi'(1) \neq 0$, forcing $\chi'(1) = 1$.) Hence for this $a$, we have $\chi(a) \neq \chi'(a \bmod{m'})$. Such $a$ exists for any $m', \chi'$, so $\chi$ is indeed primitive.
\end{proof}

\begin{definition}[Gauss sums (\cite{conrad}, Definition 3.1)]
    For a Dirichlet character $\chi$ on $\ZZ_m$, we define its \emph{Gauss sum} to be $$G(\chi) = \sum_{a \in \ZZ_m} \chi(a)\exp\left(\frac{2\pi i a}{m}\right).$$
\end{definition}

\begin{theorem}[\cite{conrad}, Theorem 3.12]\label{thm:gausssum}
    For any primitive Dirichlet character $\chi$ on $\ZZ_m$, we have $|G(\chi)| = \sqrt{m}$.
\end{theorem}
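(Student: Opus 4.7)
The plan is to prove $|G(\chi)| = \sqrt{m}$ by introducing the twisted Gauss sums
\[
G_a(\chi) := \sum_{t \in \ZZ_m} \chi(t)\,\exp\!\left(\frac{2\pi i a t}{m}\right), \qquad a \in \ZZ_m,
\]
so that $G(\chi) = G_1(\chi)$, and then computing $\sum_a |G_a(\chi)|^2$ in two ways. The key point is to establish the identity
\[
G_a(\chi) = \begin{cases} \overline{\chi(a)}\, G(\chi) & \text{if } \gcd(a,m)=1,\\ 0 & \text{if } \gcd(a,m)>1,\end{cases}
\]
and primitivity of $\chi$ will enter precisely to get the vanishing in the second case.

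For the first case, when $\gcd(a,m)=1$, $a$ is invertible mod $m$, so I substitute $t = a^{-1}u$ (a bijection on $\ZZ_m$) in the definition of $G_a(\chi)$. Multiplicativity of $\chi$ gives $\chi(a^{-1}u) = \chi(a)^{-1}\chi(u) = \overline{\chi(a)}\chi(u)$ (using that $\chi(a)$ lies on the unit circle whenever it is nonzero), and the exponential becomes $\exp(2\pi i u/m)$. This immediately yields $G_a(\chi) = \overline{\chi(a)}\,G(\chi)$.

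The main obstacle, and the step where primitivity is essential, is the vanishing when $d := \gcd(a,m) > 1$. Let $m'' := m/d$, which is a proper divisor of $m$. Since $\chi$ is primitive, by definition it cannot be induced from a character mod $m''$, which I will unpack into the following usable fact: there exists $b \in \ZZ_m^*$ with $b \equiv 1 \pmod{m''}$ and $\chi(b) \neq 1$. (This is essentially the contrapositive of the definition of imprimitivity; I would either cite it from~\cite{conrad} or derive it from the standard fact that the kernel of the reduction map $\ZZ_m^* \to \ZZ_{m''}^*$ is not contained in $\ker \chi$ when $\chi$ is primitive.) Now substitute $t = bu$ in $G_a(\chi)$: since $\gcd(b,m)=1$ this is a bijection, and the key computation is that $ab \equiv a \pmod{m}$, because $b - 1$ is a multiple of $m''$ and $d \mid a$ together imply $a(b-1) \equiv 0 \pmod{d\,m''} = \pmod{m}$. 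Hence the exponential is unchanged and I obtain $G_a(\chi) = \chi(b)\,G_a(\chi)$, forcing $G_a(\chi)=0$ since $\chi(b) \neq 1$.

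With the twisted-sum identity in hand, I finish by evaluating $S := \sum_{a \in \ZZ_m} |G_a(\chi)|^2$ two ways. First, using the identity above,
\[
S = \sum_{\gcd(a,m)=1} |\overline{\chi(a)}|^2\,|G(\chi)|^2 = \varphi(m)\,|G(\chi)|^2.
\]
Second, expanding the square and swapping summation order,
\[
S = \sum_{s,t \in \ZZ_m} \chi(s)\overline{\chi(t)} \sum_{a \in \ZZ_m} \exp\!\left(\frac{2\pi i a(s-t)}{m}\right) = m \sum_{s \in \ZZ_m} |\chi(s)|^2 = m\,\varphi(m),
\]
since the inner sum is $m$ when $s \equiv t \pmod m$ and $0$ otherwise, and $|\chi(s)|^2$ is the indicator of $\gcd(s,m)=1$. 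Equating the two expressions gives $|G(\chi)|^2 = m$, as desired.
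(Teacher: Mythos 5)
Your proof is correct. Note that the paper itself does not prove this statement at all—it is imported verbatim from Conrad's notes (the cited Theorem 3.12)—so there is no internal proof to compare against; but your argument is the standard one and is essentially the proof given in the cited source: establish the twisted-sum identity $G_a(\chi)=\overline{\chi(a)}\,G(\chi)$ for $\gcd(a,m)=1$ and $G_a(\chi)=0$ otherwise (the latter being exactly where primitivity enters), then evaluate $\sum_{a\in\ZZ_m}|G_a(\chi)|^2$ both via this identity and via orthogonality of the additive characters. All the individual steps check out, including the key congruence $a(b-1)\equiv 0\pmod{m}$ when $d=\gcd(a,m)>1$ and $b\equiv 1\pmod{m/d}$, and the case $d=m$ is covered since primitivity (for $m>1$) rules out the trivial character. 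The one step you defer—that primitivity, as defined in the paper via non-inducibility from a proper divisor, yields some $b\in\ZZ_m^*$ with $b\equiv 1\pmod{m''}$ and $\chi(b)\neq 1$—is a genuine (if standard) lemma requiring the surjectivity of $\ZZ_m^*\to\ZZ_{m''}^*$ and the factoring of $\chi$ through the quotient; since you flag it and indicate the correct derivation, this is acceptable, but a fully self-contained write-up should include that short argument.
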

\noindent
This allows us to prove the specific form of the Gauss sum bound that we will need. We refer the reader to Section~\ref{sec:lpdsoverview} for an overview of where these sums of phases come from, and reiterate the intuition here. Informally, if we want to recover $Q$ given $N = P^2Q$ as input, we will end up with a superposition of several periodic signals with period $Q$. The below lemma (with $m = Q$) examines the result of applying a QFT to this superposition, and tells us that these signals will essentially interfere like a randomly chosen collection of periodic signals.

\begin{lemma}\label{lemma:finalgausssumestimate}
    Suppose $m$ is odd and squarefree and consider any $k \in \ZZ_m$. Then we have:
    $$\left|\sum_{j \in \ZZ_m} \jac{j}{m}\exp\left(-\frac{2\pi i jk}{m}\right)\right| = \begin{cases}
        \sqrt{m}, \text{ if $\gcd(k, m) = 1$,} \\
        0, \text{ else.}
    \end{cases}.$$
\end{lemma}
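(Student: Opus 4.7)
The plan is to recognize the target sum as a twisted Gauss sum of the Dirichlet character $\chi(j) := \jac{j}{m}$, which is primitive modulo $m$ by Lemma~\ref{lemma:jacobisprimitive} (for $m > 1$; the case $m = 1$ is trivial since the sum consists of the single term $1$). Setting $n \equiv -k \pmod{m}$, I denote
$$G_n(\chi) := \sum_{j \in \ZZ_m} \chi(j)\, \exp\left(\frac{2\pi i jn}{m}\right),$$
so that $G_1(\chi)$ equals the Gauss sum $G(\chi)$ of Theorem~\ref{thm:gausssum}. I then split into the cases $\gcd(k, m) = 1$ and $\gcd(k, m) > 1$.

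In the first case, $n$ is invertible mod $m$, so I apply the bijective substitution $j \mapsto n^{-1} j \bmod m$ and use the multiplicativity of $\chi$ to obtain $G_n(\chi) = \overline{\chi(n)}\, G(\chi)$. Since $\chi(n) \in \{\pm 1\}$ and $|G(\chi)| = \sqrt{m}$ by Theorem~\ref{thm:gausssum}, I conclude $|G_n(\chi)| = \sqrt{m}$.

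In the second case, let $d := \gcd(k, m)$ and write $m = d m_1$, $n = d n_1$; the phase $\exp(2\pi i jn/m) = \exp(2\pi i j n_1/m_1)$ depends only on $j \bmod m_1$, so I group the sum by residue class:
$$G_n(\chi) = \sum_{b \in \ZZ_{m_1}} \exp\left(\frac{2\pi i b n_1}{m_1}\right) S_b, \qquad S_b := \sum_{\substack{j \in \ZZ_m \\ j \equiv b \pmod{m_1}}} \chi(j).$$
I then show each $S_b$ vanishes by exploiting primitivity: there must exist $c$ with $\gcd(c, m) = 1$, $c \equiv 1 \pmod{m_1}$, and $\chi(c) \neq 1$, since otherwise the values of $\chi$ on $\ZZ_m^\ast$ would depend only on the residue mod $m_1$, allowing $\chi$ to descend to a character modulo the proper divisor $m_1$ --- contradicting primitivity. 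Multiplication by $c$ permutes the set $\{j \in \ZZ_m : j \equiv b \pmod{m_1}\}$, so the change of variables gives $S_b = \chi(c^{-1}) S_b$, forcing $S_b = 0$; summing over $b$ yields $G_n(\chi) = 0$.

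The main obstacle is in the second case: establishing the existence of such a $c$ requires both the standard (CRT-based) surjectivity of the reduction map $\ZZ_m^\ast \to \ZZ_{m_1}^\ast$ for $m_1 \mid m$ and the contrapositive of the primitivity definition from the excerpt. Both ingredients are standard in the theory of Dirichlet characters but need to be spelled out, since the excerpt only provides the definitions directly. By contrast, the first case is essentially a change of variables on top of Theorem~\ref{thm:gausssum}.
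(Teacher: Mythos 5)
Your proposal is correct, and your first case (change of variables $j \mapsto n^{-1}j$, then Theorem~\ref{thm:gausssum} and Lemma~\ref{lemma:jacobisprimitive}) is exactly the paper's argument. Your second case, however, takes a genuinely different route. The paper exploits the specific structure of the Jacobi symbol and the squarefreeness of $m$: writing $d = \gcd(k,m)$ and $m' = m/d$, it uses $\gcd(d,m')=1$, the multiplicativity $\jac{j}{m} = \jac{j}{d}\jac{j}{m'}$ (property~\ref{item:jacprop3} of Theorem~\ref{thm:jacobiproperties}), and the Chinese Remainder Theorem to factor the whole sum into a product of two sums, one of which is $\sum_{j_1\in\ZZ_d}\jac{j_1}{d} = 0$ by the balanced-values count of Corollary~\ref{cor:jacobicount} (with the case $k\equiv 0\pmod m$ handled separately by that same corollary). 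You instead prove the general character-theoretic fact that the twisted Gauss sum of any \emph{primitive} character vanishes when the twist shares a factor with the modulus, by grouping the sum into residue classes modulo the proper divisor $m_1 = m/d$ and annihilating each partial sum $S_b$ via a unit $c\equiv 1\pmod{m_1}$ with $\chi(c)\neq 1$, whose existence follows from the contrapositive of primitivity plus surjectivity of $\ZZ_m^\ast\to\ZZ_{m_1}^\ast$. Your argument is more general (it never uses squarefreeness beyond Lemma~\ref{lemma:jacobisprimitive}) and would apply verbatim to any primitive Dirichlet character, at the cost of having to spell out the descent argument you flag; the paper's argument is more elementary and self-contained given Corollary~\ref{cor:jacobicount}, but is tied to the Jacobi symbol and to $m$ squarefree. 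One small point to attend to when writing yours up: in the subcase $k\equiv 0\pmod m$ you have $m_1 = 1$, where the descent is to the (trivial) character mod $1$; either note that this is covered by the paper's definition of imprimitivity with $m'=1$, or dispose of $k=0$ separately via nontriviality of $\chi$ (e.g.\ again Corollary~\ref{cor:jacobicount}), as the paper does.
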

\begin{proof}
    First, we address the case where $\gcd(k, m) = 1$. Let $\chi$ denote the Jacobi symbol mod $m$. In this case we make a simple change of variables:
    \begin{align*}
        \sum_{j \in \ZZ_m} \jac{j}{m}\exp\left(-\frac{2\pi i jk}{m}\right) &= \sum_{j \in \ZZ_m} \jac{-j/k}{m} \exp\left(\frac{2\pi i j}{m}\right) \\
        &= \jac{-k}{m} \cdot G(\chi),
    \end{align*}
    and now the conclusion is immediate from Theorem~\ref{thm:gausssum} and Lemma~\ref{lemma:jacobisprimitive}.
    
    It remains to address the case where $\gcd(k, m) = d > 1$. Let $k' = k/d$ and $m' = m/d$. If $k \equiv 0 \pmod{m}$, the conclusion is immediate from Corollary~\ref{cor:jacobicount}, so we may assume $m' > 1$. Since $m$ is squarefree, we have $\gcd(d, m') = 1$ and thus we can use the Chinese Remainder Theorem to associate a value $j \in \ZZ_m$ with its residues $j \bmod{m'}$ and $j \bmod{d}$. Bearing this in mind, we have:
    \begin{align*}
        \sum_{j \in \ZZ_m} \jac{j}{m} \exp\left(-\frac{2\pi i jk}{m}\right) &= \sum_{j \in \ZZ_m} \jac{j}{m'} \jac{j}{d} \exp\left(-\frac{2\pi i jk'}{m'}\right) \\
        &= \sum_{j_1 \in \ZZ_d} \sum_{j_2 \in \ZZ_{m'}} \jac{j_2}{m'} \jac{j_1}{d} \exp\left(-\frac{2\pi i j_2 k'}{m'}\right) \\
        &= \left(\sum_{j_1 \in \ZZ_d} \jac{j_1}{d}\right) \cdot \left(\sum_{j_2 \in \ZZ_{m'}} \jac{j_2}{m'} \exp\left(-\frac{2\pi i j_2 k'}{m'}\right)\right) \\
        &= 0 \cdot \left(\sum_{j_2 \in \ZZ_{m'}} \jac{j_2}{m'} \exp\left(-\frac{2\pi i j_2 k'}{m'}\right)\right) \text{ (Corollary~\ref{cor:jacobicount})} \\
        &= 0,
    \end{align*}
    as desired.
\end{proof}

\section{Factoring Squarefull Integers}\label{sec:highlevel}

In this section we prove the following theorem which refines the result by~\cite{LPDS12}. Crucially, we build on~\cite{LPDS12}, showing that it suffices for the initial superposition to extend only to $\poly(\Bmax)$, rather than $\poly(N)$ (where $\Bmax$ is an upper bound on $B$, explicitly defined in the theorem statement below). We will leverage this to factor a large class of integers with sublinear space and depth in Section~\ref{sec:jacobi}.

\begin{theorem}\label{thm:highlevel}
    Let $N, n$ be positive integers such that $2^{n-1} \leq N < 2^n$, and let $A, B$ be the unique positive integers such that $B$ is squarefree and $N = A^2B$. We will further assume that $N$ is neither squarefree nor a square i.e. $A, B > 1$. 
    Suppose there exists a quantum circuit that implements the operation $$\ket{x} \ket{0^{S}} \mapsto j_N(x) \ket{x} \ket{0^S},$$
    using $S := S(\ell, n)$ ancilla qubits with $G(\ell, n)$ gates and $D(\ell, n)$ depth, for any positive integer $x$ such that $x < 2^{\ell}$. As in Definition~\ref{def:jacobi}, $j_N(x) \in \left\{-1, 1\right\}$ is such that $j_N(x) = \jac{x}{N}$ when $\gcd(x, N) = 1$ and can be arbitrary for other $x$.

    Suppose we are also given an upper bound $\Bmax$ on $B$, and define $\ell := \floor{2\log \Bmax} + \omega(1)$. Then there is a quantum algorithm that, given as input $N$ and $\Bmax$, outputs either $B$ or a prime dividing $N$, with probability $\Omega(1)$. The quantum circuit uses $$G(\ell, n) + O(\ell \log \ell)$$gates, $D(\ell, n) + O(\ell)$ depth, and $S(\ell, n) + \ell$ qubits, and any classical pre/post-processing is polynomial-time.
\end{theorem}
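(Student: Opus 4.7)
The plan is to execute the LPDS-style circuit from Section~\ref{sec:lpdsoverview}, but starting from a uniform superposition over only $\{0,\ldots,2^\ell-1\}$ instead of $\{0,\ldots,N-1\}$. Concretely: (i) prepare $|\psi_0\rangle = 2^{-\ell/2}\sum_{x=0}^{2^\ell-1}|x\rangle$ via $\ell$ Hadamards; (ii) invoke the given oracle to obtain $|\psi_1\rangle = 2^{-\ell/2}\sum_x j_N(x)|x\rangle$; (iii) apply the QFT modulo $2^\ell$; (iv) measure $y$; and (v) classically recover a candidate for $B$ (or a prime divisor of $N$) from $y$ via continued fractions on $y/2^\ell$ followed by a $\gcd$ check. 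A classical trial-division preprocessing pass over $N$ up to a polynomial threshold (outputting any small prime factor directly, accounting for the ``prime dividing $N$'' clause) precedes the quantum run. The claimed cost bounds follow by adding the cost of (i) and (iii)---dominated by the QFT at $O(\ell\log\ell)$ gates, $O(\ell)$ depth, and $\ell$ qubits---to the oracle costs given in the theorem.

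The crux is to lower-bound the success probability of the quantum run by $\Omega(1)$. Let $\alpha_y = 2^{-\ell}\sum_x j_N(x)\exp(-2\pi i xy/2^\ell)$, and let $\alpha_y^{(0)}$ denote the same expression with $j_N(x)$ replaced by $\chi(x) := \jac{x}{B}$ (which vanishes whenever $\gcd(x,B)>1$). Item~\ref{item:jacprop3} of Theorem~\ref{thm:jacobiproperties} gives $\jac{x}{N} = \jac{x}{A}^2\cdot\jac{x}{B}$, so $j_N(x) = \chi(x)$ on all $x$ with $\gcd(x,N)=1$; consequently Parseval's identity bounds the $L^2$ norm of the discrepancy $\delta := \alpha - \alpha^{(0)}$ by $\|\delta\|^2 \le 4(1-\varphi(N)/N)$, which the preprocessing makes $o(1)$ (for threshold $\Theta(n)$, this is $O(1/\log n)$).

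The idealized amplitude $\alpha^{(0)}_{y_k}$ at the integer $y_k$ nearest $k\cdot 2^\ell/B$, for $k \in \ZZ_B^*$, is handled exactly as in the LPDS analysis: partition by residue class $x_0\bmod B$, apply Lemma~\ref{lemma:finalsumofphasesbound} to the inner geometric sum (magnitude $\Theta(2^\ell/B)$), and apply Lemma~\ref{lemma:finalgausssumestimate} to the outer Gauss sum (magnitude $\sqrt B$), with a phase perturbation of size $O(B^2/2^\ell) = 2^{-\omega(1)}$ (negligible by the $\omega(1)$ slack in $\ell$). This yields $|\alpha^{(0)}_{y_k}| = \Omega(1/\sqrt B)$ and hence $S := \sum_k |\alpha^{(0)}_{y_k}|^2 = \Omega(\varphi(B)/B)$. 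A Cauchy--Schwarz bound on the cross term $\langle \alpha^{(0)}, \delta\rangle$ restricted to the good $y_k$ then gives $\sum_k |\alpha_{y_k}|^2 \ge S - 2\sqrt{S\|\delta\|^2} = \Omega(1)$, using that preprocessing also forces $\varphi(B)/B = \Omega(1)$. Continued fractions on $y/2^\ell$ recovers $k/B$ uniquely (since $2^\ell \ge 2\Bmax^2$), and we output $B$ after verifying $B \mid N$ and $N/B$ being a perfect square.

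I expect the main obstacle to be the error-control argument bounding $\|\delta\|$: the adversarial freedom in $j_N(x) \in \{\pm 1\}$ on non-coprime $x$ precludes a pointwise estimate of the form $|\alpha_{y_k}| \ge |\alpha^{(0)}_{y_k}| - |\delta_{y_k}|$, forcing the cumulative $L^2$/Cauchy--Schwarz argument above. The two required ingredients---the classical preprocessing (so that $1-\varphi(N)/N$ is polynomially small) and the superpolynomial slack $2^\ell/\Bmax^2$ (so that the Gauss-sum main term is unperturbed and the continued-fraction recovery is unambiguous)---are both crucial and need to be traded off carefully against the polynomial factors hidden in the $\Omega(1/\sqrt B)$ signal.
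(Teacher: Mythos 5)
Your proposal is correct and follows essentially the same route as the paper: trial-division preprocessing to guarantee all prime factors of $N$ exceed a polynomial threshold, a uniform superposition of size $2^\ell$ with the Jacobi phase applied by the oracle, a QFT mod $2^\ell$, an amplitude lower bound of $\Omega(1/\sqrt{B})$ at the integers nearest $k2^\ell/B$ for $k$ coprime to $B$ via Lemmas~\ref{lemma:finalsumofphasesbound} and~\ref{lemma:finalgausssumestimate}, and continued-fraction recovery of $B$ using $2^\ell > 2\Bmax^2$. The only (minor) deviation is your error control for inputs with $\gcd(x,N)>1$, done via Parseval plus Cauchy--Schwarz on the good frequencies rather than the paper's trace-distance swap to the idealized state $\sum_x \jac{x}{B}\ket{x}$ (Lemma~\ref{lemma:tracedistancecoprime}); the two are equivalent ways of absorbing the same $o(1)$ discrepancy.
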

\noindent
Before we prove the theorem, notice that plugging in the Jacobi symbol algorithm due to~\cite{schonhageeuc, thull1990uni,bach1996algorithmic, Moller2008OnSA} (discussed in Section~\ref{sec:cntprelims}) and simply setting $B_\mathrm{max} = N$ immediately yields the following corollary:\footnote{We note that the original paper by~\cite{LPDS12} does not appear to state a result using Sch{\"o}nhage's near-linear size Jacobi algorithm~\cite{schonhageeuc}; rather, they work with the better-known quadratic-time algorithms for computing the Jacobi symbol. Nevertheless, we credit~\cite{LPDS12} with this result since this result does essentially follow directly from their analysis (they need only set $\ell = \Theta(n)$ in Theorem~\ref{thm:highlevel}).}
\begin{corollary}[\cite{LPDS12, schonhageeuc}]\label{cor:highlevelwithschonhage}
    Let $N, A, B, n$ be as in Theorem~\ref{thm:highlevel}. Then there exists a quantum algorithm that, given as input $N$, outputs either $B$ or a prime dividing $N$, with probability $\Omega(1)$. The quantum circuit uses $\widetilde{O}(n)$ gates.
\end{corollary}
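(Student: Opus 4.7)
My plan is to derive Corollary~\ref{cor:highlevelwithschonhage} as an immediate instantiation of Theorem~\ref{thm:highlevel}, with two concrete choices: (i) take the trivial upper bound $B_\mathrm{max} = N$ on the squarefree part $B$ of $N$, and (ii) use the Sch\"onhage-style near-linear algorithm for the Jacobi symbol, as surveyed in Section~\ref{sec:cntprelims}, as the quantum subroutine that implements $\ket{x}\ket{0^S}\mapsto j_N(x)\ket{x}\ket{0^S}$. With these two ingredients chosen, the statement should reduce to bookkeeping on the parameters $\ell$, $G(\ell,n)$, $S(\ell,n)$, $D(\ell,n)$ that appear in Theorem~\ref{thm:highlevel}.

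The key step is to verify the parameter substitutions. Since $B\leq N<2^n$, setting $B_\mathrm{max}=N$ gives $\ell = \lfloor 2\log B_\mathrm{max}\rfloor + \omega(1) = O(n)$, so the Jacobi subroutine only needs to work on inputs $x<2^\ell$, i.e.\ integers of bit length $O(n)$. Classically, the Sch\"onhage / Thull--Yap / M\"oller algorithms compute $\jac{x}{N}$ with $O(\ell \cdot \mathrm{polylog}(\ell))=\widetilde{O}(n)$ bit operations; by the standard reversible-simulation wrapper of Bennett, this yields a quantum circuit computing $\ket{x}\ket{0^S}\mapsto\ket{x}\ket{j_N(x)}$ (and then, by applying a $Z$ to the output bit and uncomputing, the phase version $\ket{x}\ket{0^S}\mapsto j_N(x)\ket{x}\ket{0^S}$) with $G(\ell,n),S(\ell,n),D(\ell,n)$ all bounded by $\widetilde{O}(n)$. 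Plugging this into Theorem~\ref{thm:highlevel}, the total gate count becomes $G(\ell,n)+O(\ell\log\ell)=\widetilde{O}(n)+O(n\log n)=\widetilde{O}(n)$, and the algorithm outputs either $B$ or a prime divisor of $N$ with probability $\Omega(1)$, as required.

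I expect no real obstacle here, since Theorem~\ref{thm:highlevel} has already been stated in the general black-box form precisely so that this corollary falls out. The only points that require a moment's care are: (a) confirming that the Sch\"onhage-style algorithm genuinely produces a value in $\{-1,1\}$ on all inputs (including those with $\gcd(x,N)>1$), which is exactly what the definition of $j_N(x)$ from Definition~\ref{def:jacobi} permits, so any implementation behaviour in the non-coprime case is acceptable; and (b) confirming that the $\mathrm{polylog}(n)$ overhead from converting a bit-complexity algorithm into a Boolean circuit (via~\cite{DBLP:journals/jacm/PippengerF79}, as noted in Section~\ref{sec:cntprelims}) is absorbed into the $\widetilde{O}$. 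Both are standard, so the corollary follows.
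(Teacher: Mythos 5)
Your proposal matches the paper's own argument: Corollary~\ref{cor:highlevelwithschonhage} is obtained exactly by setting $B_\mathrm{max}=N$ in Theorem~\ref{thm:highlevel} and plugging in the near-linear Sch\"onhage-style Jacobi algorithm (made reversible, with the polylog circuit-conversion overhead absorbed into $\widetilde{O}$). The parameter bookkeeping and the remarks about $j_N(x)$ on non-coprime inputs are consistent with how the paper handles them, so no further work is needed.
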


\noindent
We now turn to the proof of Theorem~\ref{thm:highlevel}. Our algorithm is detailed in Algorithm~\ref{algo:highlevel} and very closely follows Shor's period-finding algorithm~\cite{shor97}; the main difference is that we will end up with a superposition of multiple periodic signals with the same period rather than just one periodic signal. Nevertheless, we can argue using Gauss sums (see Section \ref{sec:gausssums}) that taking a QFT with this ``somewhat periodic'' signal still suffices to factor $N$.

We first address efficiency, then turn to correctness. We have the following costs:
\begin{itemize}
    \item The uniform superposition over $[1, 2^\ell]$ can be initialized in depth $1$ and gates $\ell$, using no ancilla qubits.
    \item The Jacobi symbol computation can be carried out in depth $D(\ell, n)$, gates $G(\ell, n)$, and space $S(\ell, n)$ by supposition.
    \item For the QFT mod $2^\ell$, we rely on Coppersmith's $o(1)$-approximate QFT~\cite{Cop02}, which uses $O(\ell \log \ell)$ gates, $O(\ell)$ depth, and no ancilla qubits.
\end{itemize}

As stated at the beginning of Algorithm~\ref{algo:highlevel}, let $c > 1$ be a constant parameter. We will assume throughout this section that $N = A^2B$, where $B$ is squarefree, $A, B > 1$, and any prime divisor of $A, B$ is $\geq n^c$. In our calculations, we will sometimes use big-$O$ notation to denote an arbitrary \emph{complex number} within a certain magnitude i.e. the notation $O(t)$ denotes some $z \in \CC$ such that $|z| \leq O(t)$.

\begin{algorithm}
    \SetKwInput{KwData}{Input}
    \SetKwInput{KwResult}{Output}
    \KwData{Positive integer $N = A^2B$ and a bound $\Bmax \leq N$ such that $B \in [2, \Bmax]$ is squarefree.}
    \KwResult{Either the value of $B$, or a prime divisor $p$ of $N$ (with probability $\Omega(1)$).}
    \begin{enumerate}
        \item First, we dispose of easy cases classically (in $\poly(n)$ time). Let $c > 1$ be some real constant parameter. If $N$ has any prime divisor $\leq n^c$, output that prime and terminate. We may hence assume from now on that all prime divisors of $A, B$ are $> n^{c}$, and hence also $\Bmax \geq n^c$.
        \item Set $\ell := \lfloor 2 \log \Bmax \rfloor + 1$ and $S := S(\ell, n)$.
        \item Initialize a uniform superposition $$\frac{1}{2^{\ell/2}}\sum_{x = 1}^{2^{\ell}} \ket{x} \ket{0^{S}}.$$
        \item Compute the function $j_N(x)$ in superposition, to obtain the following state: $$\frac{1}{2^{\ell/2}}\sum_{x = 1}^{2^{\ell}} j_N(x) \ket{x} \ket{0^{S}}.$$

        \item Apply a QFT mod $2^{\ell}$ to the $x$ register, and measure to obtain an integer $x^* \in [0, 2^{\ell} - 1]$.

        \item Finally, for the classical post-processing, use the continued fraction expansion of $\frac{x^*}{2^{\ell}}$ (as in Shor~\cite{shor97}; see~\cite[Chapter X]{hardy75} for details) to find positive integers $X_1$ and $X_2$ such that $X_2 \leq \Bmax$ and $\left|\frac{x^*}{2^{\ell}} - \frac{X_1}{X_2}\right|$ is minimal. Output $X_2$ and terminate. (We will show that with probability $\Omega(1)$, we will in fact have $X_2 = B$.)
    \end{enumerate}
    \caption{The Jacobi Factoring Circuit for Squarefull Integers}\label{algo:highlevel}
\end{algorithm}

It now remains to prove the correctness of Algorithm~\ref{algo:highlevel}. To do this, we first prove a preliminary technical lemma, then turn our attention to proving the theorem. The need for this technical lemma is twofold:
\begin{itemize}
    \item We would like to argue that we can safely ignore inputs $x$ in the superposition where $\gcd(x, N) > 1$, so we will have $j_N(x) \neq 0 = \jac{x}{N}$.
    \item Even when $\gcd(x, N) = 1$, we would ideally be able to say that for any $x$ in our superposition, we have $\jac{x}{N} = \jac{x}{A}^2 \jac{x}{B} = \jac{x}{B}$, and hence after measuring we end up with a superposition over values $x$ of $\jac{x}{B} \ket{x}$.
    The problem is that this is only true if $\jac{x}{A} \in \left\{-1, 1\right\}$. This is true most of the time, but there will be a small fraction of inputs $x$ (specifically, those that share common factors with $A$ but not $B$) such that $\jac{x}{B} \in \left\{-1, 1\right\}$ but $\jac{x}{A} = 0 \Rightarrow \jac{x}{N} = 0$.
\end{itemize}
The following lemma informally says that because there are not many $x$'s where either of the above issues come up, we can safely ignore these technicalities: even though our algorithm prepares the state $\ket{\psi_2}$, we can safely pretend that it in fact prepares the simpler state $\ket{\psi_1}$ (by a trace distance argument).

\begin{lemma}\label{lemma:tracedistancecoprime}
    Define $\ket{\psi_1}$ and $\ket{\psi_2}$ to be the following unnormalized states:
    \begin{align*}
        \ket{\psi_1} &= \sum_{1 \leq x \leq M} \jac{x}{B} \ket{x} \\
        \ket{\psi_2} &= \sum_{1 \leq x \leq M} j_N(x) \ket{x}.
    \end{align*}
    Then the corresponding normalized states are $O(n^{(1-c)/4} \cdot (\log n)^{1/4})$-close in trace distance. Moreover, we have $\norm{\ket{\psi_1}}_2^2 = \frac{M\varphi(B)}{B}(1+o(1)).$ (Note that these two states may not be identical, for the aforementioned reasons.)
\end{lemma}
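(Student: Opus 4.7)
The plan is to handle the Euclidean norm $\norm{\ket{\psi_1}}_2$ and the Euclidean distance $\norm{\ket{\psi_1} - \ket{\psi_2}}_2$ directly, and only then convert to trace distance between the normalized pure states. The key first observation is to identify where the two vectors agree: for any $x \in [1, M]$ coprime to $N$, we have $\jac{x}{A}^2 = 1$, so $\jac{x}{B} = \jac{x}{N}$, which by the convention of Definition~\ref{def:jacobi} equals $j_N(x)$. Thus the two amplitudes match at every such $x$, and at any other $x$ each amplitude lies in $\{-1, 0, +1\}$ so their difference has magnitude at most $2$. It therefore suffices to control the count of ``bad'' $x \in [1, M]$ with $\gcd(x, N) > 1$.

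For this count I would use a union bound over primes: $\#\{x \in [1,M] : \gcd(x,N) > 1\} \leq M \sum_{p \mid N} 1/p$. Using the Algorithm~\ref{algo:highlevel} setup that every prime divisor of $N$ exceeds $n^c$, together with the count $\omega(N) \leq \log_2 N/\log_2(n^c) = O(n/\log n)$, this gives $\sum_{p \mid N} 1/p = O(n^{1-c}/\log n)$ and hence $\norm{\ket{\psi_1} - \ket{\psi_2}}_2^2 = O(M n^{1-c}/\log n)$. For the norm itself, inclusion--exclusion gives
\[ \norm{\ket{\psi_1}}_2^2 = \#\{x \in [1,M] : \gcd(x,B)=1\} = \sum_{d \mid B} \mu(d) \lfloor M/d \rfloor = M \varphi(B)/B + O(2^{\omega(B)}), \]
and $\varphi(B)/B = \prod_{p \mid B}(1-1/p) = 1 - O(n^{1-c}/\log n) = 1 - o(1)$ for $c > 1$. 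So I get $\norm{\ket{\psi_1}}_2^2 = M \varphi(B)/B \cdot (1+o(1)) = \Theta(M)$, provided the error $O(2^{\omega(B)})$ can be absorbed.

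Finally, for the trace-distance conclusion I would apply the standard estimate that, for normalized pure states $\ket{\phi_i} := \ket{\psi_i}/\norm{\ket{\psi_i}}_2$, adding and subtracting $\ket{\psi_2}/\norm{\ket{\psi_1}}_2$ and using $\bigl| \norm{\ket{\psi_2}}_2 - \norm{\ket{\psi_1}}_2 \bigr| \leq \norm{\ket{\psi_1}-\ket{\psi_2}}_2$ yields $\norm{\ket{\phi_1} - \ket{\phi_2}}_2 \leq 2\norm{\ket{\psi_1}-\ket{\psi_2}}_2/\norm{\ket{\psi_1}}_2$; the trace distance between two pure states is in turn bounded by this Euclidean distance. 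Plugging in the earlier estimates produces a trace distance of $O(\sqrt{n^{1-c}/\log n})$, which comfortably implies the claimed $O(n^{(1-c)/4}(\log n)^{1/4})$ bound. The only real subtlety I anticipate is controlling the inclusion--exclusion error $2^{\omega(B)}$ in the norm calculation: since every prime of $B$ exceeds $n^c$ and $B \leq \Bmax$, we have $\omega(B) = O(\log \Bmax / \log n)$, whereas $M \geq 2^{2\log \Bmax} = \Bmax^2$, so the error is polynomially smaller than the leading $\Theta(M)$ term. The rest is elementary bookkeeping.
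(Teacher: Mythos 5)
Your proposal is correct, and its skeleton matches the paper's: both arguments reduce everything to (i) the observation that the two amplitudes coincide on every $x$ coprime to $N$ and differ by $O(1)$ elsewhere, (ii) a count of the ``bad'' $x \in [1,M]$ with $\gcd(x,N)>1$ via $\sum_{p \mid N} M/p = O(M n^{1-c})$ using the standing assumption that every prime divisor of $N$ exceeds $n^c$, and (iii) the estimate $\norm{\ket{\psi_1}}_2^2 = \frac{M\varphi(B)}{B}(1+o(1))$ (the paper counts residues mod $B$ via Proposition~\ref{prop:countingmodsols}, you use M\"obius inclusion--exclusion with the $O(2^{\omega(B)})$ error absorbed because $2^{\omega(B)} \le B < \sqrt{M}$; both are fine). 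The genuine divergence is the last step: the paper converts the ratio $\norm{\ket{\psi_1}-\ket{\psi_2}}_2/\norm{\ket{\psi_1}}_2$ into trace distance by citing~\cite[Lemma 2.11]{DBLP:journals/iacr/Chen24}, which introduces an extra square root and produces exactly the stated $O\bigl(n^{(1-c)/4}(\log n)^{1/4}\bigr)$, whereas you use the elementary triangle-inequality bound $\norm{\ket{\phi_1}-\ket{\phi_2}}_2 \le 2\norm{\ket{\psi_1}-\ket{\psi_2}}_2/\norm{\ket{\psi_1}}_2$ together with the fact that trace distance between pure states is at most their Euclidean distance. This is self-contained and in fact yields the stronger bound $O\bigl(n^{(1-c)/2}/\sqrt{\log n}\bigr)$, which implies the lemma's statement since $c>1$; the paper's route buys nothing quantitatively here beyond reusing an off-the-shelf lemma, while yours avoids the external citation at the cost of a few lines of normalization bookkeeping. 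One small presentational point: state explicitly that you are using $M = 2^{\ell} > \Bmax^2 \ge B^2$ (so $M = \omega(B)$), since both the $o(1)$ in the norm estimate and your absorption of the inclusion--exclusion error rely on it.
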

\begin{proof}
    We will use~\cite[Lemma 2.11]{DBLP:journals/iacr/Chen24}. We first estimate $\norm{\ket{\psi_1}}_2$. We have:
    \begin{align*}
        \norm{\ket{\psi_1}}_2^2 &= \left|\left\{x \in [1, M]: \gcd(x, B) = 1\right\}\right| \\
        &= \sum_{j \in [1, B-1]: \gcd(x, B) = 1} \left|\left\{x \in [1, M]: x \equiv j \bmod{B}\right\}\right| \\
        &= \sum_{j \in [1, B-1]: \gcd(j, B) = 1} \left(\left\lfloor \frac{M-j}{B} \right\rfloor - \left\lceil \frac{1-j}{B} \right\rceil + 1\right) \text{ (Proposition~\ref{prop:countingmodsols})} \\
        &= \sum_{j \in [1, B-1]: \gcd(j, B) = 1} \left(\frac{M-j}{B} - \frac{1-j}{B} + O(1)\right) \\
        &= \frac{M-1}{B} \cdot \varphi(B) + O(\varphi(B)) \text{ (Corollary~\ref{cor:jacobicount})} \\
        &= \frac{M\varphi(B)}{B}(1 + o(1)) \text{ (noting that $M = \omega(B)$)}.
    \end{align*}
    Next, we upper bound $\norm{\ket{\psi_1} - \ket{\psi_2}}_2$. Let $p_1, \ldots, p_r$ be the distinct primes dividing $N$ but not $B$. Note that $r$ must be at most $n$, and moreover by assumption we have $p_i \geq n^c$ for all $i$. With this in mind, we have:
    \begin{align*}
        \norm{\ket{\psi_1} - \ket{\psi_2}}_2^2 &\leq O(1) \cdot \left|\left\{x \in [1, M]: \gcd(x, N) > 1\right\}\right| \\
        &\leq O(1) \cdot \sum_{i = 1}^r \left|\left\{x \in [1, M]: p_i \mid x\right\}\right| \\
        &\leq O(1) \cdot \sum_{i = 1}^r \frac{M}{p_i} \\
        &\leq O\left(\frac{M}{n^{c-1}}\right) \text{ ($r \leq n$)}.
    \end{align*}
    It then follows by~\cite[Lemma 2.11]{DBLP:journals/iacr/Chen24} that the trace distance we are concerned with is at most:
    \begin{align*}
        O\left(\sqrt{\frac{\norm{\ket{\psi_1} - \ket{\psi_2}}_2}{\norm{\ket{\psi_1}}_2}}\right) &\leq O\left(\sqrt[4]{\frac{M/n^{c-1}}{M\varphi(B)/B}}\right) \\
        &\leq O\left(n^{(1-c)/4} \cdot \left(\frac{B}{\varphi(B)}\right)^{1/4}\right) \\
        &\leq O\left(n^{(1-c)/4} \cdot (\log \log B)^{1/4}\right) \\
        &\leq O\left(n^{(1-c)/4} \cdot (\log n)^{1/4}\right),
    \end{align*}
    as desired.
\end{proof}
\noindent
We now prove Theorem~\ref{thm:highlevel}. Our proof breaks down into a few steps: we will first set up some notation and write out the state computed by the algorithm after the QFT. We then lower bound the amplitude this state places on certain values $y \in [0, 2^\ell - 1]$, and use this to complete the proof.

\paragraph{Step 1: notation and setup.} After computing the function $j_N(x)$, we have the state $$\frac{1}{2^{\ell/2}} \sum_{x = 1}^{2^\ell} j_N(x) \ket{x} \ket{0^S}.$$
We can now use Lemma \ref{lemma:tracedistancecoprime} to change this to the state $$\ket{\psi_1} = \sqrt{\frac{(1+o(1))B}{2^{\ell}\varphi(B)}} \sum_{1 \leq x \leq 2^\ell} \jac{x}{B}\ket{x},$$incurring a trace distance loss of only $O(n^{(1-c)/4}) = o(1)$. The normalization factor follows from Lemma \ref{lemma:tracedistancecoprime}. After the QFT, we obtain the state:
\begin{equation}\label{eq:postqftstateunsimplified}
    \sqrt{\frac{(1+o(1))B}{2^{2\ell}\varphi(B)}} \sum_{y = 0}^{2^{\ell} - 1} \left(\sum_{\substack{1 \leq x \leq 2^{\ell}}} \jac{x}{B} \exp\left(-\frac{2\pi i xy}{2^{\ell}}\right)\right) \ket{y}.
\end{equation}
At a high level, our analysis from this point mirrors the analysis by Shor~\cite{shor97} of his period-finding procedure; we would like to show that this state places $\Omega(1)$ weight on states $\ket{y}$ such that $y/2^{\ell}$ is close to a multiple of $1/B$ with numerator relatively prime to $B$.

\paragraph{Step 2: lower bounding the amplitude on $\ket{y}$.} In this section, we will lower bound the magnitude of the amplitude on $\ket{y}$ in Equation~\eqref{eq:postqftstateunsimplified}. Let $M = 2^{\ell}$ and $\epsilon = \frac{1}{2M}$ (this will be our target closeness bound). Note then by definition of $\ell$ (in Algorithm~\ref{algo:highlevel}) that $M > \Bmax^2$. It will be convenient for us to write $M = qB + r$, where $0 < r < B$ (we can assume $M$ is not divisible by $B$ since $B > 1$ is odd).

\begin{lemma}\label{lemma:qftlowerbound}
    Consider a fixed $y \in [0, M - 1]$ such that there exists an integer $k \in [1, B-1]$ and $\delta \in [-\epsilon, \epsilon]$ such that $\gcd(k, B) = 1$ and $\frac{y}{M} = \frac{k}{B} + \delta$. (Note in particular that this means $y \neq 0$.) Then we have:
    $$\left|\sum_{1 \leq x \leq M} \jac{x}{B} \exp\left(-\frac{2\pi i xy}{M}\right)\right| \geq \Omega(q\sqrt{B}).$$
\end{lemma}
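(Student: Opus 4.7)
The plan is to split the sum over $x \in [1, M]$ by residue class modulo $B$, since $\jac{x}{B}$ depends only on $x \bmod B$. Writing $x = jB + s$ with $s \in \{0, 1, \ldots, B-1\}$ and $j \ge 0$ such that $jB + s \le M$, and using $M = qB + r$ with $0 \le r < B$, the admissible values of $j$ number $q+1$ when $1 \le s \le r$ and $q$ otherwise. Since terms with $\gcd(s, B) > 1$ contribute zero, the boundary subtleties only affect the shape of the inner sum over $j$.

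Next, I exploit the key algebraic simplification $By/M = k + B\delta$, which yields $\exp(-2\pi i jBy/M) = \exp(-2\pi i jB\delta)$ (the integer multiple of $k$ vanishes). This factors the double sum as
\[
\left(\sum_{j=0}^{q-1} e^{-2\pi i jB\delta}\right) \left(\sum_{s=0}^{B-1} \jac{s}{B}\, e^{-2\pi i sy/M}\right) + e^{-2\pi i qB\delta} \sum_{s=1}^{r} \jac{s}{B}\, e^{-2\pi i sy/M},
\]
where the second (boundary) term accounts for the extra $j = q$ contributions when $1 \le s \le r$.

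For the inner geometric sum over $j$, the condition $|qB\delta| \le qB/(2M) \le 1/2$ lets me apply Lemma~\ref{lemma:finalsumofphasesbound} to obtain magnitude $\Theta(q)$. For the outer sum, I approximate $e^{-2\pi i sy/M} = e^{-2\pi i sk/B}\,e^{-2\pi i s\delta}$; since $M \ge \Bmax^2 \ge B^2$, each of the $B$ correction factors $e^{-2\pi i s\delta}$ differs from $1$ by $O(s/M)$ (Corollary~\ref{cor:phasebounds}), yielding total approximation error $O(B^2/M) = O(1)$. The resulting clean sum $\sum_{s \in \ZZ_B} \jac{s}{B}\, e^{-2\pi i sk/B}$ has magnitude exactly $\sqrt{B}$ by the Gauss-sum bound (Lemma~\ref{lemma:finalgausssumestimate}), using $\gcd(k, B) = 1$ and the squarefreeness of $B$. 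Hence the main term has magnitude $\Theta(q)\cdot(\sqrt{B} - O(1)) = \Theta(q\sqrt{B})$.

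Finally, the boundary term is bounded trivially by $r < B$, and since $M > B^2$ forces $q \ge B$, we have $q\sqrt{B} \ge B^{3/2} \gg B$, so the boundary error is absorbed, giving the claimed lower bound $\Omega(q\sqrt{B})$. The main obstacle is orchestrating the two distinct smallness regimes---namely $|s\delta| = O(1/B)$ needed for the Gauss-sum approximation and $|qB\delta| = O(1)$ needed for the geometric-sum lower bound---so that both follow from the single assumption $|\delta| \le 1/(2M)$; this works precisely because $\ell$ was chosen so that $M > \Bmax^2 \ge B^2$.
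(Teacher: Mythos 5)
Your proof is correct and follows essentially the same route as the paper's: decompose by residue class mod $B$, extract the geometric sum in $j$ of magnitude $\Theta(q)$ via Lemma~\ref{lemma:finalsumofphasesbound}, approximate the character sum by the Gauss sum of Lemma~\ref{lemma:finalgausssumestimate} with total error $O(B^2/M)=O(1)$, and absorb the lower-order terms (using, implicitly, that $B\geq n^c$ so $\sqrt{B}\gg 1$). The only cosmetic difference is bookkeeping: you isolate the $j=q$ boundary contribution as a single term bounded by $r<B$, whereas the paper absorbs an $O(1)$ per residue into the inner sum and carries an $O(\varphi(B))$ error to the end.
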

\begin{proof}
    The high-level idea is to use the fact that $\frac{y}{M} \approx \frac{k}{B}$ to replace $\frac{y}{M}$ in the LHS with $\frac{k}{B}$. This will of course not be completely correct, but we will carefully track the errors that arise from doing this. This will allow us to obtain the desired lower bound using a Gauss sum modulo $B$ (see Lemma~\ref{lemma:finalgausssumestimate}). We proceed as follows:
    \begin{align*}
        \sum_{1 \leq x \leq M} \jac{x}{B} \exp\left(-\frac{2\pi i xy}{M}\right) &= \sum_{1 \leq j \leq B-1} \sum_{\substack{1 \leq x \leq M\\x \equiv j \bmod{B}}} \jac{j}{B} \exp\left(-\frac{2\pi i xy}{M}\right) \\
        &= \sum_{1 \leq j \leq B-1} \sum_{0 \leq l \leq \floor*{\frac{M-j}{B}}} \jac{j}{B} \exp\left(-\frac{2\pi i (lB+j)y}{M}\right) \text{ (writing $x = lB+j$)} \\
        &= \sum_{j \in [1, B-1]} \left[\jac{j}{B} \exp\left(-\frac{2\pi i jy}{M}\right) \cdot \sum_{0 \leq l \leq \floor*{\frac{M-j}{B}}} \exp\left(-\frac{2\pi i lBy}{M}\right)\right].\stepcounter{equation}\tag{\theequation}\label{eqn:giantfourierexpression}
    \end{align*}
    We now analyze the inner sum. Note that $M-j = qB + r-j \Rightarrow \floor*{\frac{M-j}{B}} \in \left\{q-1, q\right\}$. We hence have:
    \begin{align*}
        \sum_{l = 0}^{\floor*{\frac{M-j}{B}}} \exp\left(-\frac{2\pi i lBy}{M}\right) &= \sum_{l = 0}^{q-1} \exp\left(-\frac{2\pi i lBy}{M}\right) + O(1) \\
        &= \sum_{l = 0}^{q-1} \exp\left(-2\pi i lB \left(\frac{k}{B} + \delta\right)\right) + O(1) \\
        &= \sum_{l = 0}^{q-1} \exp\left(-2\pi i lB\delta\right) + O(1)\\
        &= R + O(1),
    \end{align*}
    \noindent
    where we define $R := \sum_{l = 0}^{q-1} \exp(-2\pi i lB \delta)$. Since $|qB\delta| \leq qB\epsilon \leq 1/2$, we have by Lemma~\ref{lemma:finalsumofphasesbound} that $|R| = \Theta(q)$. Bearing this in mind, we plug this into and continue from Equation~\eqref{eqn:giantfourierexpression} as follows:
    \begin{align*}
        &\sum_{j \in [1, B-1]} \left[\jac{j}{B} \exp\left(-\frac{2\pi i jy}{M}\right) \cdot \sum_{0 \leq l \leq \floor*{\frac{M-j}{B}}} \exp\left(-\frac{2\pi i lBy}{M}\right)\right] \\
        = &\sum_{j \in [1, B-1]} \left[\jac{j}{B}\exp\left(-\frac{2\pi i jy}{M}\right) \cdot \left(R + O(1)\right)\right] \\
        =& R \cdot \left[\sum_{j \in [1, B-1]} \jac{j}{B}\exp\left(-\frac{2\pi i jy}{M}\right)\right] + O(\varphi(B)) \\
        =& R \cdot \left[\sum_{j \in [1, B-1]} \jac{j}{B}\exp\left(-2\pi i j\left(\frac{k}{B} + \delta\right)\right)\right] + O(\varphi(B)) \\
        =& R \cdot \left[\sum_{j \in [1, B-1]} \jac{j}{B} \exp\left(-\frac{2\pi i jk}{B}\right)(1 + O(B\epsilon))\right] + O(\varphi(B)) \text{ (Corollary \ref{cor:phasebounds})} \\
        =& R \cdot \left[\left(\sum_{j \in [1, B-1]} \jac{j}{B}\exp\left(-\frac{2\pi i jk}{B}\right)\right) + O(B\varphi(B)\epsilon)\right] + O(\varphi(B)) \\
        =& R \cdot \left[\sqrt{B} + O(1)\right] + O(\varphi(B)) \text{ (Lemma \ref{lemma:finalgausssumestimate}).}
    \end{align*}
    \noindent
    The final step follows from Gauss sums; we state their key properties (including Lemma~\ref{lemma:finalgausssumestimate}) in Section~\ref{sec:gausssums}. We also use the fact that $B\varphi(B)\epsilon \leq B^2/M \leq 1$. Finally, we lower bound the magnitude of this amplitude as follows:
    \begin{align*}
        &\left|R \cdot \left[\sqrt{B} + O(1)\right] + O(\varphi(B))\right| \\
        \geq &\left|R \cdot \sqrt{B}\right| - O\left(\left|R\right|\right) - O(\varphi(B)) \\
        \geq &\Omega(q\sqrt{B}) - O(q) - O(\varphi(B))\text{ (since $|R| = \Theta(q)$)} \\
        \geq &\Omega(q\sqrt{B}),
    \end{align*}
    as desired. To justify the final step, note that $q\sqrt{B} \geq q \sqrt{n^c} \gg q \geq B \geq \varphi(B)$.
\end{proof}

\begin{proof}[Proof of Theorem \ref{thm:highlevel}]
    Call $y \in [0, M-1]$ \emph{successful} if there exists an integer $k \in [1, B-1]$ and real $\delta \in [-\epsilon, \epsilon]$ such that $\gcd(k, B) = 1$ and $\frac{y}{M} = \frac{k}{B} + \delta$. By Lemma~\ref{lemma:qftlowerbound} and Equation~\eqref{eq:postqftstateunsimplified}, for any successful $y$, the probability that we measure and get the classical outcome $x^* = y$ is at least
    $$\Omega\left(\frac{B}{M^2\varphi(B)} \cdot (q\sqrt{B})^2\right) = \Omega\left(\frac{q^2B^2}{M^2\varphi(B)}\right) \geq \Omega\left(\frac{1}{\varphi(B)}\right).$$
    Next, we argue that there are at least $\varphi(B)$ successful values of $y$. Indeed, for any $k \in [1, B-1]$ such that $\gcd(k, B) = 1$, consider the interval $\left[\frac{k}{B} - \epsilon, \frac{k}{B} + \epsilon\right] \subset (0, 1)$. It has width $2\epsilon = \frac{1}{M}$, so there must be at least one multiple of $1/M$ in this interval i.e. there exists an integer $y_k$ such that $|\frac{y_k}{M} - \frac{k}{B}| \leq \epsilon$ (which in turn implies $y_k/M \in (0, 1) \Rightarrow y_k \in [1, M-1]$); in other words, $y_k$ is successful. Moreover, we claim that $y_k \neq y_{k'}$ for any $k \neq k'$. If this were not true, then the triangle inequality would force $\left|\frac{k}{B} - \frac{k'}{B}\right| \leq 2\epsilon \Rightarrow \frac{1}{B} \leq 2\epsilon$, which is false. Since there are $\varphi(B)$ many such values of $k$, there are at least $\varphi(B)$ distinct successful values of $y$, each of which we obtain with probability $\geq \Omega(\frac{1}{\varphi(B)})$. It follows that with $\Omega(1)$ probability, we will obtain such a $y$, as claimed.
    
    Now to finish, we have some integer $y = x^*$ such that $\left|\frac{y}{M} - \frac{k}{B}\right| \leq \frac{1}{2M} < \frac{1}{2\Bmax^2}$. It follows that $k/B$ is the closest fraction to $\frac{y}{M}$ with denominator at most $\Bmax$. Hence our algorithm will obtain $X_1 = k$ and $X_2 = B$ in the final step, and output the denominator $B$. This completes the proof of the theorem.
\end{proof}

\begin{remark}
	Algorithm~\ref{algo:highlevel}, and its associated Theorem~\ref{thm:highlevel}, receive as input a bound $\Bmax$ on the size of $B$, the squarefree part of the input integer $N$.
	Here we note that if $\Bmax$ is not known, the algorithm can still be used to find $B$ (or a prime dividing $N$) with high probability, and with roughly the same quantum circuit sizes, as follows.
	Via Lemma~\ref{lemma:highlevelboosting}, for any $\Bmax > B$ the probability of success of the algorithm can be boosted to $1-\epsilon$ using $O(\log 1/\epsilon)$ calls to Algorithm~\ref{algo:highlevel} and a small amount of classical computation.
	Starting with some $\Bmax = O(1)$, this larger algorithm can then be iterated, doubling $\log \Bmax$ every iteration until a value that divides $N$ is found and the algorithm halts.
	The number of iterations is expected to be $\log \log B$ and with high probability the algorithm will halt with $\log \Bmax < 2 \log B$ on the last iteration.
	This implies that the complexity of the algorithm will be only worse by a constant factor if $\Bmax$ is not supplied.
\end{remark}

\begin{remark}
    In Algorithm~\ref{algo:highlevel}, we take a superposition over all $x < O(\Bmax^2)$ in order to recover the period of a periodic function with period $\leq \Bmax$. This is in direct analogy with Shor's original period-finding subroutine~\cite{shor97}: to factor an integer $N$, Shor considers a periodic function with period $\leq N$ and takes a superposition over all inputs $x < O(N^2)$ to recover this period.

    Works subsequent to Shor's original paper~\cite{seifert2001using, DBLP:conf/pqcrypto/EkeraH17} show that it can suffice to use a superposition only over all $x < O(N^{1+\epsilon})$ for any $\epsilon > 0$. 
    The circuit must then be run independently $O(1/\epsilon)$ times; the period is subsequently recovered using a more sophisticated classical post-processing procedure. Analogously, we believe it is likely possible to modify Algorithm~\ref{algo:highlevel} to only take the superposition up to $O(\Bmax^{1+\epsilon})$, and run the resulting circuit $O(1/\epsilon)$ times and classically post-process the results as in~\cite{seifert2001using}. This would enable constant-factor improvements to the space and depth, which would be important when instantiating this circuit in practice as a proof of quantumness (see Corollary~\ref{corr:factoringpoq}).
\end{remark}

\begin{remark}[On the use of Gauss sums]\label{remark:gausssums}
    In this analysis, we made use of the ``strong half'' of Lemma~\ref{lemma:finalgausssumestimate}, namely its conclusion in the case where $\gcd(k, m) = 1$.
    On the other hand, the original analysis by~\cite{LPDS12} only makes use of the ``weak half'' i.e. the case where $\gcd(k, m) > 1$.
    The reason we use Gauss sums in all their power is because of our adaptation of Shor's analysis~\cite{shor97} to the setting where the initial superposition ranges only up to $\poly(Q)$. As in Shor's analysis, we lower bounded the amplitude of the final (post-QFT) state on each ``useful'' value and this requires a strong Gauss sum bound.
    
    If we wanted to get around the need for the strong Gauss sum bound, we could instead adapt the tighter analysis of Regev's factoring algorithm~\cite{Regev23} based on the Poisson summation formula. This would only require the ``weak'' Gauss sum result (as in~\cite{LPDS12}), but would likely require us to start with a discrete Gaussian superposition instead of a uniform superposition.
    Given the concrete overheads involved in preparing discrete Gaussian superpositions~\cite{grover2002creating, regev09} and our interest in the potential practicality of our algorithms, we chose to present the simpler algorithm with just a uniform superposition.

    We thank Oded Regev for pointing out to us that the use of ``strong'' Gauss sum bounds in our analysis could be circumvented.
\end{remark}

\section{Algorithm for Computing Jacobi Symbols}\label{sec:jacobi}

In this section we present one of our core technical contributions: an algorithm to compute the Jacobi symbol of $x$ mod $N$, where $N$ is classical and $x$ could be in superposition. We remark that our algorithm is also readily adaptable to computing the $\gcd$ of $x$ and $N$, much like other algorithms for computing the Jacobi symbol~\cite{schonhageeuc, thull1990uni,bach1996algorithmic, Moller2008OnSA}. When $N < 2^n$ and $x < 2^m$, our construction requires circuit-size $\widetilde{O}(n)$ and space $\widetilde{O}(m)$, which we can exploit due to our analysis in Section~\ref{sec:highlevel} which allows us to restrict $m \ll n$. 
In contrast, the 2012 result of Li, Peng, Du and Suter \cite{LPDS12}, together with near-linear time algorithms due to~\cite{schonhageeuc, thull1990uni,bach1996algorithmic, Moller2008OnSA} for computing Jacobi symbols, uses gates \emph{and space} $\widetilde{O}(n)$ (see Corollary~\ref{cor:highlevelwithschonhage} for a formal statement of this result by~\cite{LPDS12}). Our improvements are thus along the axes of space and, as we will see in Section~\ref{sec:jacobicorollaries}, depth.

We begin with an abstract algorithm (formalized in Theorem~\ref{thm:jacobimain}) that makes black-box use of circuits for multiplying and for computing the Jacobi symbol between equally-sized inputs. We then instantiate the circuits using explicit constructions for these subroutines~\cite{schonhageeuc, thull1990uni,bach1996algorithmic, Moller2008OnSA, nie_quantum_2023} in Section~\ref{sec:jacobicorollaries}.

\subsection{Abstract Construction}

Let us first summarize the main idea of our construction; we refer the reader to the technical overview in Section~\ref{sec:techoverview} for further discussion of our high-level approach. 

The standard algorithms for computing the Jacobi symbol are the extended Euclidean algorithm and the binary GCD; out of the box, neither one achieves the efficiency we desire, in particular when one input is much smaller than the other. 
Indeed, existing circuits for both algorithms require space proportional to the length of the larger input.
Nevertheless, we can draw inspiration from an observation about the extended Euclidean algorithm: after just one iteration, the larger input is reduced to roughly the same size as the smaller one, and the entire rest of the computation has cost that scales only with the length of the smaller input.
Our task thus boils down to ``just'' computing $N\bmod{x}$ in sublinear space.
The naive idea would be to use standard long division or a variant thereof.
However, while long division only needs to look at $O(\log(x))$ bits of $N$ at a time, we have to keep track of the intermediate values to make the computation reversible. Sadly, keeping track of these excess values appears to require linear space. 

Our solution, which is inspired by the binary GCD algorithm~\cite{bach1996algorithmic} and strongly resembles Montgomery reduction~\cite{Montgomery1985}, is to perform a ``flipped'' variant of long division.
In the Euclidean algorithm, the computation of $N \bmod x$ can be thought of as a way to find a multiple $\kappa x$ such that $N-\kappa x < x$ ---
 in particular, all but the lowest $m$ bits of $N-\kappa x$ are zero.
Long division achieves this by starting from the most significant bit (MSB) and iterating towards the least significant bit (LSB), zeroing out bits along the way.
We flip this idea on its head: we compute a multiple $kx$ such that all but the \emph{highest-order} $m$ bits of $N-kx$ are zero---by starting from the LSB and iterating towards the MSB. 
From here, there are three key ideas:
\begin{enumerate}
    \item The value $kx$ can be built up bit by bit via a loop in which each iteration is entirely reversible.
    We can easily uncompute intermediate states by performing a simple comparison that depends only on the MSBs of our current state.
    \item Furthermore, once a lower-order bit of $kx$ matches that of $N$, this bit will not depend on anything quantum (since $N$ is classical).
    This qubit is thus in a classically-known state, and unentangled from all other qubits in the computation.
    We can use this fact to recycle lower-order qubits as we set them.
    Ultimately, we only ever store the leading $O(m)$ bits of the partial value $y = kx$ that is being computed. 
    In Algorithm~\ref{algo:match_bits_of_N}, we will denote the register holding this sliding window of $O(m)$ bits as $z$.
    \item The above two ideas on their own already suffice to obtain a circuit that computes the Jacobi symbol in sublinear space $O(m)$. However, its gate count will be $O(nm)$. This is because, when setting one bit of $k$ at a time, we will ultimately have to carry out ``schoolbook'' arithmetic comprising additions.

    Instead, we can set $k$ in \emph{batches} of $m$ bits each. This allows us to benefit from fast integer multiplication algorithms~\cite{schonhage1971fast, Harvey21, kahanamokumeyer2024fast}, and will ultimately drop the gate count to $\widetilde{O}(n)$, which is near-linear.
\end{enumerate}
\noindent
At the end, our quantum computer will hold the value $(N-kx)/2^{n-m}$, where $kx$ is a multiple such that both of the following are true (where we let $m \geq \lceil \log x \rceil$): (a) $N - kx$ is divisible by $2^{n-m}$; and (b) $kx < 2^n$. We can now proceed from here in one of two ways:
\begin{itemize}
    \item This actually suffices to complete the first step of extended Euclidean and compute $N \bmod{x}$, since we have:
    \begin{align*}
        N &\equiv \frac{N - kx}{2^{n-m}} \cdot 2^{n-m} \pmod{x},
    \end{align*}
    and $2^{n-m} \bmod{x}$ is very efficiently computable. This is asymptotically a satisfactory solution, but adds unnecessary indirection and concrete efficiency to our overall algorithm, as we will see next.

    We thank Daniel J. Bernstein for pointing out this variant of our algorithm to us.

    \item Rather than taking the extra step as above to compute $N \bmod{x}$, we could just start from $(N - kx)/2^{n-m}$ and directly carry out the following chain of transformations to compute $\jac{x}{N}$, using the properties of the Jacobi symbol stated in Theorem~\ref{thm:jacobiproperties}:
    $$\jac{x}{N} \rightarrow \jac{N}{x} \rightarrow \jac{N-kx}{x} \rightarrow \jac{(N - kx)/2^{n-m}}{x}.$$
    This is the approach that we will take, as detailed in Algorithm~\ref{algo:jacobi}.
\end{itemize}

\noindent
We first describe how to utilize the value $kx$ to compute the Jacobi symbol (Algorithm \ref{algo:jacobi}), then present our procedure for obtaining the value $kx$ (Algorithm \ref{algo:match_bits_of_N}), and then finally prove our claimed performance guarantees.

\begin{algorithm}
    \caption{Reversible algorithm for computing Jacobi symbols}\label{algo:jacobi}
    \KwData{Efficiency parameters $m, n$ such that $m | n$, and positive integers $N < 2^n$ and $x < 2^m$}
    \KwResult{The Jacobi symbol $\left(\frac{x}{N}\right)$}
    \begin{enumerate}
        \item Compute the integers $x'$ and $t$ such that $x' = x/2^t$ is an odd integer.\label{algostep:jacobi:makeodd}
        \item Set the register $\mathsf{out}$, which will ultimately store the Jacobi symbol $\left(\frac{x}{N}\right)$, as follows: \label{algostep:jacobi:setout}
        \begin{equation}\label{eqn:init_out_alg_jacobi}
            \mathsf{out} = \left( \left( -1\right)^{\frac{N^2 - 1}{8} } \right)^t  
            \cdot \left(-1\right)^{\frac{(x'-1)(N-1)}{4}} 
            \cdot \left( \left( -1\right)^{\frac{x'^2 - 1}{8}} \right)^{n-m}.
        \end{equation}
        \item Use Algorithm~\ref{algo:match_bits_of_N} on inputs $N$ and $x'$ to compute some integer $z$, then set $s = \left\lfloor \frac{N}{2^{n-m}} \right\rfloor - z$. (We will show in Lemma~\ref{lemma:algojacobicorrect} that $s = (N - kx')/2^{n-m}$ for some integer $k$. Note that $s$ could be negative.) \label{algostep:jacobi:matchbits}
        
        \item Compute $\mathsf{out} \leftarrow \mathsf{out} \cdot \left(\frac{s}{x'}\right)$, where the Jacobi symbol  $\left(\frac{s}{x'}\right)$ is computed via the algorithms of \cite{schonhageeuc,thull1990uni,bach1996algorithmic, Moller2008OnSA}, made reversible via standard techniques~\cite{bennett_logical_1973, bennett_timespace_1989, levine_note_1990}.\label{algostep:jacobi:msize}
        \item Uncompute $s$, $z$, $x'$, and $t$ by running steps~\ref{algostep:jacobi:matchbits} and~\ref{algostep:jacobi:makeodd} in reverse. \label{algostep:jacobi:uncompute}
        \item Return $\mathsf{out}$.
    \end{enumerate}
\end{algorithm}

\begin{algorithm}
    \caption{Reversible subroutine for finding a value $kx$, whose $n-m$ lowest-order bits equal the corresponding bits of $N$}\label{algo:match_bits_of_N}
    \KwData{Efficiency parameters $m, n$ such that $m | n$, and positive integers $N < 2^n$ and $x < 2^m$ with $x$ odd}
    \KwResult{The highest-order $m$ bits of $y = kx$ for some $k$, such that $kx < 2^n$ and $N-kx$ is divisible by $2^{n-m}$.}
    \begin{enumerate}
        \item Set a $2m$ bit register $z = 0$. The low-order half of this register will ultimately store the leading $m$ bits of $y$. (All other, lower-order bits of $y$ match the corresponding bits of $N$, and thus do not need to be stored explicitly.)
        \item Precompute the following values:
        \begin{enumerate}\label{algostep:match_bits_of_N:precomputation}
            \item $x_\text{minv}$, the inverse of $x \bmod 2^m$
            \item $x_\text{inv} = \frac{1}{x}$ with $2m$ bits of precision
        \end{enumerate}
        \item Repeat the following for $j \in \{0,1,2,... \frac{n-2m}{m} \}$
        \begin{enumerate}\label{algostep:match_bits_of_N:loop}
            \item Compute an $m$-bit register $\mathsf{ctrl} = \left[ x_\text{minv} \cdot (N_j - z) \right] \bmod 2^m$, where $N_j = \lfloor N/2^{jm} \rfloor \bmod 2^m$.
            \item Compute $z \leftarrow z + \mathsf{ctrl} \cdot x$. Now, $z \bmod 2^m = N_j$.
            \item Using $z$ and $x_\mathrm{inv}$, uncompute $\mathsf{ctrl} \leftarrow \mathsf{ctrl} \oplus \lfloor \frac{z}{x}\rfloor$ via \cite[Lemma A.2]{rv24}.
            \item Zero the $m$ lowest-order bits of $z$ using $N_j$, then swap them with the highest-order $m$ bits: $z \leftarrow \lfloor \frac{z}{2^m} \rfloor $.
        \end{enumerate}
        \item Uncompute the values from Step~\ref{algostep:match_bits_of_N:precomputation}.\label{algostep:match_bits_of_N:unprecomputation}
        \item Return $z$.
    \end{enumerate}
\end{algorithm}

\begin{theorem}\label{thm:jacobimain}
Suppose there exists a quantum multiplication circuit on $t$-bit inputs with gates $G_\mathrm{mult}(t)$, space $S_\mathrm{mult}(t)$, and depth $D_\mathrm{mult}(t)$. Also, suppose there exists a quantum circuit for computing the Jacobi symbol between two $t$-bit inputs with gates $G_\mathrm{Jac}(t)$, space $S_\mathrm{Jac}(t)$, and depth $D_\mathrm{Jac}(t)$.

Let $N < 2^n$ be a classically known odd integer. Then, there exists a quantum circuit implementing the unitary
\begin{equation}
\ket{x} \ket{0}^{\otimes 2} \mapsto \ket{x} \ket{\jac{x}{N}},
\label{eq:jac-unitary}
\end{equation}
\sloppy acting on $m$-qubit quantum inputs $x \in [0, 2^m-1]$ that runs in gates $O\left(\frac{n}{m} \cdot G_\mathrm{mult}(m) + G_\mathrm{Jac}(m) + m \log m\right)$, space $O\left(\max(S_\mathrm{mult}(m), S_\mathrm{Jac}(m))\right)$, and depth $O\left((\frac{n}{m} + \log m) \cdot D_\mathrm{mult}(m) + D_\mathrm{Jac}(m) + \log^2 m\right)$.
\end{theorem}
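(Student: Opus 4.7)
The proof splits naturally into (i) verifying that Algorithm~\ref{algo:jacobi} correctly reduces the computation of $\jac{x}{N}$ to a Jacobi computation on two $m$-bit inputs, (ii) verifying that Algorithm~\ref{algo:match_bits_of_N} correctly produces the required integer $s$ while operating inside $O(m)$ qubits, and (iii) summing the resource costs. I would address each in turn.

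\textbf{Correctness of Algorithm~\ref{algo:jacobi}.} Write $x = 2^t x'$ with $x'$ odd, and let $k$ be such that $kx' < 2^n$ and $kx' \equiv N \pmod{2^{n-m}}$, so $s = (N-kx')/2^{n-m}$ is an integer. Chaining the identities in Theorem~\ref{thm:jacobiproperties}, one obtains
\begin{align*}
\jac{x}{N}
&= \jac{2}{N}^t \jac{x'}{N} && (\text{property \ref{item:jacprop2}})\\
&= \jac{2}{N}^t (-1)^{(x'-1)(N-1)/4}\, \jac{N}{x'} && (\text{property \ref{item:jacprop7}})\\
&= \jac{2}{N}^t (-1)^{(x'-1)(N-1)/4}\, \jac{N-kx'}{x'} && (\text{property \ref{item:jacprop4}})\\
&= \jac{2}{N}^t (-1)^{(x'-1)(N-1)/4}\, \jac{2}{x'}^{n-m} \jac{s}{x'} && (\text{property \ref{item:jacprop2}}),
\end{align*}
and substituting the closed forms for $\jac{2}{N}$ and $\jac{2}{x'}$ from property~\ref{item:jacprop6} gives exactly the prefactor in equation~\eqref{eqn:init_out_alg_jacobi}. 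Step~\ref{algostep:jacobi:msize} then multiplies in the final Jacobi symbol $\jac{s}{x'}$. The degenerate case $\gcd(x,N)>1$ is handled automatically, since $x'$ odd implies $\gcd(2^{n-m},x')=1$, hence $\gcd(s,x')=\gcd(N,x')=\gcd(x,N)/2^{\min(t,\cdot)}$, so $\jac{s}{x'}=0$ exactly when $\jac{x}{N}=0$.

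\textbf{Correctness of Algorithm~\ref{algo:match_bits_of_N}.} The core claim is a loop invariant: after iteration $j$, the conceptual partial sum $y_j = \sum_{j' \le j} \mathsf{ctrl}_{j'}\cdot x \cdot 2^{j'm}$ is a multiple of $x$, agrees with $N$ in its lowest $(j{+}1)m$ bits, and satisfies $y_j < 2^{(j+2)m}$, so that the quantum register $z$---which stores $\lfloor y_j/2^{(j+1)m}\rfloor$---fits in $2m$ bits and in fact in its low half. The choice $\mathsf{ctrl} = x_\mathrm{minv}\cdot(N_j - z)\bmod 2^m$ is exactly what sets the next $m$ low-order bits of $z + \mathsf{ctrl}\cdot x$ equal to $N_j$, because $x\cdot x_\mathrm{minv}\equiv 1\pmod{2^m}$. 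The most delicate point is the uncomputation of $\mathsf{ctrl}$ at step 3c: once the low $m$ bits of $z$ are known (equal to $N_j$), the remaining high-order contents determine $\mathsf{ctrl}$ via $\lfloor z/x\rfloor$, and multiplying by the precomputed approximation $x_\mathrm{inv}$ (of $2m$ bits of precision) together with a controlled correction recovers $\mathsf{ctrl}$ exactly, as in~\cite[Lemma A.2]{rv24}. Recycling the matched low bits (by XOR'ing with the classical constant $N_j$) is valid because those bits are, by the invariant, unentangled from the rest of the state.

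\textbf{Resource analysis and main obstacle.} Given correctness, the bounds are obtained by bookkeeping. Extracting $(x',t)$ and assembling the prefactor in equation~\eqref{eqn:init_out_alg_jacobi} uses a standard parallel shifter, costing $O(m\log m)$ gates and $O(\log^2 m)$ depth with no multiplications. Newton iteration computes $x_\mathrm{minv}$ and $x_\mathrm{inv}$ in $O(\log m)$ multiplications, contributing $O(\log m\cdot G_\mathrm{mult}(m))$ gates and $O(\log m\cdot D_\mathrm{mult}(m))$ depth. The main loop performs $O(n/m)$ sequential iterations of constant-depth $O(1)$ multiplications, for $O((n/m)G_\mathrm{mult}(m))$ gates and $O((n/m)D_\mathrm{mult}(m))$ depth. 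Finally, Step~\ref{algostep:jacobi:msize} invokes the base circuit on $m$-bit inputs once, adding $G_\mathrm{Jac}(m)$ gates, $D_\mathrm{Jac}(m)$ depth, and $S_\mathrm{Jac}(m)$ space. The space bound $O(\max(S_\mathrm{mult}(m),S_\mathrm{Jac}(m)))$ holds because at any moment only $z$, $\mathsf{ctrl}$, $x_\mathrm{inv}$, $x_\mathrm{minv}$, and the workspace of a single multiplier or the base Jacobi circuit are live. The main obstacle is the joint verification of the loop invariant and the reversible bit-recycling: one must simultaneously show that $z$ never exceeds $2^{2m}$, that $\mathsf{ctrl}$ can be unambiguously recovered from the post-update $z$, and that discarding matched low bits against classical $N_j$ does not leak entanglement. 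Once this is established, the rest of the theorem is straightforward accounting.
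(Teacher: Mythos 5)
Your proposal follows essentially the same route as the paper's proof: the same chain of Jacobi-symbol identities reducing $\jac{x}{N}$ to $\jac{s}{x'}$ (the paper's Lemma~\ref{lemma:algojacobicorrect}), the same inductive loop invariant for Algorithm~\ref{algo:match_bits_of_N} (Lemma~\ref{lemma:match_bits_induction} and Proposition~\ref{prop:zbounds}), and the same resource bookkeeping. However, two points in your write-up are too weak to close the argument as stated. First, the invariant you state, $y_j < 2^{(j+2)m}$ (hence only $z < 2^m$), does \emph{not} justify the step you yourself call the most delicate one: the uncomputation $\mathsf{ctrl} \leftarrow \mathsf{ctrl} \oplus \lfloor z/x\rfloor$. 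Since $\lfloor (z + \mathsf{ctrl}\cdot x)/x\rfloor = \mathsf{ctrl} + \lfloor z/x\rfloor$, this identity recovers $\mathsf{ctrl}$ only if the pre-update value satisfies $z < x$, and when $x \ll 2^m$ your bound $z < 2^m$ permits $z \geq x$. The paper proves the tight invariant $y_j < 2^{jm}\cdot x$ precisely to get $z_j \leq x-1$; in your formulation this strengthening does follow immediately from your own expression $y_j = x\sum_{j'\le j}\mathsf{ctrl}_{j'}2^{j'm} \le x\,(2^{(j+1)m}-1)$, but it needs to be stated and carried through the induction, since the space bound and reversibility hinge on it.

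Second, your accounting charges the precomputation of $x_\mathrm{minv}$ and $x_\mathrm{inv}$ at $O(\log m \cdot G_\mathrm{mult}(m))$ gates, which is not dominated by the theorem's stated gate bound $O\bigl(\tfrac{n}{m}G_\mathrm{mult}(m) + G_\mathrm{Jac}(m) + m\log m\bigr)$ for all admissible $m$ (e.g.\ $m$ close to $n$). The standard Newton-iteration analysis used in the paper (via \cite{knuth_art_1998}) gives gate and space cost $O(G_\mathrm{mult}(m))$, since the working precision doubles each iteration and the costs form a geometric series; only the \emph{depth} picks up the $O(\log m \cdot D_\mathrm{mult}(m))$ factor, which is exactly the $\log m$ term appearing in the theorem's depth bound. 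With these two repairs, your argument matches the paper's proof.
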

\begin{remark}
    We state our result in terms of explicitly writing down the Jacobi symbol for the purposes of clarity and to emphasize that this also works equally well in the classical reversible setting.
    However, as detailed in Section~\ref{sec:highlevel}, our algorithms will ultimately apply the Jacobi symbol in the \emph{phase}, thus computing the unitary $\ket{x} \mapsto j_N(x) \ket{x}$ for some function $j_N(x)$ that is equal to the Jacobi symbol whenever $\gcd(x, N) = 1$.
    This can easily be achieved either by using the above algorithm out of the box and applying a phase kickback~\cite{cemm}, or by going through Algorithms~\ref{algo:jacobi} and~\ref{algo:match_bits_of_N} and modifying them to write the output in the phase rather than in a single-qubit register.
\end{remark}

\paragraph{Correctness.} We first show the correctness of Algorithm~\ref{algo:match_bits_of_N}. For each value of the iteration index $j = 0, 1, \ldots, \frac{n-2m}{m}$ for the loop in step~\ref{algostep:match_bits_of_N:loop}, we define the following variables:
\begin{itemize}
    \item $z_j$: the value stored in register $z$ at the beginning of iteration $j$ of the loop;
    \item $\mathsf{ctrl}_j$: the value of $\mathsf{ctrl}$ computed in iteration $j$;
    \item $z'_j$: the intermediate value of $z$ in iteration $j$ i.e. $z_j + \mathsf{ctrl}_j \cdot x$; and
    \item $y_j$: this is defined as $z_j \cdot 2^{jm} + (N \bmod 2^{jm})$. This is the multiple of $x$ that we are tracking implicitly throughout the algorithm; we use $y_j$ to represent the value of this multiple at the beginning of iteration $j$.
\end{itemize}
Also note the definition of $N_j$ in step~\ref{algostep:match_bits_of_N:loop} of Algorithm~\ref{algo:match_bits_of_N}, and let $z_{(n-m)/m}$ denote the value stored in register $z$ at the end of the algorithm i.e. the final output. As we will see, correctness of Algorithm~\ref{algo:match_bits_of_N} boils down to the following lemma:

\begin{lemma}\label{lemma:match_bits_induction}
    For all $j = 0, 1, \ldots, (n-m)/m$, all of the following hold:
    \begin{enumerate}
        \item\label{item:congmodpow2} $y_j \equiv N \pmod{2^{jm}}$;
        \item\label{item:bounded} $0 \leq y_j < 2^{jm} \cdot x$; and
        \item\label{item:divbyx} $y_j$ is divisible by $x$.
    \end{enumerate}
\end{lemma}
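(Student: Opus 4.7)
The plan is to prove this by induction on $j$, with the base case handled directly and the inductive step driven by a single iteration of the loop in step~\ref{algostep:match_bits_of_N:loop} of Algorithm~\ref{algo:match_bits_of_N}. For the base case $j=0$, since the register $z$ is initialized to $0$ we have $y_0 = 0 \cdot 1 + (N \bmod 1) = 0$, which trivially satisfies all three conditions.

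For the inductive step, the main object to establish is the recurrence
$$y_{j+1} \;=\; y_j + \mathsf{ctrl}_j \cdot x \cdot 2^{jm}.$$
To derive this, I would first observe that the choice $\mathsf{ctrl}_j = x_{\mathrm{minv}} (N_j - z_j) \bmod 2^m$, combined with $x_{\mathrm{minv}} \cdot x \equiv 1 \pmod{2^m}$, gives $z'_j := z_j + \mathsf{ctrl}_j \cdot x \equiv N_j \pmod{2^m}$. This is what justifies the zeroing in step (d), so that after the shift we have $z_{j+1} = (z'_j - N_j)/2^m$. Unfolding the definition of $y_{j+1}$ and using the identity $(N \bmod 2^{(j+1)m}) = N_j \cdot 2^{jm} + (N \bmod 2^{jm})$, a short calculation collapses to the stated recurrence.

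Once the recurrence is in hand, the three claims follow with little effort. Property~\ref{item:congmodpow2} is immediate from the very definition of $y_{j+1} = z_{j+1} \cdot 2^{(j+1)m} + (N \bmod 2^{(j+1)m})$. Property~\ref{item:bounded} follows from the induction hypothesis $y_j < 2^{jm} x$ together with $\mathsf{ctrl}_j \leq 2^m - 1$, giving $y_{j+1} < 2^{jm} x + (2^m-1) x \cdot 2^{jm} = 2^{(j+1)m} x$, and non-negativity is preserved since both summands are non-negative. Property~\ref{item:divbyx} is trivial: $x$ divides $y_j$ by hypothesis and obviously divides $\mathsf{ctrl}_j \cdot x \cdot 2^{jm}$.

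The main obstacle, and the only spot requiring care, is carefully deriving the recurrence for $y_{j+1}$ from the algorithm's operations, since one must track the distinction between the physically stored register $z$ (which is shifted and trimmed at each step) and the ``virtual'' value $y$ whose lower bits are implicitly represented by $N$ itself. Along the way I would also verify that register sizes are respected, in particular that $z_{j+1} < 2^m$ so the $2m$-bit register $z$ never overflows; this follows from the bound in property~\ref{item:bounded} via $z_{j+1} \cdot 2^{(j+1)m} \leq y_{j+1} < 2^{(j+1)m} x < 2^{(j+1)m} \cdot 2^m$. Everything else is elementary.
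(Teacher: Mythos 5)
Your proposal is correct and follows essentially the same route as the paper: induction on $j$, establishing $z'_j \equiv N_j \pmod{2^m}$ via the modular inverse $x_{\mathrm{minv}}$, deriving the recurrence $y_{j+1} = y_j + 2^{jm}\cdot \mathsf{ctrl}_j \cdot x$, and concluding the bound from $\mathsf{ctrl}_j \le 2^m - 1$. Your added register-size check corresponds to the paper's separate Proposition~\ref{prop:zbounds} and is not needed for the lemma itself, but it is a harmless (and sensible) addition.
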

\begin{proof}
    Item~\ref{item:congmodpow2} is straightforward. To establish the other two items, we proceed by induction on $j$. For the base case where $j = 0$, we have $z_0 = 0 \Rightarrow y_0 = 0$. All three conditions are now evident. Now, for the inductive step, assume that we have shown the result for some $j \geq 0$ and wish to show the result for $j+1$. 
    We will examine the execution of iteration $j$ of step~\ref{algostep:match_bits_of_N:loop} of Algorithm~\ref{algo:match_bits_of_N}  to complete the induction. Firstly, by definition we have the following:
    \begin{align*}
        z'_j &= z_j + \mathsf{ctrl}_j \cdot x \\
        &\equiv z_j + x_{\mathrm{minv}} \cdot (N_j - z_j) \cdot x \pmod{2^m} 
        \\
        &\equiv N_j \pmod{2^m} \\
        \Rightarrow 2^{jm} z'_j &\equiv 2^{jm} N_j \pmod{2^{(j+1)m}} \\
        \Rightarrow 2^{jm} z'_j + \left(N \bmod 2^{jm}\right) &\equiv 2^{jm} N_j + \left(N \bmod{2^{jm}}\right) \pmod{2^{(j+1)m}} \\
        &\equiv N \pmod{2^{(j+1)m}}.\stepcounter{equation}\tag{\theequation}\label{eqn:congruencepartial}
    \end{align*}
    This implies that the $m$ lowest-order bits of $z'_j$ match $N_j$ and thus we can indeed use the bits of $N_j$ to zero out the $m$ lowest-order bits of $z'_j$. Hence we have $$2^{jm} z'_j + (N \bmod 2^{jm}) = 2^{(j+1)m} z_{j+1} + (N \bmod 2^{(j+1)m}),$$
    Now, we also have that:
    \begin{align*}
        y_{j+1} &= 2^{(j+1)m} z_{j+1} + (N \bmod 2^{(j+1)m}) \\
        &= 2^{jm} z'_j + (N \bmod 2^{jm}) \\
        &= 2^{jm} (z_j + \mathsf{ctrl}_j \cdot x) + (N \bmod 2^{jm}) \\
        &= y_j + 2^{jm} \cdot \mathsf{ctrl}_j \cdot x.\stepcounter{equation}\tag{\theequation}\label{eqn:yrecurrence}
    \end{align*}
    Since $y_j$ is divisible by $x$ by the induction hypothesis, this immediately implies condition~\ref{item:divbyx}. Finally, we can obtain condition~\ref{item:bounded} since:
    \begin{align*}
        y_{j+1} &= y_j + 2^{jm} \cdot \mathsf{ctrl}_j \cdot x \\
        &< 2^{jm} \cdot x + 2^{jm} \cdot \mathsf{ctrl}_j \cdot x \text{ (induction hypothesis)} \\
        &\leq 2^{jm} \cdot x + 2^{jm} \cdot (2^m - 1) \cdot x \text{ (since $\mathsf{ctrl}_j$ is reduced mod $2^m$)} \\
        &= 2^{(j+1)m}x.
    \end{align*}
\end{proof}
\noindent
We complete our proof of correctness for Algorithm~\ref{algo:match_bits_of_N} with the following claim:
\begin{proposition}\label{prop:zbounds}
    For all $j = 0, 1, \ldots, (n-m)/m$, we have $0 \leq z_j < x$. For $j = 0, 1, \ldots, (n-2m)/m$, we have $0 \leq z'_j < 2^{2m}$. Moreover, we have $\mathsf{ctrl}_j = \left\lfloor \frac{z'_j}{x} \right\rfloor$. (This establishes that $2m$ qubits are sufficient to hold the $z$ register, and that the uncomputation of $\mathsf{ctrl}_j$ proceeds correctly.)
\end{proposition}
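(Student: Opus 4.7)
The proof plan is to extract all three claims essentially as corollaries of Lemma~\ref{lemma:match_bits_induction}, with only elementary integer-arithmetic arguments in between. The first step is to observe that by the definition $y_j = 2^{jm} z_j + (N \bmod 2^{jm})$ and the fact that $0 \leq N \bmod 2^{jm} < 2^{jm}$, the inequality $0 \leq y_j < 2^{jm} \cdot x$ from item~\ref{item:bounded} of the lemma immediately forces $0 \leq 2^{jm} z_j \leq y_j < 2^{jm} \cdot x$, i.e., $0 \leq z_j < x$. This handles the first claim of the proposition.

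Next, for $j \leq (n-2m)/m$, I would bound $z'_j = z_j + \mathsf{ctrl}_j \cdot x$ directly: using the bound $z_j < x$ just established and the fact that $\mathsf{ctrl}_j$ is reduced mod $2^m$ (so $\mathsf{ctrl}_j \leq 2^m - 1$) together with $x < 2^m$, we obtain
\[
0 \leq z'_j \leq (x-1) + (2^m - 1)\cdot x < 2^m \cdot x < 2^{2m},
\]
which shows that $2m$ qubits suffice to hold $z$ throughout the loop.

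Finally, for the identity $\mathsf{ctrl}_j = \lfloor z'_j / x \rfloor$, I would simply use the equation $z'_j = \mathsf{ctrl}_j \cdot x + z_j$ together with $0 \leq z_j < x$. This is precisely the defining property of integer division with remainder, so it yields $\lfloor z'_j / x \rfloor = \mathsf{ctrl}_j$ and $z'_j \bmod x = z_j$. This justifies the uncomputation step that invokes \cite[Lemma A.2]{rv24}.

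Overall I do not anticipate a genuine obstacle: the substantive content is already packed into Lemma~\ref{lemma:match_bits_induction}, and this proposition is essentially a reinterpretation of its conclusions in terms of the quantities $z_j$, $z'_j$, and $\mathsf{ctrl}_j$ that appear in the algorithm. The only thing requiring a little care is to remember that $\mathsf{ctrl}_j$ is computed modulo $2^m$ and thus is a priori bounded by $2^m - 1$, which is what makes the $z'_j < 2^{2m}$ bound tight enough to live in a $2m$-qubit register.
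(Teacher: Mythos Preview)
Your proposal is correct and follows essentially the same approach as the paper: derive $0 \leq z_j < x$ from the bound $y_j < 2^{jm} x$ in Lemma~\ref{lemma:match_bits_induction}, then use $z'_j = z_j + \mathsf{ctrl}_j \cdot x$ with $0 \leq z_j < x$ and $\mathsf{ctrl}_j \leq 2^m - 1$ to get both the $z'_j < 2^{2m}$ bound and the identity $\mathsf{ctrl}_j = \lfloor z'_j / x \rfloor$. The only cosmetic difference is the order in which the last two claims are established.
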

\begin{proof}
    Since $y_j = z_j \cdot 2^{jm} + (N \bmod 2^{jm})$, we have:
    \begin{align*}
        z_j &= \left\lfloor \frac{y_j}{2^{jm}} \right\rfloor \\
        &\leq \left\lfloor \frac{(2^{jm}-1)x}{2^{jm}} \right\rfloor \text{ (Lemma~\ref{lemma:match_bits_induction})} \\
        &\in [0, x-1].
    \end{align*}
    Since $z'_j = z_j + \mathsf{ctrl}_j \cdot x$, it follows that $\mathsf{ctrl}_j = \left\lfloor \frac{z'_j}{x} \right\rfloor$. Finally, we have:
    \begin{align*}
        z'_j &= z_j + \mathsf{ctrl}_j \cdot x \\
        &< (1 + \mathsf{ctrl}_j) \cdot x \\
        &\leq 2^m \cdot x \text{ ($\mathsf{ctrl}_j$ is reduced mod $2^m$)} \\
        &< 2^{2m},
    \end{align*}
    as desired.
\end{proof}
\noindent
Finally, we show the correctness of Algorithm~\ref{algo:jacobi}:
\begin{lemma}\label{lemma:algojacobicorrect}
    Algorithm~\ref{algo:jacobi} correctly computes the Jacobi symbol $\jac{x}{N}$. Moreover, we have $|s| < 2^m$ (thus the step of computing $\jac{s}{x'}$ only needs to work with $m$-bit inputs).
\end{lemma}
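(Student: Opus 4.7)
The plan is to verify, in order: (i) that the value $s$ computed in Algorithm~\ref{algo:jacobi} can be written as $(N - kx')/2^{n-m}$ for some integer $k$; (ii) the claimed bound $|s| < 2^m$; and (iii) that multiplying the initial value of $\mathsf{out}$ from Equation~\eqref{eqn:init_out_alg_jacobi} by $\jac{s}{x'}$ indeed yields $\jac{x}{N}$, via the chain of Jacobi-symbol identities described in the overview.

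For step (i), I would unpack what the subroutine produces. Let $j^* = (n-m)/m$, and let $z$ denote $z_{j^*}$, the register returned by Algorithm~\ref{algo:match_bits_of_N}. By construction $y_{j^*} = z \cdot 2^{n-m} + (N \bmod 2^{n-m})$, and by item~\ref{item:divbyx} of Lemma~\ref{lemma:match_bits_induction} this value is divisible by $x'$, so I can write $y_{j^*} = k x'$ for an integer $k$. Moreover, item~\ref{item:congmodpow2} of the same lemma gives $y_{j^*} \equiv N \pmod{2^{n-m}}$, so $2^{n-m}$ divides $N - kx'$. Then
\[
s \;=\; \left\lfloor \frac{N}{2^{n-m}}\right\rfloor - z \;=\; \frac{N - (N \bmod 2^{n-m}) - z\cdot 2^{n-m}}{2^{n-m}} \;=\; \frac{N - y_{j^*}}{2^{n-m}} \;=\; \frac{N - kx'}{2^{n-m}},
\]
as claimed. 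For step (ii), Proposition~\ref{prop:zbounds} yields $0 \le z < x' < 2^m$, and $N < 2^n$ gives $0 \le \lfloor N/2^{n-m}\rfloor < 2^m$, so $|s| < 2^m$.

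For step (iii), I would chain together the properties in Theorem~\ref{thm:jacobiproperties}. Since $x = 2^t x'$ with $x'$ odd, properties~\ref{item:jacprop2} and~\ref{item:jacprop6} yield $\jac{x}{N} = \bigl((-1)^{(N^2-1)/8}\bigr)^{t} \jac{x'}{N}$. Quadratic reciprocity (property~\ref{item:jacprop7}) then gives $\jac{x'}{N} = (-1)^{(x'-1)(N-1)/4} \jac{N}{x'}$; note that this identity holds even when $\gcd(x', N) > 1$, in which case both sides vanish. Next, property~\ref{item:jacprop4} gives $\jac{N}{x'} = \jac{N - kx'}{x'}$, and since $N - kx' = 2^{n-m} s$, properties~\ref{item:jacprop2} and~\ref{item:jacprop6} again give $\jac{N-kx'}{x'} = \bigl((-1)^{(x'^2-1)/8}\bigr)^{n-m} \jac{s}{x'}$. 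Multiplying these factors together reproduces exactly the initial value of $\mathsf{out}$ set in Equation~\eqref{eqn:init_out_alg_jacobi}, times $\jac{s}{x'}$; so after step~\ref{algostep:jacobi:msize} the register $\mathsf{out}$ holds $\jac{x}{N}$.

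The main subtlety I anticipate is that $s$ may be negative, so I would point out that (a) the Jacobi symbol $\jac{\,\cdot\,}{x'}$ is defined on all integers (for instance by combining property~\ref{item:jacprop5} with property~\ref{item:jacprop4} to reduce to a positive residue), and that (b) property~\ref{item:jacprop2}, the multiplicativity $\jac{ab}{x'} = \jac{a}{x'}\jac{b}{x'}$ used when factoring out $2^{n-m}$, applies for any integer $a$. With this observation the chain above is valid regardless of the sign of $s$, and the subroutine of~\cite{schonhageeuc,thull1990uni,bach1996algorithmic,Moller2008OnSA} can be invoked directly on the (possibly negative) $m$-bit integer $s$ together with the odd positive $m$-bit integer $x'$.
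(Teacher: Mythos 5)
Your proposal is correct and follows essentially the same route as the paper's proof: it derives $s = (N-kx')/2^{n-m}$ and the bound $|s| < 2^m$ from Lemma~\ref{lemma:match_bits_induction} and Proposition~\ref{prop:zbounds}, and then chains properties~\ref{item:jacprop2},~\ref{item:jacprop6},~\ref{item:jacprop7}, and~\ref{item:jacprop4} of Theorem~\ref{thm:jacobiproperties} in the same order to match Equation~\eqref{eqn:init_out_alg_jacobi}. Your explicit remark that the chain remains valid for negative $s$ is a small additional clarification the paper leaves implicit, but it does not change the argument.
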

\begin{proof}
    We retain all notation introduced in Algorithm~\ref{algo:jacobi}. We first show that $|s| < 2^m$ and that there exists an integer $k$ such that $N - kx' = 2^{n-m} \cdot s$. To this end, recall that the output of Algorithm~\ref{algo:match_bits_of_N} is exactly $z = z_{(n-m)/m}$, where $z_{(n-m)/m} \cdot 2^{n-m} + (N \bmod 2^{n-m}) = y_{(n-m)/m}$ is equal to $kx'$ for some integer $k$ by Lemma~\ref{lemma:match_bits_induction}. Then note firstly that:
    \begin{align*}
        |s| &\leq \max\left(\left\lfloor \frac{N}{2^{n-m}} \right\rfloor, z_{(n-m)/m}\right) \\
        &< 2^m,
    \end{align*}
    since $N < 2^n$ and $z_{(n-m)/m} < x \leq 2^m$ by Proposition~\ref{prop:zbounds}. Secondly, we have:
    \begin{align*}
        2^{n-m} \cdot s &= 2^{n-m} \cdot \left(\left\lfloor \frac{N}{2^{n-m}} \right\rfloor - z_{(n-m)/m}\right) \cdot \\
        &= 2^{n-m} \cdot \left\lfloor \frac{N}{2^{n-m}} \right\rfloor - 2^{n-m} \cdot z_{(n-m)/m} \\
        &= N - \left(N \bmod 2^{n-m}\right) - 2^{n-m} \cdot z_{(n-m)/m} \\
        &= N - y_{(n-m)/m} \\
        &= N - kx'.
    \end{align*}
    The conclusion will now follow from the standard properties of the Jacobi symbol stated in Theorem~\ref{thm:jacobiproperties}:
    \begin{align*}
        \left( \frac{x}{N} \right) &= \left( \frac{2^tx'}{N} \right) \\
        &= \left( \left( -1\right)^{\frac{N^2 - 1}{8} } \right)^t  \cdot \left( \frac{x'}{N} \right) \text{ (Theorem~\ref{thm:jacobiproperties}, properties~\ref{item:jacprop2} and~\ref{item:jacprop6})} \\
        &= \left( \left( -1\right)^{\frac{N^2 - 1}{8} } \right)^t  \cdot  \left(-1\right)^{\frac{(x'-1)(N-1)}{4}} \cdot \left( \frac{N}{x'} \right) \text{ (Theorem~\ref{thm:jacobiproperties}, property~\ref{item:jacprop7})} \\
        &= \left( \left( -1\right)^{\frac{N^2 - 1}{8} } \right)^t  \cdot  \left(-1\right)^{\frac{(x'-1)(N-1)}{4}} \cdot \left( \frac{N-kx'}{x'} \right) \text{ (Theorem~\ref{thm:jacobiproperties}, property~\ref{item:jacprop4})} \\
        &= \left( \left( -1\right)^{\frac{N^2 - 1}{8} } \right)^t  \cdot  \left(-1\right)^{\frac{(x'-1)(N-1)}{4}} \cdot \left( \frac{2^{n-m} \cdot s}{x'} \right) \\
        &= \left( \left( -1\right)^{\frac{N^2 - 1}{8} } \right)^t  \cdot  \left(-1\right)^{\frac{(x'-1)(N-1)}{4}} \cdot 
        \left( \left( -1\right)^{\frac{x'^2 - 1}{8} } \right)^{n-m} \cdot
        \left( \frac{s}{x'} \right), \text{ (Theorem~\ref{thm:jacobiproperties}, properties~\ref{item:jacprop2} and~\ref{item:jacprop6})}
    \end{align*}
    which implies the conclusion.
\end{proof}

\paragraph{Efficiency.} We now turn our attention to establishing the desired efficiency guarantees:
\begin{lemma}
    Algorithm~\ref{algo:jacobi} achieves the efficiency guarantees claimed in Theorem~\ref{thm:jacobimain}.
\end{lemma}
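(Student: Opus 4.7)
The plan is to walk through Algorithm~\ref{algo:jacobi} step by step, tallying gates, depth, and space, with the understanding that the dominant cost comes from the loop inside Algorithm~\ref{algo:match_bits_of_N} and that every other step contributes a lower-order term. First I would handle Step~\ref{algostep:jacobi:matchbits}: the loop in step~\ref{algostep:match_bits_of_N:loop} of Algorithm~\ref{algo:match_bits_of_N} runs $(n - 2m)/m + 1 = O(n/m)$ times, and each iteration performs a constant number of $m$-bit multiplications, namely one to obtain $\mathsf{ctrl}_j = x_\mathrm{minv} \cdot (N_j - z) \bmod 2^m$, one for the update $z \leftarrow z + \mathsf{ctrl}_j \cdot x$, and the invocation of~\cite[Lemma A.2]{rv24} to uncompute $\mathsf{ctrl}_j$ (which is itself a constant number of $m$-bit multiplications using $x_\mathrm{inv}$). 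The iterations are sequential because iteration $j+1$ depends on the updated $z$, so this contributes $O((n/m)\cdot G_\mathrm{mult}(m))$ gates and $O((n/m)\cdot D_\mathrm{mult}(m))$ depth; the register $z$ stays within $2m$ qubits by Proposition~\ref{prop:zbounds}, and the $m$ trailing low-order bits produced each iteration are zeroed using classical bits of $N$, so they do not accumulate.

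Next I would account for the precomputation in step~\ref{algostep:match_bits_of_N:precomputation}: both $x_\mathrm{minv}$ and $x_\mathrm{inv}$ can be computed by the standard Newton iteration, which uses $O(\log m)$ sequential $m$-bit multiplications and thus $O(\log m \cdot G_\mathrm{mult}(m))$ gates and $O(\log m \cdot D_\mathrm{mult}(m))$ depth. Step~\ref{algostep:jacobi:msize} directly invokes the base-case Jacobi circuit on inputs of size $m$ (justified by Lemma~\ref{lemma:algojacobicorrect}, which gives $|s| < 2^m$), contributing $G_\mathrm{Jac}(m)$ gates, $D_\mathrm{Jac}(m)$ depth, and $S_\mathrm{Jac}(m)$ space. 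Steps~\ref{algostep:jacobi:makeodd} and~\ref{algostep:jacobi:setout} require extracting the odd part $x'$ of $x$, computing $t$, and initializing $\mathsf{out}$. Finding $t$ (the number of trailing zeros of $x$) and the corresponding barrel shift $x \mapsto x/2^t$ can be implemented reversibly using $O(m\log m)$ gates and $O(\log^2 m)$ depth via standard prefix-OR and log-depth shifter constructions; the sign factors in Equation~\eqref{eqn:init_out_alg_jacobi} only require looking at the low-order bits of $x'$, $N$, and the parity of $t$ and $n-m$, contributing $O(m)$ gates.

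For space, the dominant registers are $z$ (at most $2m$ qubits), $x_\mathrm{inv}$ and $x_\mathrm{minv}$ (each $O(m)$ qubits), the $m$-qubit input $x$, and the ancilla space required by whichever subroutine is active at a given moment. Since the multiplications in Algorithm~\ref{algo:match_bits_of_N}, the Newton iterations, and the final Jacobi call are all invoked sequentially, their ancilla workspaces can be reused, giving total space $O(\max(S_\mathrm{mult}(m), S_\mathrm{Jac}(m)))$. Step~\ref{algostep:jacobi:uncompute} simply runs steps~\ref{algostep:jacobi:makeodd} and~\ref{algostep:jacobi:matchbits} in reverse and so only doubles all costs. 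Summing, I get gates $O((n/m)\cdot G_\mathrm{mult}(m) + G_\mathrm{Jac}(m) + m\log m)$ and depth $O((n/m + \log m)\cdot D_\mathrm{mult}(m) + D_\mathrm{Jac}(m) + \log^2 m)$, matching the statement of Theorem~\ref{thm:jacobimain}.

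The main obstacles are really bookkeeping rather than conceptual: I need to confirm that uncomputing $\mathsf{ctrl}_j$ via~\cite[Lemma A.2]{rv24} genuinely fits within a constant number of $m$-bit multiplications (which is plausible given that Proposition~\ref{prop:zbounds} already establishes $\mathsf{ctrl}_j = \lfloor z'_j/x \rfloor$ with $z'_j < 2^{2m}$), and to verify that the barrel shift and trailing-zero count in step~\ref{algostep:jacobi:makeodd} can be done within the claimed $O(m\log m)$ gates and $O(\log^2 m)$ depth without requiring ancilla beyond the stated space bound.
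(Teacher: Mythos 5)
Your overall route is the same as the paper's: tally each step, observe that the loop in Algorithm~\ref{algo:match_bits_of_N} dominates with $O(n/m)$ sequential iterations each costing a constant number of $m$-bit multiplications and additions, invoke Lemma~\ref{lemma:algojacobicorrect} to justify that step~\ref{algostep:jacobi:msize} is an $m$-bit Jacobi computation, handle step~\ref{algostep:jacobi:makeodd} with an $O(m\log m)$-gate, $O(\log^2 m)$-depth trailing-zero count and shift, and argue that ancilla workspace is reused across sequential subroutine calls so the space is $O(\max(S_\mathrm{mult}(m), S_\mathrm{Jac}(m)))$. All of that matches.

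There is, however, one genuine slip in your accounting of the precomputation (step~\ref{algostep:match_bits_of_N:precomputation}). You charge the Newton iterations for $x_\mathrm{minv}$ and $x_\mathrm{inv}$ as ``$O(\log m)$ sequential $m$-bit multiplications,'' giving $O(\log m \cdot G_\mathrm{mult}(m))$ gates, and then in your final summation this term silently disappears: the claimed gate bound $O\left(\frac{n}{m}\cdot G_\mathrm{mult}(m) + G_\mathrm{Jac}(m) + m\log m\right)$ does not dominate $\log m \cdot G_\mathrm{mult}(m)$ in general, since the theorem is stated for black-box $G_\mathrm{mult}, G_\mathrm{Jac}$ and permits $n/m = O(1)$ while $\log m$ is large. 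So the bound you state at the end does not follow from your own intermediate estimates unless you additionally assume $\log m = O(n/m)$, which is not given. The fix is the standard one the paper uses: in Newton iteration the working precision doubles each round, so the multiplication sizes grow geometrically and the \emph{total} gate (and space) cost of division is $O(G_\mathrm{mult}(m))$, i.e.\ the same as a single full-size multiplication up to a constant factor; only the \emph{depth} picks up the factor $\log m$, which is exactly where the $\log m \cdot D_\mathrm{mult}(m)$ term in the theorem's depth bound comes from. With that correction (and noting, as the paper does, that $S_\mathrm{mult}(m), S_\mathrm{Jac}(m) \geq \Omega(m)$ and $D_\mathrm{mult}(m) \geq \Omega(\log m)$ so the fixed registers and the carry-lookahead additions are absorbed), your argument coincides with the paper's proof, including its Lemma~\ref{lemma:match_bits_cost}. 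The two items you flag as needing verification are handled in the paper exactly as you anticipate: the uncomputation of $\mathsf{ctrl}_j$ is a constant number of $m$-bit multiplications, and the trailing-zero count plus controlled shifts are built explicitly from trees of Toffoli/CNOT gates within $O(m\log m)$ gates, $O(m)$ space, and $O(\log^2 m)$ depth.
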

\begin{proof}
    We proceed by showing that each step of Algorithm~\ref{algo:jacobi} can be implemented reversibly with the stated complexities.
    We note that $S_\mathrm{Jac}(m)$ must be at least linear in $m$ because the Jacobi symbol depends on the entire input; therefore our space complexity is lower bounded by $\Omega(m)$.

    Step~\ref{algostep:jacobi:makeodd} computes the number of trailing zeros $t$ of a length-$m$ value $x$, and then computes $x'$, which is $x$ shifted to the right by $t$ bits.
    This can be performed reversibly in $O(m \log m)$ gates, $O(m)$ space, and $O(\log^2 m)$ depth, as follows. 
    We first compute $t$ by using a tree of $O(m)$ Toffoli gates to compute the unary representations of $\lfloor t/2^i \rfloor$ for $i$ from $1$ up to $\lceil \log m \rceil$, and then a tree of controlled-NOT gates to compute the parity of each unary value, which is equal to bit $i$ of $t$.
    The value $x'$ can then be computed by applying the map $\ket{a} \to \ket{\lfloor a/2^{i t_i} \rfloor}$ repeatedly for each bit $t_i$ of $t$, beginning with $a=x$.
    In turn, this map can be implemented with $m$ ancilla bits by applying the out-of-place controlled bit-shift map $\ket{a}\ket{0} \to \ket{a}\ket{\lfloor a/2^{i t_i} \rfloor}$, followed by $\ket{a}\ket{b} \to \ket{a \oplus 2^{i t_i} b}\ket{b}$ to uncompute the input register (using the fact that the shifted-out bits were zero).
    Both of those out-of-place operations can be implemented in $O(m)$ gates, $O(m)$ space, and $O(\log m)$ depth, by using a tree of controlled-NOT gates to create $m$ copies of the control bit and then performing two layers of $m$ Toffoli gates between the control, input, and output registers, separated by a layer of NOT gates on the controls: the first layer XORs the output register by the shifted value of the input if the control is on, and the second layer XORs the output by the unshifted value if the control is off.
    Finally, the $m$ copies of the control bit are uncomputed by another tree of $O(m)$ controlled-NOT gates.

    Step~\ref{algostep:jacobi:setout} of Algorithm~\ref{algo:jacobi} can be implemented in $O(1)$ gates, depth, and space, because it only depends on a constant number of the bits of $t$ and $x'$.
    Step~\ref{algostep:jacobi:matchbits} of Algorithm~\ref{algo:jacobi} requires calling Algorithm~\ref{algo:match_bits_of_N}, which by Lemma~\ref{lemma:match_bits_cost} can be performed with the complexities specified in the Theorem.
    Step~\ref{algostep:jacobi:msize} consists of the computation of the Jacobi symbol of two $m$-bit inputs, which can be performed in $G_\mathrm{Jac}(m)$, space $S_\mathrm{Jac}(m)$, and depth $D_\mathrm{Jac}(m)$ by supposition.
    Finally, step~\ref{algostep:jacobi:uncompute} can be performed with the stated complexities given that steps~\ref{algostep:jacobi:matchbits} and~\ref{algostep:jacobi:makeodd} can.

    Thus all steps can be implemented reversibly with the stated complexities, completing the proof.
\end{proof}

\begin{lemma}\label{lemma:match_bits_cost}
Suppose there exists a quantum multiplication circuit on $t$-bit inputs with gates $G_\mathrm{mult}(t)$, space $S_\mathrm{mult}(t)$, and depth $D_\mathrm{mult}(t)$. 
Then, there exists a quantum circuit implementing Algorithm~\ref{algo:match_bits_of_N} with gates $O\left(\frac{n}{m} \cdot G_\mathrm{mult}(m)\right)$, space $O(S_\mathrm{mult}(m))$ qubits, and depth $O\left((\frac{n}{m} + \log m) \cdot D_\mathrm{mult}(m)\right)$.
\end{lemma}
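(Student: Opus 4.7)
The plan is to bound the cost of Algorithm~\ref{algo:match_bits_of_N} by analyzing each of its three main components separately: the precomputation in step~\ref{algostep:match_bits_of_N:precomputation}, the main loop in step~\ref{algostep:match_bits_of_N:loop}, and the uncomputation in step~\ref{algostep:match_bits_of_N:unprecomputation}. Since the uncomputation step merely reverses the precomputation, its cost matches that step's cost, so the analysis boils down to two main pieces.

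For the precomputation, both $x_\mathrm{minv}$ (the inverse of $x$ modulo $2^m$) and $x_\mathrm{inv}$ (a $2m$-bit approximation to $1/x$) can be obtained via Newton iteration, which converges quadratically and therefore requires $O(\log m)$ multiplications of $O(m)$-bit operands. Each such multiplication can be performed reversibly (by computing, copying, and uncomputing) using $O(G_\mathrm{mult}(m))$ gates, $O(S_\mathrm{mult}(m))$ space, and $O(D_\mathrm{mult}(m))$ depth. Since the multiplications in Newton iteration are sequential, the total cost of precomputation is $O(\log m \cdot G_\mathrm{mult}(m))$ gates, $O(S_\mathrm{mult}(m))$ space (the intermediate partial-precision inverses can be uncomputed once the next-precision approximation is available), and $O(\log m \cdot D_\mathrm{mult}(m))$ depth.

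For the main loop of $O(n/m)$ iterations, each iteration consists of a constant number of $m$-bit multiplications and additions: one multiplication to form $\mathsf{ctrl}$, one multiplication to add $\mathsf{ctrl}\cdot x$ into $z$, and one multiplication used by the invocation of~\cite[Lemma A.2]{rv24} to uncompute $\mathsf{ctrl}$ from $\lfloor z/x\rfloor$ (using the precomputed $x_\mathrm{inv}$); the final sub-step clears the known low bits and relabels the register, taking only $O(m)$ Clifford gates in depth $O(1)$. Thus each iteration costs $O(G_\mathrm{mult}(m))$ gates, $O(S_\mathrm{mult}(m))$ space, and $O(D_\mathrm{mult}(m))$ depth, and since iterations run sequentially the loop contributes $O((n/m)\cdot G_\mathrm{mult}(m))$ gates and $O((n/m)\cdot D_\mathrm{mult}(m))$ depth. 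Summing over precomputation, main loop, and uncomputation yields the claimed gate and depth bounds.

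The main subtlety, and the step I expect to require the most care, is the space bound: even though the implicit multiple $y_j$ grows up to $n$ bits, we must argue that at no point does the circuit store more than $O(S_\mathrm{mult}(m))$ qubits at once. The key is Proposition~\ref{prop:zbounds}, which guarantees that the register $z$ fits in $2m$ qubits throughout, combined with Lemma~\ref{lemma:match_bits_induction}, which guarantees that once the $m$ low-order bits of $z'_j$ are zeroed using the classical bits of $N_j$, those qubits are in a deterministic (and hence unentangled) state and can be freely recycled in the next iteration. Once this space-reuse argument is in place, the remaining accounting is routine, and the stated complexities follow.
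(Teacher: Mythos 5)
Your overall decomposition (precomputation, main loop of $O(n/m)$ iterations, uncomputation) and your treatment of the loop and of space recycling match the paper's proof. However, there is one quantitative gap in the precomputation step~\ref{algostep:match_bits_of_N:precomputation}: you charge the Newton iteration as ``$O(\log m)$ multiplications of $O(m)$-bit operands,'' giving a gate cost of $O(\log m \cdot G_\mathrm{mult}(m))$. That bound does not establish the lemma as stated, whose gate count is $O\left(\frac{n}{m}\cdot G_\mathrm{mult}(m)\right)$ with no $\log m$ term: whenever $\log m \gg n/m$ (for instance $m = n/2$, where $n/m = 2$ but $\log m = \Theta(\log n)$), your bound is strictly weaker than the claimed one.

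The missing idea is the standard doubling-precision accounting for Newton iteration (this is what the paper invokes via~\cite{knuth_art_1998}): the $i$-th Newton step works at precision roughly $2^i$ bits, so the total gate (and space) cost is $\sum_i G_\mathrm{mult}(2^i) = O(G_\mathrm{mult}(m))$, since for any at-least-linear $G_\mathrm{mult}$ the sum is dominated by its final term; thus division costs the same as a single full-width multiplication up to constants in gates and space. Only the \emph{depth} genuinely accumulates across the $O(\log m)$ sequential Newton steps, giving $O(\log m \cdot D_\mathrm{mult}(m))$ --- which is exactly why the lemma's depth bound carries the $+\log m$ term while its gate bound does not. With this correction, the rest of your argument (constant number of $m$-bit multiplications, additions/subtractions, and the classical bit-clearing per loop iteration, together with the $2m$-bit bound on $z$ from Proposition~\ref{prop:zbounds} and the recycling of the classically-determined low-order qubits) goes through and coincides with the paper's proof.
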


\begin{proof}
    We proceed by showing that each step of the algorithm can be performed reversibly with the stated complexity.
    We note that $S_\mathrm{mult}(t) \geq O(m)$ because its inputs are quantum, so the overall algorithm's space is lower bounded by $O(m)$; and $D_\mathrm{mult}(t) \geq O(\log m)$ because each bit of a multiplier's input can affect $O(m)$ bits of its output, so the overall depth is lower bounded by $O(\log m)$.
    Both bounds hold for any choice of (reversible) multiplier.

    Both parts of Step~\ref{algostep:match_bits_of_N:precomputation} (and its uncomputation, Step~\ref{algostep:match_bits_of_N:unprecomputation}) are arithmetic divisions.
    When implemented via Newton iteration, the gate and space complexity of division is the same as multiplication up to a constant factor~\cite{knuth_art_1998}; the depth complexity for $t$-bit inputs is bounded by $O(\log t \cdot D_\mathrm{mult}(t))$ (although the bound improves to $O(D_\mathrm{mult}(t))$ if $D_\mathrm{mult}(t) \geq \Omega(t^\epsilon)$ for any $\epsilon > 0$).
    For step~\ref{algostep:match_bits_of_N:loop}, each iteration of the loop consists of a constant number of additions, subtractions, and multiplications, all of size $O(m)$.
    The additions and subtractions can be implemented in gate count and space $O(m)$, and depth $O(\log m)$, via quantum carry-lookahead addition~\cite{draper_logarithmic-depth_2006}.
    The multiplications can be performed with gates $G_\mathrm{mult}(m)$, space $S_\mathrm{mult}(m)$, and depth $D_\mathrm{mult}(m)$ by supposition.
    The loop has $n/m-1$ iterations, so overall, step~\ref{algostep:match_bits_of_N:loop} can be implemented in $O\left(\frac{n}{m} \cdot G_\mathrm{mult}(m)\right)$ gates, $O(S_\mathrm{mult}(m))$ qubits of space, and $O\left(\frac{n}{m} \cdot D_\mathrm{mult}(m)\right)$ depth.
    Thus all of Algorithm~\ref{algo:match_bits_of_N} can be implemented in the stated depth, space, and gate count, completing the proof.
\end{proof}

\subsection{Implications: Factoring Certain Integers in Sublinear Space and Depth}\label{sec:jacobicorollaries}

\sloppy In this section, we instantiate Algorithms~\ref{algo:jacobi} and~\ref{algo:match_bits_of_N}.
Here we focus on asymptotic costs, leaving the optimization of circuits for practical problem sizes to future work.
For multiplication, we use a parallelized quantum circuit for Sch\"onhage-Strassen multiplication~\cite{schonhage1971fast}, by which the product of two $t$-bit quantum integers can be computed in gate count $\widetilde{O}(t)$ and depth $\mathsf{polylog}(t)$, using $\widetilde{O}(t)$ total qubits~\cite{nie_quantum_2023}.
For computing the Jacobi symbol of two inputs of size $t$, there exist classical algorithms with complexity $\widetilde{O}(t)$~\cite{schonhageeuc, thull1990uni,bach1996algorithmic, Moller2008OnSA}; by standard reversible circuit techniques these algorithms can be made into quantum circuits with gate count, depth, and qubit count all at most $\widetilde{O}(t)$~\cite{bennett_logical_1973, bennett_timespace_1989, levine_note_1990}.
The following corollary results directly from instantiating Algorithms~\ref{algo:jacobi} and~\ref{algo:match_bits_of_N} with these constructions.

\begin{corollary}[Compare with Corollary~\ref{cor:highlevelwithschonhage}]\label{corr:minimal_gates}
    There exists a quantum circuit for the unitary of Equation~\eqref{eq:jac-unitary} with gate count $\widetilde{O}(n)$, depth $\widetilde{O}(n/m + m)$, and space $\widetilde{O}(m)$ qubits.

    Consequently, by Theorem~\ref{thm:highlevel}, we can recover prime $P$ and $Q$ (with $Q < 2^m$) from an $n$-bit input $N = P^2Q$ with $\widetilde{O}(n)$ gates in $\widetilde{O}(n/m + m)$ depth, using $\widetilde{O}(m)$ qubits.
\end{corollary}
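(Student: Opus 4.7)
The proof is a two-step instantiation of results already developed in the paper, with no new mathematical content required. The plan is to first fix concrete subroutines that turn Theorem~\ref{thm:jacobimain} into a quantum Jacobi circuit with the advertised $\widetilde{O}(\cdot)$ complexities, and then to feed that circuit into Theorem~\ref{thm:highlevel} to obtain the factoring bound.

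For the multiplication subroutine on $t$-bit quantum inputs, I instantiate with the parallel quantum Sch\"onhage--Strassen multiplier of \cite{nie_quantum_2023}, achieving $G_\mathrm{mult}(t) = \widetilde{O}(t)$, $S_\mathrm{mult}(t) = \widetilde{O}(t)$, and $D_\mathrm{mult}(t) = \mathrm{polylog}(t)$. For the base-case Jacobi computation on two $t$-bit inputs, I use the near-linear-time classical algorithm of \cite{schonhageeuc, thull1990uni, bach1996algorithmic, Moller2008OnSA}, made reversible via standard Bennett-style simulation \cite{bennett_logical_1973, bennett_timespace_1989, levine_note_1990}, which gives $G_\mathrm{Jac}(t), S_\mathrm{Jac}(t), D_\mathrm{Jac}(t) = \widetilde{O}(t)$. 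Substituting into Theorem~\ref{thm:jacobimain} yields gate count $(n/m)\cdot \widetilde{O}(m) + \widetilde{O}(m) + O(m\log m) = \widetilde{O}(n)$, space $\widetilde{O}(m)$, and depth $(n/m + \log m)\cdot \mathrm{polylog}(m) + \widetilde{O}(m) + O(\log^2 m) = \widetilde{O}(n/m + m)$, establishing the first claim.

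For the factoring consequence, I apply Theorem~\ref{thm:highlevel} to $N = P^2 Q$ with $\Bmax := 2^m > Q$. In this setting the squarefree part of $N$ is $B = Q$ and the square part is $A = P$, both of which are $>1$, so the hypotheses of Theorem~\ref{thm:highlevel} are satisfied. We have $\ell = \lfloor 2 \log \Bmax \rfloor + \omega(1) = \widetilde{O}(m)$, and plugging in the Jacobi circuit above (invoked on $\ell$-bit quantum inputs) gives a factoring circuit with $G(\ell, n) + O(\ell \log \ell) = \widetilde{O}(n)$ gates, $D(\ell, n) + O(\ell) = \widetilde{O}(n/m + m)$ depth, and $S(\ell, n) + \ell = \widetilde{O}(m)$ qubits. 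With probability $\Omega(1)$ the output is either $Q$ itself or a prime divisor of $N$; in the former case one additional classical division recovers $P^2$ and hence $P$, so we obtain the full factorization of $N$.

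The main thing to watch during the bookkeeping, rather than a genuine obstacle, is that the polylog depth of the multiplier keeps the $(n/m)\cdot D_\mathrm{mult}(m)$ term at $\widetilde{O}(n/m)$, and that the $O(\ell)$ overhead contributed by the superposition preparation and QFT inside Theorem~\ref{thm:highlevel} is absorbed by the $\widetilde{O}(m)$ summand already present in the Jacobi depth. Once these are checked, the stated $\widetilde{O}(n)$ gate count, $\widetilde{O}(n/m + m)$ depth, and $\widetilde{O}(m)$ space follow immediately.
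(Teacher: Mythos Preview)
Your proposal is correct and follows essentially the same approach as the paper: instantiate Theorem~\ref{thm:jacobimain} with the parallel Sch\"onhage--Strassen multiplier of~\cite{nie_quantum_2023} and the near-linear Jacobi algorithm of~\cite{schonhageeuc, thull1990uni, bach1996algorithmic, Moller2008OnSA} made reversible via~\cite{bennett_logical_1973, bennett_timespace_1989, levine_note_1990}, then feed the resulting circuit into Theorem~\ref{thm:highlevel}. Your bookkeeping of the $\widetilde O(\cdot)$ terms and the handling of the $\ell = \widetilde O(m)$ overhead are exactly what the paper leaves implicit.
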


\begin{remark}[Near-optimal parallelism for small $Q$]
    It is possible to achieve depth $\widetilde{O}(n/m + \log Q)$ using $\widetilde{O}(m)$ qubits for any $m \geq \log Q$, by using block size $m$ in Algorithm~\ref{algo:jacobi} but implementing step~\ref{algostep:jacobi:msize} via a recursive call to Algorithm~\ref{algo:jacobi} with a smaller block size (and possibly further levels of recursion if needed).
    This optimization becomes relevant when $\log Q < O(\sqrt{n})$, such that the $n/m$ term in the depth could dominate.
    In that regime, this trick allows the depth to be reduced as low as $\widetilde{O}(\log Q)$ at the expense of qubit count $\widetilde{O}(n/\log Q)$ (by setting $m=n/\log Q$).
    In general, for a target depth $d$ where $\log Q \leq d \leq n/\log Q$, at the $i^\mathrm{th}$ level of recursion the block size $m_i$ should be set to $m_i = m_{i-1}/d$ (with $m_1 = n/d$)
    and the recursion stops when $m_i < d$.
    This construction yields a depth $\widetilde{O}(d)$ and qubit count $\widetilde{O}(n/d)$.
    The space-time product (qubit count times depth) is $\widetilde{O}(n)$, the same as the gate count, thus nearly achieving the asymptotically optimal limit for parallelism of $O(1)$ operations per qubit per time step (up to polylogarithmic factors).
\end{remark}

\paragraph{Our Factoring-Based Proof of Quantumness.} 
We now state the implications of Corollary \ref{corr:minimal_gates} when factoring numbers of the form $N = P^2Q$ with $\log Q = \widetilde{\Theta}((\log N)^{2/3})$ and $P, Q$ are prime. 
As summarized in Section~\ref{sec:cntprelims}, there are no known classical special-purpose factoring algorithms that perform better than the general number field sieve~\cite{pollard, llmp, blp} on integers of this form. Yet by Corollary~\ref{corr:minimal_gates}, it is possible to quantumly factor these integers in much less space and depth than would be required for generic integers~\cite{shor97, Regev23} or even generic squarefull integers~\cite{LPDS12}. 
Indeed, applying Corollary~\ref{corr:minimal_gates} to numbers of that form yields the following result:

\begin{corollary}\label{corr:factoringpoq}
    Consider $n$-bit numbers of the form $N = P^2Q$, where $P, Q$ are primes and $\log Q = \widetilde{\Theta}(n^{2/3})$. 
    There exist quantum circuits for recovering $P$ and $Q$ from $N$ with $\widetilde{O}(n)$ gates, $\widetilde{O}(n^{2/3})$ depth, and $\widetilde{O}(n^{2/3})$ qubits.
    \label{corr:fix_m_results_in_terms_of_n}
\end{corollary}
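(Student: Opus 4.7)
The plan is to observe that this corollary is an immediate specialization of Corollary~\ref{corr:minimal_gates} to the regime $\log Q = \widetilde{\Theta}(n^{2/3})$. Concretely, I would set the parameter $m = \widetilde{\Theta}(n^{2/3})$ in Corollary~\ref{corr:minimal_gates}; since $\log Q = \widetilde{\Theta}(n^{2/3})$ by hypothesis, we can take $m$ so that the premise $Q < 2^m$ is met while still having $m = \widetilde{\Theta}(n^{2/3})$. Plugging this choice into the three cost bounds from Corollary~\ref{corr:minimal_gates} gives gate count $\widetilde{O}(n)$, qubit count $\widetilde{O}(m) = \widetilde{O}(n^{2/3})$, and depth $\widetilde{O}(n/m + m) = \widetilde{O}(n^{1/3} + n^{2/3}) = \widetilde{O}(n^{2/3})$, which is exactly the claim.

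The only minor point worth spelling out is why this particular choice of $m$ is (essentially) optimal for the present depth/space target. The depth expression $n/m + m$ is minimized, up to polylog factors, at $m = \widetilde{\Theta}(\sqrt n)$, giving depth $\widetilde{O}(\sqrt n)$. Since the hypothesis fixes $\log Q = \widetilde{\Theta}(n^{2/3}) \gg \sqrt n$, we are forced to take $m \geq \log Q = \widetilde{\Theta}(n^{2/3})$, so the $m$ term of the depth dominates the $n/m$ term and both the depth and qubit bounds land at $\widetilde{O}(n^{2/3})$. There is no real obstacle to verify here: correctness and efficiency have already been established in Theorem~\ref{thm:highlevel} and Theorem~\ref{thm:jacobimain}, and Corollary~\ref{corr:minimal_gates} bundles them with the near-linear Sch\"onhage--Strassen multiplier of~\cite{nie_quantum_2023} and the near-linear Jacobi subroutine of~\cite{schonhageeuc,thull1990uni,bach1996algorithmic,Moller2008OnSA}. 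So the proof is essentially a one-line substitution, together with the arithmetic that $n/n^{2/3} = n^{1/3} = o(n^{2/3})$.
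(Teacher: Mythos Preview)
Your proposal is correct and matches the paper's approach: the paper itself presents this corollary as an immediate application of Corollary~\ref{corr:minimal_gates} with $m = \widetilde{\Theta}(n^{2/3})$, giving no further argument beyond ``applying Corollary~\ref{corr:minimal_gates} to numbers of that form.'' Your additional remark about why $m = \widetilde{\Theta}(n^{2/3})$ is forced (since $m \geq \log Q$) and why the $m$ term dominates the depth is a nice bit of exposition that the paper omits.
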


\section{Completely Factoring Integers with Distinct Exponents in their Prime Factorization}\label{sec:fullyfactoring}

Here, we provide a black-box reduction that shows that any algorithm achieving the guarantees of Theorem~\ref{thm:highlevel} can in fact be used to \emph{completely factor} integers of the form $N = p_1^{\alpha_1} \ldots p_r^{\alpha_r}$ with $\alpha_1, \ldots, \alpha_r$ positive and distinct.

\begin{definition}[\cite{Akta2017OnTN}]
    We say that an integer $N$ is \emph{special} if all the exponents in its prime factorization are distinct i.e. we can write $N = p_1^{\alpha_1} \ldots p_r^{\alpha_r}$ for distinct primes $p_1, \ldots, p_r$ and distinct positive integers $\alpha_1, \ldots, \alpha_r$.
\end{definition}
\begin{remark}
    One might wonder whether special integers turn out to be classically easy to factor as well. We state some evidence that this is not the case here. Indeed, the density of these integers was studied by Akta\c{s} and Murty~\cite{Akta2017OnTN}, who showed that for any integer $N_\mathrm{max}$, the number of special integers $N \in [1, N_\mathrm{max}]$ is $\approx \frac{1.7 N_\mathrm{max}}{\ln N_\mathrm{max}}$. The following classes of integers that are classically even slightly easier than general to completely factor do not contain enough elements in total to cover all special numbers:
    \begin{itemize}
        \item Prime numbers: there are $\approx \frac{N_\mathrm{max}}{\ln N_\mathrm{max}}$ of these by the prime number theorem.
        \item Integers of the form $a^b$ for $b > 1$: there are at most $\widetilde{O}\left(\sqrt{N_\mathrm{max}}\right)$ of these.
        \item Sub-exponentially smooth integers (i.e. integers whose largest prime divisor is at most say $\exp\left(O(\left(\log N_\mathrm{max}\right)^{0.99}\right)$): there are $N_\mathrm{max} \cdot \exp\left(-\widetilde{O}(\left(\log N_\mathrm{max}\right)^{0.01}\right)$ of these~\cite{granville}.
    \end{itemize}
\end{remark}
With this in mind, we now turn our attention to completely factoring special integers.
The results and ideas in this section bear some high-level similarity to previous work by Yun~\cite{yun} in the context of factoring polynomials. Concretely, Yun shows that any polynomial $f(x)$ can be decomposed into a factorization $f(x) = g_1(x)^{\alpha_1}\ldots g_r(x)^{\alpha_r}$ where the $g_i$'s are squarefree and pairwise coprime and the $\alpha_i$'s are distinct. The starting point of Yun's algorithm is the observation that if $f$ is divisible by the square of some polynomial $g$, then $g$ will divide $\gcd(f, f')$. Similarly, in this section we will start from an algorithm for calculating squarefree decompositions of integers and obtain an algorithm for fully factoring special integers.

We first begin with a simple lemma. At a high level, we want to show that the $\Omega(1)$ success probability in Theorem~\ref{thm:highlevel} can be boosted to be very close to 1. This is not obvious since given $N = A^2B$ with $B$ squarefree, it may not be possible to efficiently determine whether the algorithm has succeeded in recovering $B$. We show that this is not difficult to work around.

\begin{lemma}\label{lemma:highlevelboosting}
    Let $N$ be a positive integer, with unique representation as $N = A^2B$ for $B$ squarefree. Moreover, we say that $N$ is \emph{very good} if it is composite and neither squarefree nor a square.
    
    Assume there exists an algorithm $\mathcal{A}$ that given a very good integer $N$, outputs either $B$ or a prime dividing $N$ with probability $\Omega(1)$.

    Then for any positive integer $T$, there exists another algorithm $\mathcal{A'}$ that given a positive integer $N = A^2B$ with $B$ squarefree, either outputs $B$ or a prime divisor of $N$ with probability $1 - \exp(-\Omega(T))$. This algorithm makes at most $T$ calls to $\mathcal{A}$ with the same input $N$. Outside of calls made to $\mathcal{A}$, the algorithm is classical and runs in time $\poly(\log N)$.
\end{lemma}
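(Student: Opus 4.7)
The plan is to build $\mathcal{A}'$ by first dispatching degenerate inputs classically and then wrapping $\mathcal{A}$ in a verification-based boosting loop. Concretely, given $N$, I would first check: if $N = 1$, return $1$; if $N$ is prime (via AKS~\cite{Agrawal2004}), return $N$; if $N$ is a perfect square (via integer square root), return $1$, which is the squarefree part in that case. Otherwise, run $\mathcal{A}$ on $N$ for $T$ independent trials to obtain outputs $v_1,\ldots,v_T$. Test each $v_i$ to see whether it is a prime divisor of $N$ (by primality testing plus divisibility); if any such $v_i$ is found, return it. Otherwise return $\min V$, where
$$V \;=\; \{\, v_i \,:\, 2 \le v_i \le N,\ v_i \mid N,\ N/v_i \text{ is a perfect square}\,\} \;\cup\; \{N\}.$$

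The correctness analysis rests on the following structural observation, verified by comparing prime factorizations: for $N = A^2 B$ with $B$ squarefree, the set of positive integers $v$ with $v \mid N$ and $N/v$ a perfect square is exactly $\{B s^2 : s \mid A\}$. In particular, every ``passing'' value is at least $B$, and $B$ itself is the unique minimum of this set. Moreover, after the initial classical checks $N$ is composite and not a perfect square, which forces $B \ge 2$. If $N$ is squarefree (so $A = 1$ and $B = N$), the only passing value is $N$ itself, so $\min V = N = B$ deterministically regardless of what $\mathcal{A}$ returns (and if by chance some $v_i$ is a prime divisor of $N$, the first return path fires instead, which is also valid). If instead $N$ is very good, then by hypothesis each trial outputs $B$ or a prime divisor of $N$ with probability at least some $p = \Omega(1)$, so with probability $1 - (1-p)^T = 1 - \exp(-\Omega(T))$ at least one trial is a success. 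Conditioned on such a success, either a prime-divisor output is caught by the first return path, or (if no $v_i$ is a prime divisor) every successful $v_i$ must equal $B$, which places $B$ into $V$ and forces $\min V = B$ by the structural fact.

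The one subtle point, and what I expect to be the main obstacle to a reader, is that we cannot directly verify the claim ``$v_i = B$,'' since there is no known polynomial-time squarefreeness test and the check ``$v \mid N$ and $N/v$ is a perfect square'' by itself admits the spurious candidates $B s^2$ for every $s \mid A$. The structural characterization above is what rescues the algorithm: all spurious candidates strictly exceed $B$, so taking $\min V$ filters them out exactly when at least one call to $\mathcal{A}$ returns $B$. The remaining ingredients---AKS primality, integer square roots, divisibility checks, and $T$ independent invocations of $\mathcal{A}$---are all standard, yielding the claimed $T$ calls to $\mathcal{A}$ plus $\poly(\log N)$ classical post-processing.
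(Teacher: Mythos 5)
Your proposal is correct and follows essentially the same route as the paper's proof: dispose of the prime/square cases classically, run $\mathcal{A}$ for $T$ trials, return any prime divisor found, and otherwise take the minimum over candidates $v$ with $v \mid N$ and $N/v$ a perfect square (defaulting to $N$), with the squarefree case handled by the same "only $N$ can pass" observation. Your explicit characterization of the passing set as $\{Bs^2 : s \mid A\}$ is a slightly sharper statement of the paper's key point that every element of the candidate set is divisible by $B$ and hence at least $B$, but the argument is the same.
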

\begin{proof}
    First, we state the main idea. Suppose that $\mathcal{A}$ produces some composite $B'$ as output. The main observation is that while we cannot efficiently check whether $B' = B$, we can efficiently check that $N/B'$ is a square. Moreover, if $B'$ satisfies this condition then $B'$ must be divisible by $B$. With this in mind, $\mathcal{A'}$ will proceed as follows. We will use $B^*$ to denote the algorithm's final output:
    \begin{enumerate}
        \item If $N$ is prime, we can output $N$ itself and terminate. If $N$ is a square, we can easily check this and output $B^* = 1$ and terminate. Henceforth we can assume that $N$ is either very good or squarefree.
        \item Now run $\mathcal{A}$ $T$ times and let the outputs be $B_1, \ldots, B_T$. If there exists some $j$ such that $B_j$ is a prime divisor of $N$, output $B_j$ and terminate.
        \item Otherwise, let $S \subseteq \left\{B_1, \ldots, B_T\right\}$ be the set of all values $B'$ in this list such that $B'$ divides $N$ and $N/B'$ is a square.
        \item If $S = \emptyset$, output $B^* = N$ (this is equivalent to declaring that $N$ is squarefree). Otherwise, output $B^*$ as the smallest element in $S$ and terminate.
    \end{enumerate}
    First, suppose $N$ is very good. In this case, at least one of the runs of $\mathcal{A}$ will be successful (i.e. it outputs $B$ or a prime divisor of $N$) with probability $1 - \exp(-\Omega(T))$. Then, assuming at least one of the runs of $\mathcal{A}$ is successful, we have two cases:
    \begin{itemize}
        \item If the successful run produced a prime divisor, $\mathcal{A'}$ will detect this and output accordingly.
        \item If the successful run produced $B$, then this will be included in $S$. Moreover, all elements of $S$ must be divisible by $B$ (and hence $\geq B$). Hence taking the minimal element in $S$ will output $B$.
    \end{itemize}
    Finally, suppose $N$ is squarefree. In this case, we have no guarantee on the behavior of $\mathcal{A}$. But if it produces a prime divisor of $N$, $\mathcal{A'}$ will detect and output this. Otherwise, note that the only integer that could be included in $S$ is $N$ itself. So either $S$ will be empty or its smallest element will be $N$, and in either case $\mathcal{A'}$ will output $N$. The conclusion follows.
\end{proof}
The below theorem and its proof bear some high-level similarity to a result by Yun~\cite{yun} that shows that a similar factorization can easily be carried out for polynomials.
\begin{theorem}\label{thm:fullyfactoring}
    Assume there exists an algorithm $\mathcal{A}$ that given a positive integer $N = A^2B$ with $B$ squarefree and parameter $T$, either outputs $B$ or a prime divisor of $N$ with probability with $1 - \exp(-\Omega(T))$.

    Then there exists another algorithm $\mathcal{B}$ that, given a special integer $N$ as input, recovers the complete prime factorization of $N$ with probability $1 - \negl(\log N)$. This algorithm makes at most $O(\sqrt{\log N})$ calls to $\mathcal{A}$ with inputs $N'$ that are always $\leq N$ and with repetition parameter $T = \omega(\log N)$. Outside of calls made to $\mathcal{A}$, the algorithm is classical and runs in time $\poly(\log N)$.
\end{theorem}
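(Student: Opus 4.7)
The plan is to construct $\mathcal{B}$ recursively: in each recursive step I extract one prime divisor of $N$ together with its full multiplicity and then recurse on the quotient. This works because the special property is preserved under stripping a prime: if $N = p_1^{\alpha_1} \cdots p_r^{\alpha_r}$ has pairwise distinct $\alpha_i$, then so does $N/p_j^{\alpha_j}$ for any $j$.

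Before each call to $\mathcal{A}$, I preprocess $N$ classically so that the hypothesis of $\mathcal{A}$ is non-vacuous. If $N$ is a prime power $p^k$, I recover $p$ and $k$ directly in $\poly(\log N)$ time (see Section~\ref{sec:cntprelims}). Otherwise, while $N$ is a perfect square I replace $N$ by $\sqrt{N}$; since the special exponents are distinct positive integers bounded by $\log N$, at most $O(\log \log N)$ such square roots leave $N$ with at least one odd exponent, so $N$ takes the form $A^2 B$ with $A, B > 1$. I then invoke $\mathcal{A}(N, T)$ with $T = \omega(\log N)$, obtaining (assuming success) either a prime divisor $p$ of $N$ or the squarefree part $B$; the two cases are distinguishable via primality testing~\cite{Agrawal2004} and divisibility. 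In the prime case I output $p$ with its full multiplicity and recurse on the quotient. In the $B$ case I follow the argument in the informal outline of Section~\ref{sec:fullyfactoring}: divide $B$ out of $N$ as many times as possible to obtain $k$; then using that the $\alpha_i$ are distinct, $B/\gcd(k, B) = p_{i^*}$, the unique prime of $N$ with the smallest odd exponent, which I extract with its full multiplicity before recursing.

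For the analysis, each recursive call makes exactly one invocation of $\mathcal{A}$ and strips one prime factor, so the total number of $\mathcal{A}$-calls equals $r$, the number of distinct primes dividing $N$. Because the $\alpha_i$ are distinct positive integers, $\log N \geq \sum_i \alpha_i \geq r(r+1)/2$, giving $r = O(\sqrt{\log N})$. Each call fails with probability $\exp(-\Omega(T)) = \negl(\log N)$, so a union bound over the $O(\sqrt{\log N})$ calls bounds the overall failure probability by $\negl(\log N)$. The main subtlety is verifying that after each recursion, the classical preprocessing restores the ``very good'' input condition needed by $\mathcal{A}$; this follows from the observation that the recursion preserves specialness, so the only corner cases that arise (prime powers and perfect squares) are exactly those the preprocessing step handles.
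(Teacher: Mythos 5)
Your proposal is correct and follows essentially the same route as the paper's proof: handle prime powers and perfect squares classically, use the distinctness of exponents to show $B/\gcd(k,B)$ is the prime with the smallest odd exponent, strip one prime per call to $\mathcal{A}$, and bound the number of calls by $r = O(\sqrt{\log N})$ via $\log N \geq r(r+1)/2$ with a union bound over failures. The only differences (recursive stripping versus the paper's explicit loop maintaining $M$ and $F$, and square-rooting in preprocessing rather than via recursion) are cosmetic.
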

\begin{proof}
    Let us write $N = p_1^{\alpha_1} \ldots p_r^{\alpha_r}$. Then note firstly that since the $\alpha_i$'s are distinct, we have $N \geq 2^{\alpha_1 + \ldots + \alpha_r} \geq 2^{\Omega(r^2)} \Rightarrow r \leq O(\sqrt{\log N})$. It hence suffices to show that we can accomplish the desired task with $O(r)$ calls.

    We present our algorithm in Algorithm~\ref{algo:specialintegers}. The efficiency is clear since after every call to $\mathcal{A}$, the number of distinct prime divisors of $M$ decreases by 1.
    As for correctness, note firstly that our procedure clearly preserves the fact that $M$ is special at each step. Updating $M \gets \sqrt{M}$ will halve all the exponents in its prime factorization which keeps them distinct. Otherwise, we take a prime and remove as many factors of it from $M$ as possible. This effectively just removes an element from the set of nonzero exponents in the prime factorization of $M$, which clearly preserves distinctness.

    It then remains to justify that if $M$ is special, then with all but negligible probability $B/\gcd(k, B)$ will be prime. Write $M = \prod_{i = 1}^s q_i^{\beta_i}$ for distinct primes $q_i$ and distinct positive integers $\beta_i$. Then if the output $B$ of $\mathcal{A}$ is not prime, we will have (with probability $1 - \negl(\log N)$)
    that $$B = \prod_{i \in [s]: \beta_i\text{ odd}} q_i.$$
    Now among the indices $i \in [s]$ such that $\beta_i$ is odd, let $i^*$ be the index such that $\beta_i$ is minimal. Then $\gamma = \beta_{i^*}$ (where $\gamma$ is defined as computed in Algorithm~\ref{algo:specialintegers}), and hence $$k = \left(\prod_{i \in [s]: \beta_i\text{ even}} q_i^{\beta_i}\right) \cdot \left(\prod_{i \in [s]: \beta_i\text{ odd}} q_i^{\beta_i - \beta_{i^*}}\right).$$
    The crucial point is that for any $i \in [s]$ with $i \neq i^*$ we will have $\beta_i \neq \beta_{i^*}$, because $M$ is special. In particular, for $i \in [s]$ such that $\beta_i$ is odd and $i \neq i^*$, $k$ must be divisible by $q_i$. On the other hand, $k$ is clearly not divisible by $q_{i^*}$. It follows that $$\gcd(k, B) = \prod_{i \in [s]: \beta_i\text{ odd and }i \neq i^*} q_i \Rightarrow \frac{B}{\gcd(k, B)} = q_{i^*},$$which is indeed prime. This completes our proof of the theorem.

    \begin{algorithm}
    \KwData{A special positive integer $N$.}
    \KwResult{A full factorization of $N$ (with probability $1 - \negl(\log N)$).}
    \begin{enumerate}
        \item Initialize $M$ to be $N$, and initialize $F$ to be the ``empty factorization'' (i.e. the factorization of 1). We will maintain the invariants that $M$ is a special divisor of $N$ and $F$ is the factorization of $N/M$.
        \item Repeat the following until $M = 1$:
        \begin{enumerate}
            \item If $M$ is a prime or prime power, add the prime factorization of $M$ to $F$, and update $M \gets 1$ (it is well-known that this can be efficiently done classically; we sketch this in Section~\ref{sec:cntprelims}).
            \item Else if $M$ is square, recurse, calling Algorithm~\ref{algo:specialintegers} on input $\sqrt{M}$; add two entries to $F$ for each prime factor in the result. Then set $M \gets 1$. 
            \item Otherwise, if neither the conditions in (a) nor (b) hold, apply algorithm $\mathcal{A}$ to $M$, with $T = \omega(\log N)$ so that the success probability is $1 - \negl(\log N)$. Now proceed as follows:
            \begin{itemize}
                \item If the output $B$ is prime: repeatedly divide $M$ by $B$ until $M$ is not divisible by $B$. Update $F$ accordingly and continue to the next step of the loop.
                \item Otherwise, we can assume that $B$ is squarefree and $M/B$ is square (with probability $1 - \negl(\log N)$). Then by repeatedly dividing $M$ by $B$, we can find integers $k, \gamma$ such that $M = k \cdot B^\gamma$ and $k$ is not divisible by $B$.

                Then compute $p = B/\gcd(k, B)$ (which can be done efficiently; see Section~\ref{sec:cntprelims} for an overview of some algorithms for computing GCDs) and check whether $p$ is prime. If it is not, abort (we will show that this almost never occurs). Otherwise, divide $M$ by as many factors of $p$ as possible and update $F$ accordingly.
            \end{itemize}
        \end{enumerate}
        \item Output $F$.
    \end{enumerate}
    \caption{Completely factoring special integers (see Theorem~\ref{thm:fullyfactoring})}\label{algo:specialintegers}
\end{algorithm}
\end{proof}
\noindent
Combining Corollary~\ref{cor:highlevelwithschonhage}, Lemma~\ref{lemma:highlevelboosting}, and Theorem~\ref{thm:fullyfactoring} yields the following result:
\begin{corollary}
    A special integer $N$ can be completely factored with success probability $1 - \negl(\log N)$ using $\omega((\log N)^{3/2})$ calls to a quantum circuit of size $\widetilde{O}(n)$.
\end{corollary}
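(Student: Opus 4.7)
The plan is to chain the three results already established: Corollary~\ref{cor:highlevelwithschonhage}, Lemma~\ref{lemma:highlevelboosting}, and Theorem~\ref{thm:fullyfactoring}. First I would apply Corollary~\ref{cor:highlevelwithschonhage} to obtain a ``base'' quantum circuit $\mathcal{C}$ of size $\widetilde{O}(n)$ that, on input a very good integer $N$ with squarefree part $B$, returns either $B$ itself or a prime divisor of $N$ with probability $\Omega(1)$. This is exactly the hypothesis of Lemma~\ref{lemma:highlevelboosting}.

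Next I would invoke Lemma~\ref{lemma:highlevelboosting} with repetition parameter $T = \omega(\log N)$ to promote $\mathcal{C}$ into an algorithm $\mathcal{A}$ that, on any positive integer input, returns either the squarefree part or a prime divisor with probability $1 - \exp(-\Omega(T)) = 1 - \negl(\log N)$. Each invocation of $\mathcal{A}$ costs $T$ calls to $\mathcal{C}$, together with $\poly(\log N)$ classical overhead, so one call to $\mathcal{A}$ costs $\omega(\log N)$ calls to $\mathcal{C}$. This $\mathcal{A}$ is precisely the subroutine needed by Theorem~\ref{thm:fullyfactoring}.

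Feeding $\mathcal{A}$ into Theorem~\ref{thm:fullyfactoring} yields an algorithm $\mathcal{B}$ that completely factors any special $N$ using at most $O(\sqrt{\log N})$ calls to $\mathcal{A}$, with each such call succeeding except on a $\negl(\log N)$ fraction. Multiplying the two layers of calls gives a total of $O(\sqrt{\log N}) \cdot \omega(\log N) = \omega((\log N)^{3/2})$ invocations of the base circuit $\mathcal{C}$, which has size $\widetilde{O}(n)$, as claimed. The overall success probability follows from a union bound over the $O(\sqrt{\log N})$ uses of $\mathcal{A}$: each fails with probability $\negl(\log N)$, and $\sqrt{\log N} \cdot \negl(\log N) = \negl(\log N)$.

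I do not anticipate a serious obstacle here, as the statement is essentially a bookkeeping combination of the preceding results; the only thing to be slightly careful about is that Theorem~\ref{thm:fullyfactoring} is stated for algorithms $\mathcal{A}$ whose success probability is already $1 - \exp(-\Omega(T))$ (rather than $\Omega(1)$), which is exactly why the intermediate boosting step of Lemma~\ref{lemma:highlevelboosting} is needed before chaining. One should also note that the inputs on which Theorem~\ref{thm:fullyfactoring} invokes $\mathcal{A}$ are divisors of $N$ and hence of bit-length at most $n$, so the $\widetilde{O}(n)$ circuit size bound applies uniformly to every call.
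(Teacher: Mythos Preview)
Your proposal is correct and follows essentially the same route as the paper's proof: chain Corollary~\ref{cor:highlevelwithschonhage}, Lemma~\ref{lemma:highlevelboosting} with $T=\omega(\log N)$, and Theorem~\ref{thm:fullyfactoring}, then multiply $O(\sqrt{\log N})$ by $\omega(\log N)$ to obtain the total call count. Your write-up is in fact more detailed than the paper's (explicit union bound, the remark that all sub-inputs are divisors of $N$ so the $\widetilde{O}(n)$ bound applies uniformly), but the underlying argument is the same.
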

\begin{proof}
    The circuit in Corollary~\ref{cor:highlevelwithschonhage} only requires $\widetilde{O}(n)$ gates. Then the algorithm in Lemma~\ref{lemma:highlevelboosting} can be realized with $T = \omega(\log N)$ calls to the circuit in Corollary~\ref{cor:highlevelwithschonhage} (here, the choice of $T$ is specified by Theorem~\ref{thm:fullyfactoring}.) Finally, the algorithm in Theorem~\ref{thm:fullyfactoring} can be realized with $O(\sqrt{\log N})$ calls to the algorithm of Lemma~\ref{lemma:highlevelboosting}. Putting these together, the conclusion follows.
\end{proof}

\ifanon
\else
\condparagraph{Acknowledgements.} The authors would like to thank Henry Corrigan-Gibbs for giving a stimulating talk at CRYPTO 2024 that inspired the beginning of this project, and for useful subsequent discussions.
The authors would also like to thank Isaac Chuang, Antoine Joux, Mikhail Lukin, and Peter Shor for insightful discussions. The authors would also like to thank Daniel J. Bernstein, Martin Eker{\aa}, Laura Lewis, Oded Regev, and anonymous reviewers for helpful comments and feedback on the manuscript.
\fi

\bibliographystyle{alpha}
\bibliography{main}

@article{LPDS12,
    author = "Jun Li and Xinhua Peng and Jiangfeng Du and Dieter Suter",
    title = "An Efficient Exact Quantum Algorithm for the Integer Square-free Decomposition Problem",
    year = "2012",
    journal = "Scientific Reports",
    volume = "2",
    issue = "202"
}

@article{schonhageeuc,
  title = {Schnelle {B}erechnung von {K}ettenbruchentwicklungen},
  volume = {1},
  ISSN = {1432-0525},
  url = {http://dx.doi.org/10.1007/BF00289520},
  DOI = {10.1007/bf00289520},
  number = {2},
  journal = {Acta Informatica},
  publisher = {Springer Science and Business Media LLC},
  author = {Sch{\"o}nhage,  A.},
  year = {1971},
  pages = {139–144}
}

@article{Moller2008OnSA,
  title={On {S}ch{\"o}nhage's algorithm and subquadratic integer gcd computation},
  author={Niels M{\"o}ller},
  journal={Math. Comput.},
  year={2008},
  volume={77},
  pages={589-607},
  url={https://api.semanticscholar.org/CorpusID:15065942}
}

@article{Regev23,
author = {Regev, Oded},
title = {An Efficient Quantum Factoring Algorithm},
year = {2025},
issue_date = {February 2025},
publisher = {Association for Computing Machinery},
address = {New York, NY, USA},
volume = {72},
number = {1},
issn = {0004-5411},
url = {https://doi.org/10.1145/3708471},
doi = {10.1145/3708471},
abstract = {We show that n-bit integers can be factorized by independently running a quantum circuit with (tilde{O}(n^{3/2})) gates for (sqrt {n}+4) times, and then using polynomial-time classical post-processing. The correctness of the algorithm relies on a certain number-theoretic conjecture. It is currently not clear if the algorithm can lead to improved physical implementations in practice.},
journal = {J. ACM},
month = jan,
articleno = {10},
numpages = {13},
keywords = {Shor’s algorithm, quantum computation, integer factorization}
}

@article{Montgomery1985,
  title = {Modular multiplication without trial division},
  volume = {44},
  ISSN = {1088-6842},
  url = {http://dx.doi.org/10.1090/S0025-5718-1985-0777282-X},
  DOI = {10.1090/s0025-5718-1985-0777282-x},
  number = {170},
  journal = {Mathematics of Computation},
  publisher = {American Mathematical Society (AMS)},
  author = {Montgomery,  Peter L.},
  year = {1985},
  pages = {519–521}
}

@inproceedings{yun,
author = {Yun, David Y.Y.},
title = {On square-free decomposition algorithms},
year = {1976},
isbn = {9781450377904},
publisher = {Association for Computing Machinery},
address = {New York, NY, USA},
url = {https://doi.org/10.1145/800205.806320},
doi = {10.1145/800205.806320},
abstract = {In this paper, we will give three different (including revamped versions of Horowitz's and Musser's) algorithms for computing the SQFR decomposition of polynomials in R[x]. Some algorithm analysis will be carried out to show the (asymptotic) superiority of the algorithm we propose (last of the three).},
booktitle = {Proceedings of the Third ACM Symposium on Symbolic and Algebraic Computation},
pages = {26–35},
numpages = {10},
location = {Yorktown Heights, New York, USA},
series = {SYMSAC '76}
}

@misc{thull1990uni,
  title={A Uni ed Approach to HGCD Algorithms for polynomials and integers},
  author={Thull, Klaus and Yap, Chee K},
  year={1990},
  publisher={Citeseer}
}

@inproceedings{DBLP:conf/focs/HalesH00,
  author       = {Lisa Hales and
                  Sean Hallgren},
  title        = {An Improved Quantum {Fourier} Transform Algorithm and Applications},
  booktitle    = {41st Annual Symposium on Foundations of Computer Science, {FOCS} 2000,
                  12-14 November 2000, Redondo Beach, California, {USA}},
  pages        = {515--525},
  publisher    = {{IEEE} Computer Society},
  year         = {2000},
  url          = {https://doi.org/10.1109/SFCS.2000.892139},
  doi          = {10.1109/SFCS.2000.892139},
  timestamp    = {Thu, 23 Mar 2023 23:57:53 +0100},
  biburl       = {https://dblp.org/rec/conf/focs/HalesH00.bib},
  bibsource    = {dblp computer science bibliography, https://dblp.org}
}

@article{cemm,
    author = "Cleve, Richard and Ekert, Artur and Macchiavello, Chiara and Mosca, Michele",
    title = "{Quantum algorithms revisited}",
    eprint = "quant-ph/9708016",
    archivePrefix = "arXiv",
    doi = "10.1098/rspa.1998.0164",
    journal = "Proc. Roy. Soc. Lond. A",
    volume = "454",
    pages = "339",
    year = "1998"
}

@book{hardy75,
  added-at = {2015-09-12T11:45:00.000+0200},
  author = {Hardy, G. H. and Wright, E. M.},
  biburl = {https://www.bibsonomy.org/bibtex/2b35318f49d1878aae846dd8d11fb101e/ytyoun},
  edition = {Fourth},
  interhash = {2d82e2a574c43e1a84b24b25682d33f6},
  intrahash = {b35318f49d1878aae846dd8d11fb101e},
  keywords = {number.theory textbook},
  publisher = {Oxford},
  timestamp = {2015-09-12T11:45:00.000+0200},
  title = {An Introduction to the Theory of Numbers},
  year = 1975
}

@misc{conrad,
  author        = {Keith Conrad},
  title         = {{G}auss and {J}acobi sums on finite fields and $\mathbb{Z}/m\mathbb{Z}$},
  howpublished = "\url{http://kconrad.math.uconn.edu/blurbs/gradnumthy/Gauss-Jacobi-sums.pdf}"
}

@article{DBLP:journals/iacr/Chen24,
  author       = {Yilei Chen},
  title        = {Quantum Algorithms for Lattice Problems},
  journal      = {{IACR} Cryptol. ePrint Arch.},
  pages        = {555},
  year         = {2024},
  url          = {https://eprint.iacr.org/2024/555},
  timestamp    = {Wed, 24 Apr 2024 14:28:20 +0200},
  biburl       = {https://dblp.org/rec/journals/iacr/Chen24.bib},
  bibsource    = {dblp computer science bibliography, https://dblp.org}
}

@book{bach1996algorithmic,
  title={Algorithmic Number Theory: Efficient algorithms},
  author={Bach, E. and Shallit, J.O.},
  number={v. 1},
  isbn={9780262024051},
  lccn={95025458},
  series={Algorithmic Number Theory},
  url={https://books.google.com/books?id=iJx1lP9ZcIkC},
  year={1996},
  publisher={MIT Press}
}

@article{takahashi,
  title={A quantum circuit for {Shor's} factoring algorithm using 2n+2 qubits},
  author={Takahashi, Yasuhiro and Kunihiro, Noboru},
  journal={Quantum Information \& Computation},
  volume={6},
  number={2},
  pages={184--192},
  year={2006},
  publisher={Rinton Press, Incorporated Paramus, NJ}
}

@article{KCVY21,
  author    = {Gregory D. Kahanamoku{-}Meyer and
               Soonwon Choi and
               Umesh V. Vazirani and
               Norman Y. Yao},
  title     = {Classically-Verifiable Quantum Advantage from a Computational {B}ell
               Test},
  journal   = {CoRR},
  volume    = {abs/2104.00687},
  year      = {2021},
  url       = {https://arxiv.org/abs/2104.00687},
  eprinttype = {arXiv},
  eprint    = {2104.00687},
  timestamp = {Tue, 13 Apr 2021 16:46:17 +0200},
  biburl    = {https://dblp.org/rec/journals/corr/abs-2104-00687.bib},
  bibsource = {dblp computer science bibliography, https://dblp.org}
}

@misc{kahanamokumeyer2024fast,
      title={Fast quantum integer multiplication with zero ancillas}, 
      author={Gregory D. Kahanamoku-Meyer and Norman Y. Yao},
      year={2024},
      eprint={2403.18006},
      archivePrefix={arXiv},
      primaryClass={quant-ph}
}

@article{hrs17,
  author       = {Thomas H{\"{a}}ner and
                  Martin Roetteler and
                  Krysta M. Svore},
  title        = {Factoring using $2n+2$ qubits with {Toffoli}
                  based modular multiplication},
  journal      = {Quantum Inf. Comput.},
  volume       = {17},
  number       = {7{\&}8},
  pages        = {673--684},
  year         = {2017},
  url          = {https://doi.org/10.26421/QIC17.7-8-7},
  doi          = {10.26421/QIC17.7-8-7},
  timestamp    = {Thu, 14 Oct 2021 08:51:57 +0200},
  biburl       = {https://dblp.org/rec/journals/qic/HanerRS17.bib},
  bibsource    = {dblp computer science bibliography, https://dblp.org}
}

@inproceedings{DBLP:conf/pqcrypto/EkeraH17,
  author       = {Martin Eker{\aa} and
                  Johan H{\aa}stad},
  editor       = {Tanja Lange and
                  Tsuyoshi Takagi},
  title        = {Quantum Algorithms for Computing Short Discrete Logarithms and Factoring
                  {RSA} Integers},
  booktitle    = {Post-Quantum Cryptography - 8th International Workshop, PQCrypto 2017,
                  Utrecht, The Netherlands, June 26-28, 2017, Proceedings},
  series       = {Lecture Notes in Computer Science},
  volume       = {10346},
  pages        = {347--363},
  publisher    = {Springer},
  year         = {2017},
  url          = {https://doi.org/10.1007/978-3-319-59879-6\_20},
  doi          = {10.1007/978-3-319-59879-6\_20},
  timestamp    = {Tue, 14 May 2019 10:00:42 +0200},
  biburl       = {https://dblp.org/rec/conf/pqcrypto/EkeraH17.bib},
  bibsource    = {dblp computer science bibliography, https://dblp.org}
}

@article{DBLP:journals/quantum/GidneyE21,
  author       = {Craig Gidney and
                  Martin Eker{\aa}},
  title        = {How to factor 2048 bit {RSA} integers in 8 hours using 20 million
                  noisy qubits},
  journal      = {Quantum},
  volume       = {5},
  pages        = {433},
  year         = {2021},
  url          = {https://doi.org/10.22331/q-2021-04-15-433},
  doi          = {10.22331/q-2021-04-15-433},
  timestamp    = {Tue, 13 Jul 2021 15:43:36 +0200},
  biburl       = {https://dblp.org/rec/journals/quantum/GidneyE21.bib},
  bibsource    = {dblp computer science bibliography, https://dblp.org}
}

@article{beckman,
  title={Efficient networks for quantum factoring},
  author={Beckman, David and Chari, Amalavoyal N and Devabhaktuni, Srikrishna and Preskill, John},
  journal={Physical Review A},
  volume={54},
  number={2},
  pages={1034},
  year={1996},
  publisher={APS}
}

@article{vedral,
  title={Quantum networks for elementary arithmetic operations},
  author={Vedral, Vlatko and Barenco, Adriano and Ekert, Artur},
  journal={Physical Review A},
  volume={54},
  number={1},
  pages={147},
  year={1996},
  publisher={APS}
}

@inproceedings{seifert2001using,
  title={Using fewer qubits in {Shor}’s factorization algorithm via simultaneous {Diophantine} approximation},
  author={Seifert, Jean-Pierre},
  booktitle={Cryptographers’ Track at the RSA Conference},
  pages={319--327},
  year={2001},
  organization={Springer}
}

@misc{zalka2006shors,
      title={Shor's algorithm with fewer (pure) qubits}, 
      author={Christof Zalka},
      year={2006},
      eprint={quant-ph/0601097},
      archivePrefix={arXiv},
      primaryClass={quant-ph}
}

@article{ProosZalka03,
  author       = {John Proos and
                  Christof Zalka},
  title        = {Shor's discrete logarithm quantum algorithm for elliptic curves},
  journal      = {Quantum Inf. Comput.},
  volume       = {3},
  number       = {4},
  pages        = {317--344},
  year         = {2003},
  url          = {https://doi.org/10.26421/QIC3.4-3},
  doi          = {10.26421/QIC3.4-3},
  timestamp    = {Fri, 30 Apr 2021 10:56:19 +0200},
  biburl       = {https://dblp.org/rec/journals/qic/ProosZ03.bib},
  bibsource    = {dblp computer science bibliography, https://dblp.org}
}

@article{lenstraecm,
 ISSN = {0003486X, 19398980},
 URL = {http://www.jstor.org/stable/1971363},
 abstract = {This paper is devoted to the description and analysis of a new algorithm to factor positive integers. It depends on the use of elliptic curves. The new method is obtained from Pollard's (p-1)-method (Proc. Cambridge Philos. Soc. 76 (1974), 521-528) by replacing the multiplicative group by the group of points on a random elliptic curve. It is conjectured that the algorithm determines a non-trivial divisor of a composite number n in expected time at most K(p)(log n)2, where p is the least prime dividing n and K is a function for which log $K(x) = \sqrt{(2 + o (1))log x log log x}$ for x → ∞. In the worst case, when n is the product of two primes of the same order of magnitude, this is $exp((1 + o(1))\sqrt{log n log log n})$ (for n → ∞). There are several other factoring algorithms of which the conjectural expected running time is given by the latter formula. However, these algorithms have a running time that is basically independent of the size of the prime factors of n, whereas the new elliptic curve method is substantially faster for small p.},
 author = {H. W. Lenstra},
 journal = {Annals of Mathematics},
 number = {3},
 pages = {649--673},
 publisher = {[Annals of Mathematics, Trustees of Princeton University on Behalf of the Annals of Mathematics, Mathematics Department, Princeton University]},
 title = {Factoring Integers with Elliptic Curves},
 urldate = {2024-11-01},
 volume = {126},
 year = {1987}
}

@article{schonhage1971fast,
  title={Fast multiplication of large numbers},
  author={Sch{\"o}nhage, Arnold and Strassen, Volker},
  journal={Computing},
  volume={7},
  pages={281--292},
  year={1971},
  publisher={Springer}
}

@InProceedings{pollard,
author="Pollard, J. M.",
editor="Lenstra, Arjen K.
and Lenstra, Hendrik W.",
title="Factoring with cubic integers",
booktitle="The development of the number field sieve",
year="1993",
publisher="Springer Berlin Heidelberg",
address="Berlin, Heidelberg",
pages="4--10",
abstract="We describe an experimental factoring method for numbers of form x3+k; at present we have used only k=2. The method is the cubic version of the idea given by Coppersmith, Odlyzko and Schroeppel (Algorithmica 1 (1986), 1--15), in their section `Gaussian integers'. We look for pairs of small coprime integers a and b such that:i.the integer a+bx is smooth,ii.the algebraic integer a+bz is smooth, where z3=−k. This is the same as asking that its norm, the integer a3 - kb3 shall be smooth (at least, it is when k=2).",
isbn="978-3-540-47892-8"
}

@inproceedings{llmp,
  author       = {Arjen K. Lenstra and
                  Hendrik W. {Lenstra Jr.} and
                  Mark S. Manasse and
                  John M. Pollard},
  editor       = {Harriet Ortiz},
  title        = {The Number Field Sieve},
  booktitle    = {Proceedings of the 22nd Annual {ACM} Symposium on Theory of Computing,
                  May 13-17, 1990, Baltimore, Maryland, {USA}},
  pages        = {564--572},
  publisher    = {{ACM}},
  year         = {1990},
  url          = {https://doi.org/10.1145/100216.100295},
  doi          = {10.1145/100216.100295},
  timestamp    = {Tue, 06 Nov 2018 11:07:04 +0100},
  biburl       = {https://dblp.org/rec/conf/stoc/LenstraLMP90.bib},
  bibsource    = {dblp computer science bibliography, https://dblp.org}
}

@article{Agrawal2004,
  title = {{PRIMES} is in {P}},
  volume = {160},
  ISSN = {0003-486X},
  url = {http://dx.doi.org/10.4007/annals.2004.160.781},
  DOI = {10.4007/annals.2004.160.781},
  number = {2},
  journal = {Annals of Mathematics},
  publisher = {Annals of Mathematics},
  author = {Agrawal,  Manindra and Kayal,  Neeraj and Saxena,  Nitin},
  year = {2004},
  month = sep,
  pages = {781–793}
}

@InProceedings{blp,
author="Buhler, J. P.
and Lenstra, H. W.
and Pomerance, Carl",
editor="Lenstra, Arjen K.
and Lenstra, Hendrik W.",
title="Factoring integers with the number field sieve",
booktitle="The development of the number field sieve",
year="1993",
publisher="Springer Berlin Heidelberg",
address="Berlin, Heidelberg",
pages="50--94",
abstract="In 1990, the ninth Fermat number was factored into primes by means of a new algorithm, the ``number field sieve'', which was proposed by John Pollard. The present paper is devoted to the description and analysis of a more general version of the number field sieve. It should be possible to use this algorithm to factor arbitrary integers into prime factors, not just integers of a special form like the ninth Fermat number. Under reasonable heuristic assumptions, the analysis predicts that the time needed by the general number field sieve to factor n is exp((c+o(1))(logn)1/3(loglogn)2/3) (for n {\textrightarrow} ∞), where c=(64/9)1/3=1.9223. This is asymptotically faster than all other known factoring algorithms, such as the quadratic sieve and the elliptic curve method.",
isbn="978-3-540-47892-8"
}

@article{regev09,
  author       = {Oded Regev},
  title        = {On lattices, learning with errors, random linear codes, and cryptography},
  journal      = {J. {ACM}},
  volume       = {56},
  number       = {6},
  pages        = {34:1--34:40},
  year         = {2009},
  url          = {https://doi.org/10.1145/1568318.1568324},
  doi          = {10.1145/1568318.1568324},
  timestamp    = {Fri, 24 Mar 2023 16:31:07 +0100},
  biburl       = {https://dblp.org/rec/journals/jacm/Regev09.bib},
  bibsource    = {dblp computer science bibliography, https://dblp.org}
}

@InProceedings{ekeragartner,
author="Eker{\aa}, Martin
and G{\"a}rtner, Joel",
editor="Saarinen, Markku-Juhani
and Smith-Tone, Daniel",
title="Extending {R}egev's Factoring Algorithm to Compute Discrete Logarithms",
booktitle="Post-Quantum Cryptography",
year="2024",
publisher="Springer Nature Switzerland",
address="Cham",
pages="211--242",
abstract="Regev recently introduced a quantum factoring algorithm that may be perceived as a d-dimensional variation of Shor's factoring algorithm. In this work, we extend Regev's factoring algorithm to an algorithm for computing discrete logarithms in a natural way. Furthermore, we discuss natural extensions of Regev's factoring algorithm to order finding, and to factoring completely via order finding. For all of these algorithms, we discuss various practical implementation considerations, including in particular the robustness of the post-processing.",
isbn="978-3-031-62746-0"
}

@article{martinis2019,
  title = {Quantum supremacy using a programmable superconducting processor},
  volume = {574},
  ISSN = {1476-4687},
  url = {http://dx.doi.org/10.1038/s41586-019-1666-5},
  DOI = {10.1038/s41586-019-1666-5},
  number = {7779},
  journal = {Nature},
  publisher = {Springer Science and Business Media LLC},
  author = {Arute,  Frank and Arya,  Kunal and Babbush,  Ryan and Bacon,  Dave and Bardin,  Joseph C. and Barends,  Rami and Biswas,  Rupak and Boixo,  Sergio and Brandao,  Fernando G. S. L. and Buell,  David A. and Burkett,  Brian and Chen,  Yu and Chen,  Zijun and Chiaro,  Ben and Collins,  Roberto and Courtney,  William and Dunsworth,  Andrew and Farhi,  Edward and Foxen,  Brooks and Fowler,  Austin and Gidney,  Craig and Giustina,  Marissa and Graff,  Rob and Guerin,  Keith and Habegger,  Steve and Harrigan,  Matthew P. and Hartmann,  Michael J. and Ho,  Alan and Hoffmann,  Markus and Huang,  Trent and Humble,  Travis S. and Isakov,  Sergei V. and Jeffrey,  Evan and Jiang,  Zhang and Kafri,  Dvir and Kechedzhi,  Kostyantyn and Kelly,  Julian and Klimov,  Paul V. and Knysh,  Sergey and Korotkov,  Alexander and Kostritsa,  Fedor and Landhuis,  David and Lindmark,  Mike and Lucero,  Erik and Lyakh,  Dmitry and Mandrà,  Salvatore and McClean,  Jarrod R. and McEwen,  Matthew and Megrant,  Anthony and Mi,  Xiao and Michielsen,  Kristel and Mohseni,  Masoud and Mutus,  Josh and Naaman,  Ofer and Neeley,  Matthew and Neill,  Charles and Niu,  Murphy Yuezhen and Ostby,  Eric and Petukhov,  Andre and Platt,  John C. and Quintana,  Chris and Rieffel,  Eleanor G. and Roushan,  Pedram and Rubin,  Nicholas C. and Sank,  Daniel and Satzinger,  Kevin J. and Smelyanskiy,  Vadim and Sung,  Kevin J. and Trevithick,  Matthew D. and Vainsencher,  Amit and Villalonga,  Benjamin and White,  Theodore and Yao,  Z. Jamie and Yeh,  Ping and Zalcman,  Adam and Neven,  Hartmut and Martinis,  John M.},
  year = {2019},
  month = oct,
  pages = {505–510}
}

@misc{grover2002creating,
      title={Creating superpositions that correspond to efficiently integrable probability distributions}, 
      author={Lov Grover and Terry Rudolph},
      year={2002},
      eprint={quant-ph/0208112},
      archivePrefix={arXiv},
      primaryClass={quant-ph}
}

@article{gidney2017factoring,
  title={Factoring with $n+2$ clean qubits and $n-1$ dirty qubits},
  author={Gidney, Craig},
  journal={arXiv preprint arXiv:1706.07884},
  year={2017}
}

@article{Harvey21,
  doi = {10.4007/annals.2021.193.2.4},
  url = {https://doi.org/10.4007/annals.2021.193.2.4},
  year = {2021},
  month = mar,
  publisher = {Annals of Mathematics},
  volume = {193},
  number = {2},
  author = {David Harvey and Joris van der Hoeven},
  title = {Integer multiplication in time ${O}(n\log n)$},
  journal = {Annals of Mathematics}
}

@article{Cop02,
  title={An approximate {Fourier} transform useful in quantum factoring},
  author={Coppersmith, Don},
  journal={arXiv preprint quant-ph/0201067},
  year={2002}
}

@article{SCHMIDTSAMOA200679,
title = {A New {R}abin-type Trapdoor Permutation Equivalent to Factoring},
journal = {Electronic Notes in Theoretical Computer Science},
volume = {157},
number = {3},
pages = {79-94},
year = {2006},
note = {Proceedings of the First International Workshop on Security and Trust Management (STM 2005)},
issn = {1571-0661},
doi = {https://doi.org/10.1016/j.entcs.2005.09.039},
author = {Katja Schmidt-Samoa}
}

@inproceedings{DBLP:conf/crypto/Takagi98,
  author       = {Tsuyoshi Takagi},
  editor       = {Hugo Krawczyk},
  title        = {Fast {RSA}-Type Cryptosystem Modulo p\({}^{\mbox{k}}\)q},
  booktitle    = {Advances in Cryptology - {CRYPTO} '98, 18th Annual International Cryptology
                  Conference, Santa Barbara, California, USA, August 23-27, 1998, Proceedings},
  series       = {Lecture Notes in Computer Science},
  volume       = {1462},
  pages        = {318--326},
  publisher    = {Springer},
  year         = {1998},
  url          = {https://doi.org/10.1007/BFb0055738},
  doi          = {10.1007/BFB0055738},
  timestamp    = {Tue, 01 Jun 2021 15:21:29 +0200},
  biburl       = {https://dblp.org/rec/conf/crypto/Takagi98.bib},
  bibsource    = {dblp computer science bibliography, https://dblp.org}
}

@article{DBLP:journals/joc/PaulusT00,
  author       = {Sachar Paulus and
                  Tsuyoshi Takagi},
  title        = {A New Public-Key Cryptosystem over a Quadratic Order with Quadratic
                  Decryption Time},
  journal      = {J. Cryptol.},
  volume       = {13},
  number       = {2},
  pages        = {263--272},
  year         = {2000},
  url          = {https://doi.org/10.1007/s001459910010},
  doi          = {10.1007/S001459910010},
  timestamp    = {Fri, 18 Sep 2020 11:17:37 +0200},
  biburl       = {https://dblp.org/rec/journals/joc/PaulusT00.bib},
  bibsource    = {dblp computer science bibliography, https://dblp.org}
}

@ARTICLE{50373,
  author={Okamoto, T.},
  journal={IEEE Transactions on Information Theory}, 
  title={A fast signature scheme based on congruential polynomial operations}, 
  year={1990},
  volume={36},
  number={1},
  pages={47-53},
  keywords={Polynomials;Digital signatures;Security;Public key;Protection;Banking;Public key cryptography;Safety;Privacy;Information processing},
  doi={10.1109/18.50373}}

@inproceedings{OkamotoU98,
  author       = {Tatsuaki Okamoto and
                  Shigenori Uchiyama},
  editor       = {Kaisa Nyberg},
  title        = {A New Public-Key Cryptosystem as Secure as Factoring},
  booktitle    = {Advances in Cryptology - {EUROCRYPT} '98, International Conference
                  on the Theory and Application of Cryptographic Techniques, Espoo,
                  Finland, May 31 - June 4, 1998, Proceeding},
  series       = {Lecture Notes in Computer Science},
  volume       = {1403},
  pages        = {308--318},
  publisher    = {Springer},
  year         = {1998},
  url          = {https://doi.org/10.1007/BFb0054135},
  doi          = {10.1007/BFB0054135},
  timestamp    = {Tue, 14 May 2019 10:00:53 +0200},
  biburl       = {https://dblp.org/rec/conf/eurocrypt/OkamotoU98.bib},
  bibsource    = {dblp computer science bibliography, https://dblp.org}
}

@article{Peralta1996FasterFO,
  title={Faster factoring of integers of a special form},
  author={Ren{\'e} Peralta and Eiji Okamoto},
  journal={IEICE Transactions on Fundamentals of Electronics, Communications and Computer Sciences},
  year={1996},
  volume={79},
  pages={489-493},
  url={https://api.semanticscholar.org/CorpusID:14646285}
}

@article{BuchmannLenstra,
author = "Johannes A. Buchmann and Hendrik W. {Lenstra Jr}",
title = "Approximating rings of integers in number fields",
journal = "Journal Theorie de Nombres Bordeaux", volume = "6",
issue = "2",
pages = "221--260", 
year = "1994"
}

@inproceedings{rv24,
  author       = {Seyoon Ragavan and
                  Vinod Vaikuntanathan},
  editor       = {Leonid Reyzin and
                  Douglas Stebila},
  title        = {Space-Efficient and Noise-Robust Quantum Factoring},
  booktitle    = {Advances in Cryptology - {CRYPTO} 2024 - 44th Annual International
                  Cryptology Conference, Santa Barbara, CA, USA, August 18-22, 2024,
                  Proceedings, Part {VI}},
  series       = {Lecture Notes in Computer Science},
  volume       = {14925},
  pages        = {107--140},
  publisher    = {Springer},
  year         = {2024},
  url          = {https://doi.org/10.1007/978-3-031-68391-6\_4},
  doi          = {10.1007/978-3-031-68391-6\_4},
  timestamp    = {Fri, 20 Sep 2024 14:01:18 +0200},
  biburl       = {https://dblp.org/rec/conf/crypto/RagavanV24.bib},
  bibsource    = {dblp computer science bibliography, https://dblp.org}
}

@article{DBLP:journals/corr/abs-2405-14381,
  author       = {Martin Eker{\aa} and
                  Joel G{\"{a}}rtner},
  title        = {A high-level comparison of state-of-the-art quantum algorithms for
                  breaking asymmetric cryptography},
  journal      = {CoRR},
  volume       = {abs/2405.14381},
  year         = {2024},
  url          = {https://doi.org/10.48550/arXiv.2405.14381},
  doi          = {10.48550/ARXIV.2405.14381},
  eprinttype    = {arXiv},
  eprint       = {2405.14381},
  timestamp    = {Wed, 19 Jun 2024 08:52:50 +0200},
  biburl       = {https://dblp.org/rec/journals/corr/abs-2405-14381.bib},
  bibsource    = {dblp computer science bibliography, https://dblp.org}
}

@inproceedings{DBLP:conf/crypto/BonehL96,
  author       = {Dan Boneh and
                  Richard J. Lipton},
  editor       = {Neal Koblitz},
  title        = {Algorithms for Black-Box Fields and their Application to Cryptography
                  (Extended Abstract)},
  booktitle    = {Advances in Cryptology - {CRYPTO} '96, 16th Annual International Cryptology
                  Conference, Santa Barbara, California, USA, August 18-22, 1996, Proceedings},
  series       = {Lecture Notes in Computer Science},
  volume       = {1109},
  pages        = {283--297},
  publisher    = {Springer},
  year         = {1996},
  url          = {https://doi.org/10.1007/3-540-68697-5\_22},
  doi          = {10.1007/3-540-68697-5\_22},
  timestamp    = {Tue, 14 May 2019 10:00:48 +0200},
  biburl       = {https://dblp.org/rec/conf/crypto/BonehL96.bib},
  bibsource    = {dblp computer science bibliography, https://dblp.org}
}

@inproceedings{DBLP:conf/crypto/CorriganGibbsW24,
  author       = {Henry Corrigan{-}Gibbs and
                  David J. Wu},
  editor       = {Leonid Reyzin and
                  Douglas Stebila},
  title        = {The One-Wayness of Jacobi Signatures},
  booktitle    = {Advances in Cryptology - {CRYPTO} 2024 - 44th Annual International
                  Cryptology Conference, Santa Barbara, CA, USA, August 18-22, 2024,
                  Proceedings, Part {V}},
  series       = {Lecture Notes in Computer Science},
  volume       = {14924},
  pages        = {3--13},
  publisher    = {Springer},
  year         = {2024},
  url          = {https://doi.org/10.1007/978-3-031-68388-6\_1},
  doi          = {10.1007/978-3-031-68388-6\_1},
  timestamp    = {Wed, 28 Aug 2024 21:54:02 +0200},
  biburl       = {https://dblp.org/rec/conf/crypto/CorriganGibbsW24.bib},
  bibsource    = {dblp computer science bibliography, https://dblp.org}
}

@article{BISSON2011815,
title = {Computing the endomorphism ring of an ordinary elliptic curve over a finite field},
journal = {Journal of Number Theory},
volume = {131},
number = {5},
pages = {815-831},
year = {2011},
note = {Elliptic Curve Cryptography},
issn = {0022-314X},
doi = {https://doi.org/10.1016/j.jnt.2009.11.003},
url = {https://www.sciencedirect.com/science/article/pii/S0022314X09002789},
author = {Gaetan Bisson and Andrew V. Sutherland},
abstract = {We present two algorithms to compute the endomorphism ring of an ordinary elliptic curve E defined over a finite field Fq. Under suitable heuristic assumptions, both have subexponential complexity. We bound the complexity of the first algorithm in terms of logq, while our bound for the second algorithm depends primarily on log|DE|, where DE is the discriminant of the order isomorphic to End(E). As a byproduct, our method yields a short certificate that may be used to verify that the endomorphism ring is as claimed.}
}

@inproceedings{DBLP:conf/eurocrypt/CastagnosL09,
  author       = {Guilhem Castagnos and
                  Fabien Laguillaumie},
  editor       = {Antoine Joux},
  title        = {On the Security of Cryptosystems with Quadratic Decryption: The Nicest
                  Cryptanalysis},
  booktitle    = {Advances in Cryptology - {EUROCRYPT} 2009, 28th Annual International
                  Conference on the Theory and Applications of Cryptographic Techniques,
                  Cologne, Germany, April 26-30, 2009. Proceedings},
  series       = {Lecture Notes in Computer Science},
  volume       = {5479},
  pages        = {260--277},
  publisher    = {Springer},
  year         = {2009},
  url          = {https://doi.org/10.1007/978-3-642-01001-9\_15},
  doi          = {10.1007/978-3-642-01001-9\_15},
  timestamp    = {Tue, 14 May 2019 10:00:53 +0200},
  biburl       = {https://dblp.org/rec/conf/eurocrypt/CastagnosL09.bib},
  bibsource    = {dblp computer science bibliography, https://dblp.org}
}

@inproceedings{DBLP:conf/asiacrypt/CastagnosJLN09,
  author       = {Guilhem Castagnos and
                  Antoine Joux and
                  Fabien Laguillaumie and
                  Phong Q. Nguyen},
  editor       = {Mitsuru Matsui},
  title        = {Factoring \emph{pq}\({}^{\mbox{2}}\) with Quadratic Forms: Nice Cryptanalyses},
  booktitle    = {Advances in Cryptology - {ASIACRYPT} 2009, 15th International Conference
                  on the Theory and Application of Cryptology and Information Security,
                  Tokyo, Japan, December 6-10, 2009. Proceedings},
  series       = {Lecture Notes in Computer Science},
  volume       = {5912},
  pages        = {469--486},
  publisher    = {Springer},
  year         = {2009},
  url          = {https://doi.org/10.1007/978-3-642-10366-7\_28},
  doi          = {10.1007/978-3-642-10366-7\_28},
  timestamp    = {Wed, 07 Dec 2022 23:12:07 +0100},
  biburl       = {https://dblp.org/rec/conf/asiacrypt/CastagnosJLN09.bib},
  bibsource    = {dblp computer science bibliography, https://dblp.org}
}

@article{Harvey_2022,
   title={A deterministic algorithm for finding r-power divisors},
   volume={8},
   ISSN={2363-9555},
   url={http://dx.doi.org/10.1007/s40993-022-00387-w},
   DOI={10.1007/s40993-022-00387-w},
   number={4},
   journal={Research in Number Theory},
   publisher={Springer Science and Business Media LLC},
   author={Harvey, David and Hittmeir, Markus},
   year={2022},
   month=oct }

@incollection{DBLP:series/isc/May10,
  author       = {Alexander May},
  editor       = {Phong Q. Nguyen and
                  Brigitte Vall{\'{e}}e},
  title        = {Using {LLL}-Reduction for Solving {RSA} and Factorization Problems},
  booktitle    = {The {LLL} Algorithm - Survey and Applications},
  series       = {Information Security and Cryptography},
  pages        = {315--348},
  publisher    = {Springer},
  year         = {2010},
  url          = {https://doi.org/10.1007/978-3-642-02295-1\_10},
  doi          = {10.1007/978-3-642-02295-1\_10},
  timestamp    = {Fri, 02 Nov 2018 09:26:51 +0100},
  biburl       = {https://dblp.org/rec/series/isc/May10.bib},
  bibsource    = {dblp computer science bibliography, https://dblp.org}
}

@inproceedings{DBLP:conf/ctrsa/CoronFRZ16,
  author       = {Jean{-}S{\'{e}}bastien Coron and
                  Jean{-}Charles Faug{\`{e}}re and
                  Gu{\'{e}}na{\"{e}}l Renault and
                  Rina Zeitoun},
  editor       = {Kazue Sako},
  title        = {Factoring N=p{\^{}}rq{\^{}}s for Large r and s},
  booktitle    = {Topics in Cryptology - {CT-RSA} 2016 - The Cryptographers' Track at
                  the {RSA} Conference 2016, San Francisco, CA, USA, February 29 - March
                  4, 2016, Proceedings},
  series       = {Lecture Notes in Computer Science},
  volume       = {9610},
  pages        = {448--464},
  publisher    = {Springer},
  year         = {2016},
  url          = {https://doi.org/10.1007/978-3-319-29485-8\_26},
  doi          = {10.1007/978-3-319-29485-8\_26},
  timestamp    = {Tue, 14 May 2019 10:00:52 +0200},
  biburl       = {https://dblp.org/rec/conf/ctrsa/CoronFRZ16.bib},
  bibsource    = {dblp computer science bibliography, https://dblp.org}
}

@inproceedings{DBLP:conf/crypto/BonehDH99,
  author       = {Dan Boneh and
                  Glenn Durfee and
                  Nick Howgrave{-}Graham},
  editor       = {Michael J. Wiener},
  title        = {Factoring {N} = p\({}^{\mbox{r}}\)q for Large r},
  booktitle    = {Advances in Cryptology - {CRYPTO} '99, 19th Annual International Cryptology
                  Conference, Santa Barbara, California, USA, August 15-19, 1999, Proceedings},
  series       = {Lecture Notes in Computer Science},
  volume       = {1666},
  pages        = {326--337},
  publisher    = {Springer},
  year         = {1999},
  url          = {https://doi.org/10.1007/3-540-48405-1\_21},
  doi          = {10.1007/3-540-48405-1\_21},
  timestamp    = {Tue, 14 May 2019 10:00:48 +0200},
  biburl       = {https://dblp.org/rec/conf/crypto/BonehDH99.bib},
  bibsource    = {dblp computer science bibliography, https://dblp.org}
}

@article{shor97,
  author       = {Peter W. Shor},
  title        = {Polynomial-Time Algorithms for Prime Factorization and Discrete Logarithms
                  on a Quantum Computer},
  journal      = {{SIAM} J. Comput.},
  volume       = {26},
  number       = {5},
  pages        = {1484--1509},
  year         = {1997},
  url          = {https://doi.org/10.1137/S0097539795293172},
  doi          = {10.1137/S0097539795293172},
  timestamp    = {Wed, 14 Nov 2018 10:45:08 +0100},
  biburl       = {https://dblp.org/rec/journals/siamcomp/Shor97.bib},
  bibsource    = {dblp computer science bibliography, https://dblp.org}
}

@article{granville,
author = {Granville, Andrew},
year = {2000},
month = {01},
pages = {},
title = {Smooth numbers: Computational number theory and beyond},
volume = {44},
journal = {Math. Sci. Res. Inst. Publ.},
}

@article{Akta2017OnTN,
  title={On the number of special numbers},
  author={Kevser Akta\c{s} and M. Ram Murty},
  journal={Proceedings - Mathematical Sciences},
  year={2017},
  volume={127},
  pages={423-430},
  url={https://api.semanticscholar.org/CorpusID:125439724}
}

@article{gidney2019,
      title={Asymptotically efficient quantum {Karatsuba} multiplication},
  author={Gidney, Craig},
  journal={arXiv preprint arXiv:1904.07356},
  year={2019}
}

@inproceedings{mulder24,
    title = "FAST SQUARE-FREE DECOMPOSITION OF INTEGERS USING
CLASS GROUPS",
    author = "Erik Mulder",
    year = "2024",
    conference = "Algorithmic Number Theory Symposium"
}

@inproceedings{CleveW00,
  author       = {Richard Cleve and
                  John Watrous},
  title        = {Fast parallel circuits for the quantum {Fourier} transform},
  booktitle    = {41st Annual Symposium on Foundations of Computer Science, {FOCS} 2000,
                  12-14 November 2000, Redondo Beach, California, {USA}},
  pages        = {526--536},
  publisher    = {{IEEE} Computer Society},
  year         = {2000},
  url          = {https://doi.org/10.1109/SFCS.2000.892140},
  doi          = {10.1109/SFCS.2000.892140},
  timestamp    = {Thu, 23 Mar 2023 23:57:53 +0100},
  biburl       = {https://dblp.org/rec/conf/focs/CleveW00.bib},
  bibsource    = {dblp computer science bibliography, https://dblp.org}
}

@article{beauregard,
  author       = {St{\'{e}}phane Beauregard},
  title        = {Circuit for {Shor's} algorithm using 2n+3 qubits},
  journal      = {Quantum Inf. Comput.},
  volume       = {3},
  number       = {2},
  pages        = {175--185},
  year         = {2003},
  url          = {https://doi.org/10.26421/QIC3.2-8},
  doi          = {10.26421/QIC3.2-8},
  timestamp    = {Fri, 30 Apr 2021 10:56:19 +0200},
  biburl       = {https://dblp.org/rec/journals/qic/Beauregard03.bib},
  bibsource    = {dblp computer science bibliography, https://dblp.org}
}

@inproceedings{roetteler17,
  title={Quantum resource estimates for computing elliptic curve discrete logarithms},
  author={Roetteler, Martin and Naehrig, Michael and Svore, Krysta M and Lauter, Kristin},
  booktitle={Advances in Cryptology--ASIACRYPT 2017: 23rd International Conference on the Theory and Applications of Cryptology and Information Security, Hong Kong, China, December 3-7, 2017, Proceedings, Part II 23},
  pages={241--270},
  year={2017},
  organization={Springer}
}

@article{levine_note_1990,
	title = {A {Note} on {Bennett}’s {Time}-{Space} {Tradeoff} for {Reversible} {Computation}},
	volume = {19},
	issn = {0097-5397},
	url = {https://epubs.siam.org/doi/abs/10.1137/0219046},
	doi = {10.1137/0219046},
	number = {4},
	urldate = {2020-10-09},
	journal = {SIAM Journal on Computing},
	author = {Levine, Robert Y. and Sherman, Alan T.},
	month = aug,
	year = {1990},
	note = {Publisher: Society for Industrial and Applied Mathematics},
	pages = {673--677},
	file = {Full Text PDF:/home/greg/Zotero/storage/QTYJFK5C/Levine and Sherman - 1990 - A Note on Bennett’s Time-Space Tradeoff for Revers.pdf:application/pdf;Snapshot:/home/greg/Zotero/storage/FC5QPBVI/0219046.html:text/html},
}

@article{bennett_timespace_1989,
	title = {Time/{Space} {Trade}-{Offs} for {Reversible} {Computation}},
	volume = {18},
	issn = {0097-5397},
	url = {https://epubs.siam.org/doi/abs/10.1137/0218053},
	doi = {10.1137/0218053},
	number = {4},
	urldate = {2020-10-09},
	journal = {SIAM Journal on Computing},
	author = {Bennett, Charles H.},
	month = aug,
	year = {1989},
	note = {Publisher: Society for Industrial and Applied Mathematics},
	pages = {766--776},
	file = {Full Text PDF:/home/greg/Zotero/storage/8RAPJZ3X/Bennett - 1989 - TimeSpace Trade-Offs for Reversible Computation.pdf:application/pdf;Snapshot:/home/greg/Zotero/storage/RPEK9PJF/0218053.html:text/html},
}

@article{DBLP:journals/jacm/PippengerF79,
  author       = {Nicholas Pippenger and
                  Michael J. Fischer},
  title        = {Relations Among Complexity Measures},
  journal      = {J. {ACM}},
  volume       = {26},
  number       = {2},
  pages        = {361--381},
  year         = {1979},
  url          = {https://doi.org/10.1145/322123.322138},
  doi          = {10.1145/322123.322138},
  timestamp    = {Tue, 06 Nov 2018 12:51:46 +0100},
  biburl       = {https://dblp.org/rec/journals/jacm/PippengerF79.bib},
  bibsource    = {dblp computer science bibliography, https://dblp.org}
}

@article{bennett_logical_1973,
	title = {Logical {Reversibility} of {Computation}},
	volume = {17},
	issn = {0018-8646},
	doi = {10.1147/rd.176.0525},
	number = {6},
	journal = {IBM Journal of Research and Development},
	author = {Bennett, C. H.},
	month = nov,
	year = {1973},
	note = {Conference Name: IBM Journal of Research and Development},
	pages = {525--532},
	file = {IEEE Xplore Abstract Record:/home/greg/Zotero/storage/ESZZZ3RR/5391327.html:text/html},
}

@book{knuth_art_1998,
  title = {The Art of Computer Programming, {{Volume II}}: {{Seminumerical Algorithms}}, 3rd {{Edition}}},
  author = {Knuth, Donald Ervin},
  year = {1998},
  publisher = {Addison-Wesley},
  isbn = {0-201-89684-2}
}

@article{draper_logarithmic-depth_2006,
  title = {A Logarithmic-Depth Quantum Carry-Lookahead Adder},
  author = {Draper, Thomas G. and Kutin, Samuel A. and Rains, Eric M. and Svore, Krysta M.},
  year = {2006},
  month = jul,
  journal = {Quantum Information \& Computation},
  volume = {6},
  number = {4},
  pages = {351--369},
  issn = {1533-7146},
  keywords = {carry-lookahead addition,quantum addition,quantum computation,Shor's algorithm},
  file = {/home/greg/Zotero/storage/QPQTSCA6/Draper et al. - 2004 - A logarithmic-depth quantum carry-lookahead adder.pdf}
}

@article{brakerski_cryptographic_2021,
  title = {A {{Cryptographic Test}} of {{Quantumness}} and {{Certifiable Randomness}} from a {{Single Quantum Device}}},
  author = {Brakerski, Zvika and Christiano, Paul and Mahadev, Urmila and Vazirani, Umesh and Vidick, Thomas},
  year = {2021},
  month = aug,
  journal = {Journal of the ACM (JACM)},
  publisher = {ACM},
  doi = {10.1145/3441309},
  urldate = {2022-06-23},
  langid = {english},
  annotation = {PUB27\\
New York, NY},
  file = {/home/greg/Zotero/storage/BE4V8PTC/BrakerskiZvika et al. - 2021 - A Cryptographic Test of Quantumness and Certifiabl.pdf;/home/greg/Zotero/storage/8FDIR9YT/3441309.html}
}

@inproceedings{brakerski_simpler_2020,
  title = {Simpler {{Proofs}} of {{Quantumness}}},
  booktitle = {15th {{Conference}} on the {{Theory}} of {{Quantum Computation}}, {{Communication}} and {{Cryptography}} ({{TQC}} 2020)},
  author = {Brakerski, Zvika and Koppula, Venkata and Vazirani, Umesh and Vidick, Thomas},
  editor = {Flammia, Steven T.},
  year = {2020},
  series = {Leibniz {{International Proceedings}} in {{Informatics}} ({{LIPIcs}})},
  volume = {158},
  pages = {8:1--8:14},
  publisher = {Schloss Dagstuhl--Leibniz-Zentrum f{\"u}r Informatik},
  address = {Dagstuhl, Germany},
  issn = {1868-8969},
  doi = {10.4230/LIPIcs.TQC.2020.8},
  urldate = {2022-06-23},
  isbn = {978-3-95977-146-7},
  keywords = {Learning with Errors,Proof of Quantumness,Random Oracle},
  file = {/home/greg/Zotero/storage/7FBNL7LY/Brakerski et al. - 2020 - Simpler Proofs of Quantumness.pdf;/home/greg/Zotero/storage/DB7AMS6E/12067.html}
}

@inproceedings{yamakawa_verifiable_2022,
  title = {Verifiable {{Quantum Advantage}} without {{Structure}}},
  booktitle = {2022 {{IEEE}} 63rd {{Annual Symposium}} on {{Foundations}} of {{Computer Science}} ({{FOCS}})},
  author = {Yamakawa, Takashi and Zhandry, Mark},
  year = {2022},
  month = oct,
  pages = {69--74},
  issn = {2575-8454},
  doi = {10.1109/FOCS54457.2022.00014},
  keywords = {Computer science,Cryptographic hash function,Encryption,Public key,Quantum computing,Resistance,Search problems},
  file = {/home/greg/Zotero/storage/XWTR8YIT/Yamakawa and Zhandry - 2022 - Verifiable Quantum Advantage without Structure.pdf;/home/greg/Zotero/storage/JBIV6EYR/9996892.html}
}

@inproceedings{kalai_quantum_2023,
  title = {Quantum {{Advantage}} from {{Any Non-local Game}}},
  booktitle = {Proceedings of the 55th {{Annual ACM Symposium}} on {{Theory}} of {{Computing}}},
  author = {Kalai, Yael and Lombardi, Alex and Vaikuntanathan, Vinod and Yang, Lisa},
  year = {2023},
  month = jun,
  series = {{{STOC}} 2023},
  pages = {1617--1628},
  publisher = {Association for Computing Machinery},
  address = {New York, NY, USA},
  doi = {10.1145/3564246.3585164},
  urldate = {2023-09-05},
  abstract = {We show a general method of compiling any k-prover non-local game into a single-prover (computationally sound) interactive game maintaining the same quantum completeness and classical soundness guarantees, up to a negligible additive factor in a security parameter. Our compiler uses any quantum homomorphic encryption scheme (Mahadev, FOCS 2018; Brakerski, CRYPTO 2018) satisfying a natural form of correctness with respect to auxiliary quantum input. The homomorphic encryption scheme is used as a cryptographic mechanism to simulate the effect of spatial separation, and is required to evaluate k-1 prover strategies out of k on encrypted queries. In conjunction with the rich literature on (entangled) multi-prover non-local games starting from the celebrated CHSH game (Clauser, Horne, Shimony and Holt, Physical Review Letters 1969), our compiler gives a broad and rich framework for constructing protocols that classically verify quantum advantage.},
  isbn = {978-1-4503-9913-5},
  keywords = {cryptographic protocols,non-local games,Quantum computational advantage,quantum homomorphic encryption},
  file = {/home/greg/Zotero/storage/NKS5LPDU/Kalai et al. - 2023 - Quantum Advantage from Any Non-local Game.pdf}
}

@misc{aaronson_verifiable_2024,
  title = {On Verifiable Quantum Advantage with Peaked Circuit Sampling},
  author = {Aaronson, Scott and Zhang, Yuxuan},
  year = {2024},
  month = apr,
  number = {arXiv:2404.14493},
  eprint = {2404.14493},
  primaryclass = {quant-ph},
  publisher = {arXiv},
  doi = {10.48550/arXiv.2404.14493},
  urldate = {2024-04-29},
  archiveprefix = {arXiv},
  keywords = {Quantum Physics},
  file = {/home/greg/Zotero/storage/QLAJL7WA/Aaronson and Zhang - 2024 - On verifiable quantum advantage with peaked circuit sampling.pdf;/home/greg/Zotero/storage/9BIUYICM/2404.html}
}

@article{canetti_random_2004,
  title = {The Random Oracle Methodology, Revisited},
  author = {Canetti, Ran and Goldreich, Oded and Halevi, Shai},
  year = {2004},
  month = jul,
  journal = {Journal of the ACM},
  volume = {51},
  number = {4},
  pages = {557--594},
  issn = {0004-5411},
  doi = {10.1145/1008731.1008734},
  urldate = {2023-03-15},
  keywords = {Correlation intractability,cryptography,CS-proofs,diagonalization,the random-oracle model},
}

@article{koblitz_random_2015,
  title = {The Random Oracle Model: A Twenty-Year Retrospective},
  shorttitle = {The Random Oracle Model},
  author = {Koblitz, Neal and Menezes, Alfred J.},
  year = {2015},
  month = dec,
  journal = {Designs, Codes and Cryptography},
  volume = {77},
  number = {2},
  pages = {587--610},
  issn = {1573-7586},
  doi = {10.1007/s10623-015-0094-2},
  urldate = {2021-08-13},
  langid = {english},
}

@inproceedings{haner_improved_2020,
  title = {Improved {{Quantum Circuits}} for {{Elliptic Curve Discrete Logarithms}}},
  booktitle = {Post-{{Quantum Cryptography}}},
  author = {H{\"a}ner, Thomas and Jaques, Samuel and Naehrig, Michael and Roetteler, Martin and Soeken, Mathias},
  editor = {Ding, Jintai and Tillich, Jean-Pierre},
  year = {2020},
  series = {Lecture {{Notes}} in {{Computer Science}}},
  pages = {425--444},
  publisher = {Springer International Publishing},
  address = {Cham},
  doi = {10.1007/978-3-030-44223-1_23},
  isbn = {978-3-030-44223-1},
  langid = {english},
  keywords = {Discrete logarithm problem,Elliptic curve cryptography,Quantum cryptanalysis,Resource estimates,Shor's algorithm}
}

@article{nie_quantum_2023,
  title = {Quantum {{Circuit Design}} for {{Integer Multiplication Based}} on {{Sch{\"o}nhage}}--{{Strassen Algorithm}}},
  author = {Nie, Junhong and Zhu, Qinlin and Li, Meng and Sun, Xiaoming},
  year = {2023},
  month = dec,
  journal = {IEEE Transactions on Computer-Aided Design of Integrated Circuits and Systems},
  volume = {42},
  number = {12},
  pages = {4791--4802},
  issn = {1937-4151},
  doi = {10.1109/TCAD.2023.3279300},
  urldate = {2024-11-07},
  keywords = {Arithmetic,Convolutional neural networks,Integer multiplication,Integer programming,Logic gates,quantum circuit,Quantum circuit,Qubit,Schonhage-Strassen algorithm,Standards,Time complexity}
}

@misc{miller_hidden-state_2024,
  title = {Hidden-{{State Proofs}} of {{Quantumness}}},
  author = {Miller, Carl A.},
  year = {2024},
  month = oct,
  number = {arXiv:2410.06368},
  eprint = {2410.06368},
  publisher = {arXiv},
  doi = {10.48550/arXiv.2410.06368},
  urldate = {2024-11-26},
  archiveprefix = {arXiv},
  keywords = {Quantum Physics}
}

@article{alnawakhtha_lattice-based_2024,
  title = {Lattice-{{Based Quantum Advantage}} from {{Rotated Measurements}}},
  author = {Alnawakhtha, Yusuf and Mantri, Atul and Miller, Carl A. and Wang, Daochen},
  year = {2024},
  month = jul,
  journal = {Quantum},
  volume = {8},
  pages = {1399},
  publisher = {Verein zur F{\"o}rderung des Open Access Publizierens in den Quantenwissenschaften},
  doi = {10.22331/q-2024-07-04-1399},
  urldate = {2024-11-26},
  langid = {british}
}

@inproceedings{morimae_proofs_2023,
  title = {Proofs of Quantumness from Trapdoor Permutations},
  booktitle = {14th Innovations in Theoretical Computer Science Conference ({{ITCS}} 2023)},
  author = {Morimae, Tomoyuki and Yamakawa, Takashi},
  editor = {Tauman Kalai, Yael},
  year = {2023},
  series = {Leibniz International Proceedings in Informatics (Lipics)},
  volume = {251},
  pages = {87:1--87:14},
  publisher = {Schloss Dagstuhl -- Leibniz-Zentrum f{\"u}r Informatik},
  address = {Dagstuhl, Germany},
  issn = {1868-8969},
  doi = {10.4230/LIPIcs.ITCS.2023.87},
  isbn = {978-3-95977-263-1},
  urn = {urn:nbn:de:0030-drops-175900}
}

@misc{arabadjieva_single-round_2024,
  title = {Single-{{Round Proofs}} of {{Quantumness}} from {{Knowledge Assumptions}}},
  author = {Arabadjieva, Petia and Gheorghiu, Alexandru and Gitton, Victor and Metger, Tony},
  year = {2024},
  month = may,
  number = {arXiv:2405.15736},
  eprint = {2405.15736},
  primaryclass = {quant-ph},
  publisher = {arXiv},
  doi = {10.48550/arXiv.2405.15736},
  urldate = {2024-05-30},
  archiveprefix = {arXiv},
  keywords = {Computer Science - Cryptography and Security,Quantum Physics}
}

@inproceedings{bernstein_post-quantum_2017,
  title = {Post-Quantum {{RSA}}},
  booktitle = {Post-{{Quantum Cryptography}}},
  author = {Bernstein, Daniel J. and Heninger, Nadia and Lou, Paul and Valenta, Luke},
  editor = {Lange, Tanja and Takagi, Tsuyoshi},
  year = {2017},
  pages = {311--329},
  publisher = {Springer International Publishing},
  address = {Cham},
  doi = {10.1007/978-3-319-59879-6_18},
  isbn = {978-3-319-59879-6},
  langid = {english},
  keywords = {ECM,Grover's algorithm,Make RSA Great Again,Post-quantum cryptography,RSA scalability,Shor's algorithm}
}

@inproceedings{bernstein_low-resource_2017,
  title = {A {{Low-Resource Quantum Factoring Algorithm}}},
  booktitle = {Post-{{Quantum Cryptography}}},
  author = {Bernstein, Daniel J. and Biasse, Jean-Fran{\c c}ois and Mosca, Michele},
  editor = {Lange, Tanja and Takagi, Tsuyoshi},
  year = {2017},
  pages = {330--346},
  publisher = {Springer International Publishing},
  address = {Cham},
  doi = {10.1007/978-3-319-59879-6_19},
  isbn = {978-3-319-59879-6},
  langid = {english}
}

@article{mosca_speeding_2020,
  title = {On Speeding up Factoring with Quantum {{SAT}} Solvers},
  author = {Mosca, Michele and Basso, Joao Marcos Vensi and Verschoor, Sebastian R.},
  year = {2020},
  month = sep,
  journal = {Scientific Reports},
  volume = {10},
  number = {1},
  pages = {15022},
  publisher = {Nature Publishing Group},
  issn = {2045-2322},
  doi = {10.1038/s41598-020-71654-y},
  urldate = {2024-12-13},
  copyright = {2020 The Author(s)},
  langid = {english},
  keywords = {Computer science,Information theory and computation,Quantum physics}
}

@misc{chevignard_reducing_2024,
  title = {Reducing the {{Number}} of {{Qubits}} in {{Quantum Factoring}}},
  author = {Chevignard, Cl{\'e}mence and Fouque, Pierre-Alain and Schrottenloher, Andr{\'e}},
  year = {2024},
  number = {2024/222},
  urldate = {2025-03-27},
  keywords = {Discrete Logarithms,Integer factoring,Quantum cryptanalysis,Residue number system,Shor's algorithm},
  annotation = {Publication info: Preprint.}
}

@article{DBLP:journals/iacr/Schanck18,
  author       = {John M. Schanck},
  title        = {Multi-power Post-quantum {RSA}},
  journal      = {{IACR} Cryptol. ePrint Arch.},
  pages        = {325},
  year         = {2018},
  url          = {https://eprint.iacr.org/2018/325},
  timestamp    = {Mon, 11 May 2020 15:59:28 +0200},
  biburl       = {https://dblp.org/rec/journals/iacr/Schanck18.bib},
  bibsource    = {dblp computer science bibliography, https://dblp.org}
}


\end{document}